\theoremstyle{plain}
\newtheorem{thm}{Theorem}[section]
\newtheorem*{thm*}{Theorem}
\newtheorem{conj}[thm]{Conjecture}
\newtheorem{fact}[thm]{Fact}
\newtheorem{lem}[thm]{Lemma}
\newtheorem{Def}[thm]{Definition}
\newtheorem{obs}[thm]{Observation}
\newtheorem{claim}[thm]{Claim}
\newcommand{\E}{\mathbb{E}\xspace}
\newcommand{\D}{\Delta}
\newcommand{\eat}[1]{}
\newcommand{\generalLB}{1.606}
\newcommand{\generalUB}{1.777}
\newcommand{\eps}{\epsilon}
\newcommand{\biUBBeta}{\beta\log \,\frac{\beta}{\beta - 1}}
\newcommand{\generalUBBeta}{\beta^2 - \beta + \beta\log \frac{1}{\beta - 1}}
\newcommand{\T}{\mathcal{S}}
\newcommand{\iteration}{iteration\xspace}
\newcommand{\iterations}{iterations\xspace}
\newcommand{\phase}{phase\xspace}
\newcommand{\phases}{phases\xspace}
\newcommand{\marking}{\textsc{marking}\xspace}
\newcommand{\rootconstant}{12}
\newcommand{\multiplicativeconstant}{24\cdot 10^{\rootconstant}}
\newcommand{\numphases}{\lceil (4/p)\log (1/p) \rceil}
\title{Tight Bounds for Online Edge Coloring}
\author[1]{Ilan Reuven Cohen\thanks{Email address: \url{ilanrcohen@gmail.com}.}}
\author[2]{Binghui Peng\thanks{Email address: \url{pbh15@mails.tsinghua.edu.cn}.}\thanks{Work done in part while the author was visiting Carnegie Mellon University.}}
\author[3]{David Wajc\thanks{Email address: \url{dwajc@cs.cmu.edu}.}\thanks{Work done in part while the author was visiting EPFL.}\thanks{Supported in part by NSF grants CCF-1618280, CCF-1814603, CCF-1527110, NSF CAREER award CCF-1750808 and Sloan Research Fellowship.}}
\affil[1]{Carnegie Mellon University and University of Pittsburgh}
\affil[2]{Tsinghua University}
\affil[3]{Carnegie Mellon University}
\date{\vspace{-1cm}}
\begin{document}

\maketitle
	\pagenumbering{gobble}

\begin{abstract}
	Vizing's celebrated theorem asserts that any graph of maximum degree $\Delta$ admits an edge coloring using at most $\Delta+1$ colors.
	In contrast, Bar-Noy, Naor and Motwani showed over a quarter century that the trivial greedy algorithm, which uses $2\Delta-1$ colors, is optimal among online algorithms. Their lower bound has a caveat, however: it only applies to low-degree graphs, with $\Delta=O(\log n)$, and they conjectured the existence of online algorithms using $\Delta(1+o(1))$ colors for $\Delta=\omega(\log n)$. 
	Progress towards resolving this conjecture was only made under stochastic arrivals (Aggarwal et al., FOCS'03 and Bahmani et al., SODA'10).

	\medskip

	We resolve the above conjecture for \emph{adversarial} vertex arrivals in bipartite graphs, for which we present a $(1+o(1))\Delta$-edge-coloring algorithm for $\Delta=\omega(\log n)$ known a priori. Surprisingly, if $\Delta$ is not known ahead of time, we show that no $\big(\frac{e}{e-1} - \Omega(1) \big) \Delta$-edge-coloring algorithm exists. We then provide an optimal, $\big(\frac{e}{e-1}+o(1)\big)\Delta$-edge-coloring algorithm for unknown $\Delta=\omega(\log n)$. 
	Key to our results, and of possible independent interest, is a novel fractional relaxation for edge coloring, for which we present optimal fractional online algorithms and a near-lossless online rounding scheme, yielding our optimal randomized algorithms.
\end{abstract}

\newpage
\pagenumbering{arabic}		
	\section{Introduction}

Edge coloring is the problem of assigning a color to each edge of a multigraph so that no two edges with a common endpoint have the same color.
This classic problem, even restricted to bipartite graphs, can be used to model scheduling problems arising in sensor networks \cite{gandham2008link}, switch routing \cite{aggarwal2003switch}, radio-hop networks \cite{tassiulas1992stability} and optical networks \cite{rasala2005strictly}, among others.
Edge coloring can trace its origins back to the 19th-century works of 
\citet{tait1880remarks} and \citet{petersen1898theoreme}, who studied this problem in the context of the four color theorem. 
Shannon \cite{shannon1949theorem} later studied edge coloring in the context of color coding wires in electrical units, and proved that any multigraph $G$ of maximum degree $\Delta=\Delta(G)$ admits a $\lfloor \frac{3\Delta}{2}\rfloor$-edge-coloring; i.e., a coloring using at most $\lfloor \frac{3\Delta}{2}\rfloor$ colors. (This is tight.)
Inspired by this result, Vizing \cite{vizing1964estimate} proved that any simple graph can be edge colored using $\Delta+1$ colors. Clearly, $\Delta$ colors are necessary to edge color a graph, and for bipartite graphs 
multiple near-linear-time $\Delta$-edge-coloring algorithms are known~\cite{cole2001edge,alon2003simple,goel2013perfect}. For general graphs, several polytime $(\Delta+1)$-edge-coloring algorithms are known \cite{gabow1985algorithms,misra1992constructive,vizing1964estimate}, and this too is likely optimal, as determining whether a general graph is $\Delta$-edge-colorable is NP-hard \cite{holyer1981np}.

In addition to these optimal polytime algorithms, there exists a simple quasilinear-time $(2\Delta-1)$-edge-coloring greedy algorithm, which colors each edge with the lowest color unused by its adjacent edges. 
The greedy algorithm is implementable in many restricted models of computation, and improving upon its coloring guarantees, or even matching them quickly in such models, has been the subject of intense research. Examples include PRAM \cite{lev1981fast}, NC and RNC \cite{karloff1987efficient,berger1991simulating,motwani1994probabilistic}, dynamic \cite{bhattacharya2018dynamic,duan2019dynamic} and distributed algorithms (e.g.,  \cite{panconesi1997randomized,dubhashi1998near,ghaffari2018deterministic,fischer2017deterministic,chang2018optimal,elkin20152delta}).

For online algorithms, little progress was made towards beating the greedy algorithm. The only positive results are 
under \emph{random-order} edge arrival in ``dense'' bipartite multigraphs.
Specifically, under such stochastic arrivals, \citet{aggarwal2003switch} showed how to obtain a $\Delta(1+o(1))$ edge coloring of $n$-vertex multigraphs if $\Delta=\omega(n^2)$ and \citet{bahmani2012online} showed how to obtain a $1.26\Delta$ edge coloring under the milder assumption that $\Delta=\omega(\log n)$.
The lack of progress for adversarial arrival order is likely explained by the following theorem of Bar-Noy et al.~\cite{bar1992greedy}.

\begin{thm}[\cite{bar1992greedy}, informal]
	No online edge coloring algorithm can $2\Delta-2$ edge-color a graph.
\end{thm}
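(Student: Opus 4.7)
The plan is to construct, for each $n$ and $\Delta=O(\log n)$, a family of hard bipartite instances showing that any randomized online algorithm must use at least $2\Delta-1$ colors in expectation. By Yao's minimax principle, it suffices to exhibit a distribution over inputs on which every deterministic online algorithm fails to $(2\Delta-2)$-color with positive probability. The distribution I would use is a random bipartite graph in which edges are revealed one at a time in an order chosen adversarially based on the algorithm's coloring decisions.

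First, fix an arbitrary deterministic online algorithm $A$ with palette size $k=2\Delta-2$. For each vertex $v$, track the set $S(v)\subseteq[k]$ of colors currently used on its incident edges. The basic invariant to exploit is: when edge $(u,v)$ arrives, $A$ must assign a color in $[k]\setminus(S(u)\cup S(v))$, after which $|S(u)|$ and $|S(v)|$ each increase by one. If the adversary could always present an edge $(u,v)$ with $S(u)\cap S(v)=\emptyset$, then after $\Delta$ edges at a common vertex the union would reach size $2\Delta-1>k$, giving the required contradiction.

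Second, I would realize such an adversarial schedule via a probabilistic construction. Start with a large random bipartite graph (e.g.\ a random $\Delta$-regular bipartite graph on $\Theta(n)$ vertices) and at each step select an uncovered edge whose endpoints have disjoint current palettes. To show that such an edge exists at every step, I would count pairs $(u,v)$ with $S(u)\cap S(v)=\emptyset$: since the palette has only $k<2\Delta$ colors and each vertex currently holds at most $\Delta$ of them, a constant fraction of vertex pairs have disjoint palettes throughout the first $\Delta$ rounds, and in the random underlying graph at least one such pair spans an unrevealed edge with high probability provided $\Delta=O(\log n)$.

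The main obstacle is certifying that the ``disjoint-palette'' property persists throughout the construction, since as edges are processed each $|S(v)|$ grows and overlaps between palettes become more likely. I would control this by an inductive potential argument tracking $\sum_{\{u,v\}}|S(u)\cap S(v)|$ and showing it grows slowly relative to the number of edges added. This is precisely where the low-degree hypothesis enters: for $\Delta=\omega(\log n)$, concentration allows the algorithm to plan for reuse of colors so that disjoint-palette edges become rare, which is exactly the regime the paper's positive results exploit to achieve a $(1+o(1))\Delta$-edge-coloring.
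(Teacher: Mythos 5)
You should first note that the paper does not prove this statement at all---it is quoted, informally, from Bar-Noy, Motwani and Naor---so the comparison is against their known argument, which is a pigeonhole/star construction: present roughly $\Delta\binom{2\Delta-2}{\Delta-1}$ stars with $\Delta-1$ leaves each; since only $2\Delta-2$ colors are available, some $\Delta$ star centers end up with the \emph{identical} set of $\Delta-1$ colors, and one final vertex adjacent to those $\Delta$ centers forces $\Delta$ mutually distinct colors all avoiding that shared set, i.e.\ $2\Delta-1$ colors in total; this is exactly where the requirement $n=\exp(\Omega(\Delta))$, i.e.\ $\Delta=O(\log n)$, enters. Your plan replaces this with the claim that the adversary can keep finding an unrevealed edge $(u,v)$ with $S(u)\cap S(v)=\emptyset$, justified by the assertion that ``a constant fraction of vertex pairs have disjoint palettes throughout.'' That assertion is unsubstantiated and false against an arbitrary algorithm: the algorithm controls the colors and can keep all palettes inside a pairwise-intersecting family of subsets of $[2\Delta-2]$ (for instance by steering every vertex's early edges toward one common color whenever feasible), in which case no disjoint-palette pair ever exists. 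Moreover, disjointness alone is not the forcing configuration: to exhaust a palette of size $2\Delta-2$, the final edge must join two vertices that \emph{each} already carry $\Delta-1$ distinct colors, with the two sets disjoint. Your potential argument says nothing about driving both endpoints to degree $\Delta-1$ while preserving disjointness, and the arithmetic ``the union would reach size $2\Delta-1$'' does not parse, since each endpoint's palette has size at most $\Delta-1$ when its last edge arrives; the most you can get is a union of size exactly $2\Delta-2$, and you never argue you can reach even that.

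The randomized reduction is also off. You invoke Yao's minimax principle, which requires fixing one input distribution independently of the algorithm, but your construction selects each next edge ``based on the algorithm's coloring decisions''; that is an adaptive, per-algorithm construction and yields at best a lower bound against deterministic algorithms (or against an adaptive adversary), not the oblivious randomized bound the theorem asserts. Finally, nothing in your argument actually uses $\Delta=O(\log n)$, i.e.\ $n$ exponential in $\Delta$: your last paragraph gestures at it, but the only place such a condition genuinely enters is a counting step such as the pigeonhole over the $\binom{2\Delta-2}{\Delta-1}$ possible color sets in the Bar-Noy et al.\ proof, and your proposal performs no step of that kind. As written, the argument does not go through; repairing it essentially leads you back to the identical-palettes star construction rather than the disjoint-palettes scheme you propose.
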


However, the lower bound of Bar-Noy et al.~requires a number of nodes $n$ \emph{exponential} in $\Delta$: that is, it only holds for some $\Delta=O(\log n)$.
Therefore, this lower bound can be thought of as an \emph{additive} lower bound of $\Delta + \Omega(\log n)$, rather than a multiplicative lower bound of $\approx 2\Delta$.
Put otherwise, it does not preclude a better-than-$(2\Delta-1)$ edge-coloring algorithm for $\Delta = \Omega(\log n)$ large enough. 
Indeed, Bar-Noy et al.~went so far as to conjecture that $\Delta(1+o(1))$-edge-colorings are computable online for large enough $\Delta$.
\begin{conj}[\cite{bar1992greedy}]\label{conj:bnm}
	There exists an online algorithm which $\Delta(1+o(1))$ edge-colors graphs with $\Delta = \omega(\log n)$.
\end{conj}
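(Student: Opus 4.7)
The plan is to extend the paper's fractional-relaxation-plus-online-rounding paradigm from bipartite graphs under vertex arrivals to \emph{general} graphs under adversarial \emph{edge} arrivals, exploiting the $\Delta = \omega(\log n)$ regime to drive all failure probabilities below $o(1/n^2)$ via concentration. The two ingredients to construct are an online fractional edge coloring using a palette of size $(1+\epsilon_n)\Delta$ with $\epsilon_n = o(1)$, and a near-lossless online rounding scheme that preserves this palette bound with high probability.

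First I would formulate the online fractional problem: upon the arrival of each edge $e$, assign a distribution $\{x_{e,c}\}_{c \in [N]}$ over a palette of $N = (1+\epsilon_n)\Delta$ colors, subject to $\sum_c x_{e,c} = 1$ and the fractional load constraint $\sum_{e' \ni v} x_{e',c} \leq 1$ for every vertex--color pair $(v,c)$. A natural candidate is a water-filling assignment that spreads $x_e$ over colors of currently minimum joint load at the two endpoints of $e$. Feasibility of such an online fractional algorithm with $(1+o(1))\Delta$ colors is plausible because the offline fractional chromatic index of a simple graph is at most $\Delta + 1$ by Vizing, so only the \emph{online} obstruction needs to be overcome by the extra $\epsilon_n \Delta$ slack.

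Second I would round the fractional assignment online by sampling each edge's color independently from $x_e$. When $x_{e,c} \leq O(1/\Delta)$ for every $(e,c)$---a ``spread'' property that the fractional algorithm should be designed to maintain---Chernoff applied to the load $\sum_{e' \ni v} \mathbf{1}[e' \text{ colored } c]$ at each vertex--color pair gives collision probability $n^{-\omega(1)}$ in the $\Delta = \omega(\log n)$ regime, so a union bound over the $O(n\Delta)$ vertex--color pairs produces a proper coloring w.h.p. The few edges whose sampled color still collides can be deferred to subsequent phases operating on a geometrically shrinking fresh palette, so that $O(\log n)$ phases suffice to color all edges using $(1+o(1))\Delta$ colors in total.

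The hard part will be the fractional step for general graphs under edge arrivals. Unlike the bipartite vertex-arrival setting---where all edges at an arriving vertex are revealed simultaneously and a locally balanced distribution can be produced in one shot---adversarial edge arrivals on general graphs let the adversary induce odd-cycle obstructions and accumulated load imbalances that push even the offline fractional chromatic index above $\Delta$. Overcoming this likely requires (i) a more refined fractional algorithm that reserves a small $o(1)\Delta$ ``slack'' budget for late corrections and keeps individual marginals $x_{e,c}$ spread out, and (ii) a rounding scheme enforcing negative correlation or a low-sensitivity martingale structure between the coloring choices of edges sharing a vertex, so that Chernoff-type concentration survives the intricate dependencies induced on arbitrary (non-bipartite) graphs.
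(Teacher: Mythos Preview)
This statement is a \emph{conjecture} from \cite{bar1992greedy}, not a theorem the paper proves in full generality. The paper resolves it only for the special case of bipartite graphs under one-sided vertex arrivals with known $\Delta$ (\Cref{thm:known-delta} and \Cref{known-delta-from-rounding}); the general-graph, edge-arrival version you are attempting remains open. So there is no ``paper's own proof'' of the full statement to compare against.

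That said, your proposal has a concrete error in the rounding step that would block even the bipartite case. You claim that with $x_{e,c} = O(1/\Delta)$ and independent sampling, Chernoff gives collision probability $n^{-\omega(1)}$ at each vertex--color pair. This is false: for a vertex $v$ of degree $\Delta$ and a single color $c$, the load $\sum_{e\ni v}\mathbf{1}[e\text{ gets }c]$ is $\mathrm{Binomial}(\Delta,1/N)$ with $N=(1+\epsilon)\Delta$, which has mean $\approx 1$ and is $\geq 2$ with \emph{constant} probability. Consequently a constant fraction (roughly $1/e$) of $v$'s edges collide in every phase. Deferring these to fresh palettes then uses $\sum_i (1+\epsilon)\Delta_i \approx (1+\epsilon)\Delta\cdot \tfrac{e}{e-1}$ colors, not $(1+o(1))\Delta$. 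This is exactly the obstruction the paper discusses in \Cref{sec:bad-example-marking} for the analogous ``repeatedly round the trivial fractional matching'' approach.

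Two further remarks. First, for \emph{known} $\Delta$ the fractional step is trivial---set $x_{e,c}=1/\Delta$ for all $c\in[\Delta]$---so your discussion of water-filling and odd-cycle obstructions is a red herring; all the difficulty is in the rounding. Second, the paper's rounding avoids the constant-fraction-collision trap not by independent sampling but by (i) using the \marking primitive of \cite{cohen2018randomized}, which rounds a fractional \emph{matching} to an integral matching online with per-edge guarantees, and (ii) sampling only a $p=o(1)$ fraction of the fractional matchings per phase so that re-matches are an $O(p)$ fraction rather than a constant fraction. Both ingredients are currently specific to bipartite vertex arrivals, and extending either to edge arrivals on general graphs is precisely the open part of the conjecture.
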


\vspace{-0.49cm}
\paragraph{Our focus.} In this paper we study edge coloring under the adversarial online vertex-arrival model of \citet{karp1990optimal}, where vertices on one side of a bipartite graph arrive over time, with their edges to previously-arrived neighbors. In this model, an online algorithm colors each edge $e$ upon arrival, immediately and irrevocably.
Recall from above that edge coloring in bipartite graphs 
has multiple applications, including in online settings.
Indeed, such an application of edge coloring bipartite graphs to switch routing (with input switches on one side and output switches on the other) was precisely the motivation of \citet{aggarwal2003switch} to study online edge coloring.
The online edge coloring lower bound of \cite{bar1992greedy}~holds even for bipartite vertex arrivals. We show that for such arrivals a large enough maximum degree indeed allows to circumvent Bar-Noy et al.'s lower bound, and prove their 
conjecture. In particular, we present optimal algorithms (up to $o(1)$ terms) both for the known-$\Delta$ scenario, as well as for the stricter online problem where $\Delta(=OPT)$ is unknown a priori.

\subsection{Our Contributions}

We provide the following optimal results for online edge coloring under adversarial vertex arrivals. For conciseness, we state our results in terms of competitiveness, calling an $\alpha\cdot\Delta$-edge-coloring algorithm \emph{$\alpha$-competitive}, as the optimal edge coloring requires at least $\Delta$ colors.

Our first result is an optimal algorithm for known large $\Delta$.

\begin{thm}\label{thm:known-delta}
	There exists a $(1+o(1))$-competitive randomized edge coloring algorithm for bipartite graphs of known maximum degree $\Delta = \omega(\log n)$. A competitive ratio of $1+o(1)$ is optimal -- no randomized algorithm is  $1+o(1/\sqrt\Delta)$ competitive for any $\Delta$.
\end{thm}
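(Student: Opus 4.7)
The plan is to decouple the problem into two stages, following the ``fractional then round'' blueprint outlined in the abstract. In the first stage, I would design an online \emph{fractional} edge coloring that uses only $(1+o(1))\Delta$ colors. Because $\Delta$ is known, a simple scheme suffices: fix a palette of $C=(1+\eps)\Delta$ colors with $\eps=o(1)$ chosen below, and when an online vertex $v$ arrives with its (at most $\Delta$) edges, place each edge's fractional mass uniformly over $[C]$. For any vertex $w$ and color $c$ we then have $\sum_{e\ni w} x_{e,c}\le \Delta/C = 1/(1+\eps)<1$, which is a valid fractional coloring; the symmetry of this assignment will be the key to rounding.

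In the second stage, I would apply an online randomized rounding to convert the fractional coloring into an integral one. The main obstacle here is that naive independent sampling is not concentrated enough to avoid conflicts when per-color loads are close to $1$: a direct Chernoff analysis would only afford an additive slack of $\Theta(\sqrt{\Delta\log n})$ colors. I would instead follow the paper's announced ``near-lossless'' rounding: reserve a pool of $o(\Delta)$ additional colors and, on each online arrival, detect conflicts between newly sampled colors and those already assigned at the offline endpoints, resolving them by rerouting offending edges to reserve colors via a local re-sampling step. Correctness is certified via Chernoff/Azuma concentration applied separately to primary samples and to reserve usage, combined with a union bound over the $O(nC)$ vertex-color pairs; the hypothesis $\Delta=\omega(\log n)$ is precisely what lets these tail bounds overpower the union bound while keeping $\eps=o(1)$.

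\textbf{Lower bound.} I would establish the hardness direction via Yao's principle, exhibiting a distribution over bipartite vertex-arrival instances of maximum degree $\Delta$ on which every deterministic online algorithm uses at least $\Delta+\Omega(\sqrt{\Delta})$ colors in expectation. A natural candidate is a lightly randomized $K_{\Delta,\Delta}$: fix $\Delta$ offline vertices, and let the $i$-th online vertex $v_i$ arrive with edges to a uniformly random $(\Delta{-}1)$-subset of the offline side. Each arriving $v_i$ forces the algorithm to commit to a near-permutation of the palette without knowing which single offline vertex is missing in each future arrival, so it must extend a partial near-Latin rectangle without foresight. The hard part will be quantifying the loss at rate exactly $\sqrt{\Delta}$ rather than merely $\mathrm{polylog}\,\Delta$: I would track the per-column color usage via a martingale on the sequence of online commitments and invoke a Berry--Esseen-type anti-concentration bound to argue that some offline vertex accumulates $\Omega(\sqrt{\Delta})$ color assignments beyond the $\Delta$ available, forcing the use of at least $\Delta+\Omega(\sqrt{\Delta})$ colors overall. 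This sharpening of the classical ``strictly more than $\Delta$'' obstruction to the quantitative $\sqrt{\Delta}$ rate is the main technical subtlety of the lower-bound half.
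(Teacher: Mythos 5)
Your upper-bound plan has a genuine gap at its core: with a palette of $C=(1+\epsilon)\Delta$ colors and each edge's mass spread uniformly, any rounding that samples colors (nearly) independently per edge produces $\Theta(\Delta)$ conflicts at a high-degree offline vertex, not $o(\Delta)$. Concretely, for a star with $\Delta$ edges, the number of edges whose sampled color repeats an earlier color at the center is a balls-in-bins quantity of order $\Delta/e$; this is exactly the obstruction the paper records in its appendix on bad examples (rounding a single trivial fractional coloring, or iterating \marking{} with value $1/\Delta$, leaves a constant fraction of a star uncolored, and pushing on costs up to $\Delta\log\Delta$ colors). Hence a ``reserve pool of $o(\Delta)$ colors'' cannot absorb the rerouted edges, and Chernoff/Azuma plus a union bound cannot rescue the scheme -- the expected conflict volume itself is linear in $\Delta$. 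If instead your ``local re-sampling'' means resampling among colors unused at both endpoints, you have silently changed the algorithm to a randomized greedy whose $(1+o(1))$ guarantee is precisely what needs proof. The idea your proposal is missing is \emph{adaptivity across phases}: the paper repeatedly recomputes (or rescales) the fractional solution on the \emph{residual uncolored} subgraph -- assigning value roughly $1/d_i$ where $d_i$ tracks the shrinking uncolored maximum degree -- and rounds each fractional matching with an online dependent rounding scheme (\marking{}, with per-edge guarantees $\Pr[e\in\mathcal{M}]\approx x_e$) so that every color class is a genuine matching that hits each high-degree uncolored vertex with probability $1-o(1)$; phases of length $\approx\sqrt{\Delta\log n}$ (or sampling colors with probability $p$) keep the degree estimates concentrated, and only the final $o(\Delta)$-degree leftover is colored greedily. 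Without the rescaling-on-the-residual-graph step, the loss is a constant fraction of $\Delta$, not $o(\Delta)$.

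The lower-bound half is also not a proof as written: the martingale/Berry--Esseen anti-concentration step that is supposed to force some offline vertex to exceed the palette by $\Omega(\sqrt{\Delta})$ is exactly the technical content you defer, and it is not clear your random near-$K_{\Delta,\Delta}$ instance even has this property. The paper avoids this work entirely with a two-line reduction: by the known lower bound for online matching in $\Delta$-regular bipartite graphs under one-sided arrivals \cite{cohen2018randomized}, no online algorithm matches $(1-o(1/\sqrt{\Delta}))n$ vertices in expectation; but a $(1+\epsilon)$-competitive edge-coloring algorithm yields such a matching by outputting a uniformly random one of its $(1+\epsilon)\Delta$ color classes, which has expected size $\Delta n/((1+\epsilon)\Delta)=(1-O(\epsilon))n$, forcing $\epsilon=\Omega(1/\sqrt{\Delta})$. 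Either invoke that reduction or supply the anti-concentration argument in full; as it stands, both halves of your proposal rely on the steps that carry all the difficulty.
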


Like all prior non-trivial online algorithms, the above algorithm assumes a priori knowledge of a critical parameter of the input, namely $\Delta$, which is the optimum number of colors needed to color the bipartite graph. 
However, in many online scenarios, such assumptions are unreasonable.
We show that removing this assumption results in a strictly harder problem, though here too greedy is suboptimal. Our main contribution is an optimal online algorithm for unknown large $\Delta$.

\begin{thm}\label{thm:unknown-delta}
	There exists an $(\frac{e}{e-1}+o(1))$-competitive randomized edge coloring algorithm for bipartite graphs of unknown maximum degree $\Delta = \omega(\log n)$. 
	This is optimal (up to $o(1)$ terms) -- no algorithm is better than $\frac{e}{e-1}$ competitive for unknown $\Delta$.
\end{thm}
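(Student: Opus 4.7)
The theorem couples a positive algorithmic result with a matching randomized lower bound; I plan both parts separately.

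For the upper bound, the paper's announced architecture---an optimal fractional online algorithm followed by a near-lossless rounding scheme---reduces the task to designing an $\frac{e}{e-1}$-competitive \emph{fractional} online edge coloring algorithm for unknown $\Delta$. My plan is water-filling with exponentially-weighted colors: colors are indexed $1,2,3,\ldots$, and upon arrival of an edge $uv$ I distribute its unit of mass so that color $c$ receives an amount proportional to the residual capacity $\min\{1-L_u(c),\,1-L_v(c)\}$ scaled by an exponential factor $\exp(-c/\tilde\Delta)$, where $\tilde\Delta$ is the running maximum degree seen so far. Since $\tilde\Delta$ only grows, the schedule is well defined; equivalently one may maintain $O(\log n)$ parallel ``epochs'' with geometrically growing $\Delta$-guesses and reroute mass between epochs when a guess is exceeded. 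The analysis is primal-dual: construct dual variables on vertex-color pairs whose total lower-bounds $\Delta$, and show the largest used color is at most $\frac{e}{e-1}$ times this dual, with the constant arising from integrating the exponential weights against the density of vertex capacities---mirroring the classic Karp--Vazirani--Vazirani analysis of online bipartite matching. Invoking the rounding scheme then converts the fractional coloring into the integral randomized one, with $o(1)$ overhead whenever $\Delta = \omega(\log n)$.

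For the lower bound, I would invoke Yao's principle on the following hard distribution. Pick $k$ uniformly in $\{1,\ldots,K\}$ for some large $K=K(n)$; the instance is a bipartite graph arriving in $K$ phases, where phase $i \le k$ bumps the current maximum degree by a $(1+\epsilon)$ factor via a fresh block of online vertices attached to a carefully chosen subset of offline vertices, and the adversary stops at the end of phase $k$ so that the final $\Delta$ equals $(1+\epsilon)^k$. Because any deterministic algorithm does not know $k$, its color-commitment policy at phase $i$ must hedge: wasting high-numbered colors is costly if the adversary stops immediately, but using only low-numbered colors leaves none for later phases. A symbolic KVV-style accounting---tracking, for each color, the probability it remains usable against every sufficiently large final $\Delta$---shows that the expected number of colors used is at least $\frac{e}{e-1}(1-o(1))\Delta$, as required.

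The principal technical obstacle is proving that the exponentially-weighted water-filling achieves the sharp $\frac{e}{e-1}$ ratio against an adversary controlling both the graph and the eventual $\Delta$. This hinges on introducing a potential $\Phi=\sum_{v,c}\phi(L_v(c))$ of the canonical shape $\phi(x) = 1-e^{x-1}$, so that each edge arrival raises $\Phi$ by at least its fractional contribution divided by $\frac{e}{e-1}$; the primal-dual bound then follows by a standard Lagrangian argument. A secondary concern is calibrating the rounding slack and the epoch-switching overhead to sum to $o(1)$ under $\Delta = \omega(\log n)$; the concentration requirement is precisely what forces this large-$\Delta$ assumption.
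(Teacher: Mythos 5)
Your high-level architecture (optimal fractional algorithm composed with the near-lossless rounding scheme for the upper bound; a phased hard instance with an unknown stopping time for the lower bound) matches the paper's, but both halves of your plan have genuine gaps where the actual technical content should be. For the upper bound, the fractional algorithm you propose distributes mass \emph{symmetrically}, in proportion to $\min\{1-L_u(c),\,1-L_v(c)\}$ times an exponential weight; the paper's central insight is that symmetric water-filling-type rules fail badly (its Appendix shows plain water filling is $2$-competitive even under one-sided bipartite arrivals, because the two endpoints can be saturated on complementary halves of the color palette), and its Algorithm~1 instead makes an \emph{asymmetric} greedy choice driven only by the offline endpoint's loads, together with a cap of $\beta/\Delta$ (current maximum degree) on each $x_{e,c}$ to control the online endpoint. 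The $\frac{e}{e-1}$ guarantee is then extracted by a combinatorial analysis of ``critical colors'' and load volumes, not by a KVV-style potential; your claim that a potential $\phi(x)=1-e^{x-1}$ makes ``each edge arrival raise $\Phi$ by at least its contribution divided by $\frac{e}{e-1}$'' is asserted by analogy with online matching, where the objective (total matched value) is structurally different from the objective here (maximum per-color load, i.e., number of colors used, against a $\Delta$ that the adversary keeps growing). You would need an actual proof that your exponential-weight rule avoids the unbalanced-load trap, and also that it is $\epsilon$-bounded, since the rounding theorem you invoke as a black box only applies to $\epsilon$-bounded fractional algorithms (the paper's algorithm has this by construction via the $\beta/\Delta'$ cap).

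For the lower bound, ``a symbolic KVV-style accounting \ldots shows that the expected number of colors used is at least $\frac{e}{e-1}(1-o(1))\Delta$'' is the conclusion, not an argument. The paper's proof is an explicit LP-duality computation: on the instance with $m!$ offline vertices where phase $k$ brings $m!/k$ online vertices of degree $k$ (raising every offline degree by exactly one), it writes down constraints that any $\alpha$-competitive algorithm's \emph{average} color assignments must satisfy after every prefix of phases, and then exhibits a feasible dual solution (with threshold $c(m)=\lfloor m/e\rfloor$ and harmonic-number bookkeeping) whose value tends to $\frac{e}{e-1}$. Whether your multiplicative $(1+\epsilon)$-growth phases under Yao's principle produce exactly this constant is precisely what must be computed and is left unverified; without that calculation (or an equivalent dual construction), the matching constant in the lower bound is unproven.
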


\noindent\textbf{Remark 1.} For simplicity we stated our positive results in theorems \ref{thm:known-delta} and \ref{thm:unknown-delta} for $\Delta=\omega(\log n)$. More generally, our algorithms' competitive ratios are of the form $\alpha+O(\sqrt[c]{\log n/\Delta}))$ for some constant $c\geq 1$ and $\alpha=1$ or $\alpha=\frac{e}{e-1}$, respectively (see \Cref{sec:online-rounding}). Thus, we obtain better-than-$2$ competitive ratios already for sufficiently large $\Delta=O(\log n)$.

\noindent\textbf{Remark 2.} We stated all our positive results for simple graphs, though they hold more generally for any multigraph with maximum edge multiplicity $o(\Delta)$. (A necessary condition -- see \Cref{sec:multigraphs}).

Our upper and lower bounds rely on a new fractional relaxation for edge coloring, of possible independent interest. In particular, we present matching upper and lower bounds for this relaxation, and present a nearly-lossless online rounding of solutions to this relaxation. Using this relaxation, we show a complementary result: a separation between online edge coloring on general and bipartite graphs, which we prove by showing a higher lower bound for the former (as well as a better-than-greedy fractional algorithm for the latter).

\begin{thm}\label{thm:general-graphs}
	No fractional online edge coloring algorithm is better than $1.606  (> \frac{e}{e-1})$ competitive for general graphs of unknown maximum degree $\Delta$. On the other hand, there exists a fractional edge coloring algorithm which is $1.777$ competitive for general graphs of unknown maximum degree $\Delta$.
\end{thm}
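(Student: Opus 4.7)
The plan is to prove the two halves of \Cref{thm:general-graphs} separately via the fractional edge-coloring relaxation introduced earlier. Both halves mirror the bipartite $e/(e-1)$ results, but must account for odd-cycle obstructions that strictly increase both the offline fractional chromatic index ($\chi'_f(K_3)=3$ versus $\Delta(K_3)=2$) and the online overhead.

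For the upper bound, I would analyze a parameterized online fractional ``water-filling'' algorithm. When an edge $e=uv$ arrives, distribute a unit of fractional color mass over the palette so that the infinitesimal density on color $c$ is proportional to $1/(\beta - x_u^c - x_v^c)$, where $x_w^c$ is the current color-$c$ load at vertex $w$ and $\beta>1$ is a parameter. Solving the associated ODE in closed form and summing the per-color loads, a potential argument shows the total number of colors used is at most $\generalUBBeta \cdot \Delta$. Minimizing $\generalUBBeta = \beta^2 - \beta + \beta\log\tfrac{1}{\beta-1}$ over $\beta>1$ yields the competitive ratio $\generalUB$ at $\beta\approx 1.585$. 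Unknown $\Delta$ is handled by a standard palette-doubling wrapper that costs only an $o(1)$ factor.

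For the lower bound, I would combine two sources of online hardness: the unknown-stopping-time pressure that gives $e/(e-1)\approx 1.582$ in bipartite, and an odd-cycle ``Shannon excess'' available only in general graphs. The adversarial construction proceeds in phases: each phase appends a fresh odd-cycle-like gadget joined to a growing nearly-regular backbone, so that at each stopping time the instance so far already exhibits both a constant-fraction overhead from odd components and a nontrivial unknown-$\Delta$ gap. Formulating the algorithm/adversary interaction as a minimax LP (algorithm choosing fractional color densities, adversary choosing a stopping distribution over phases) and solving the dual analytically, the optimum shifts from $e/(e-1)$ up to $\generalLB$ precisely because each gadget forces a Shannon-style extra color mass per edge, which the adversary's stopping rule is tuned to amortize against the accumulating palette.

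The hardest step is matching the exact constant $\generalLB$ in the lower bound. The upper-bound analysis, while intricate, is a direct generalization of the bipartite argument, extended to capture the asymmetric color-mass contribution of odd substructures; this is what produces the extra $\beta^2-\beta$ term over the bipartite $\biUBBeta = \beta\log\tfrac{\beta}{\beta-1}$. By contrast, the lower bound requires carefully calibrating the phase schedule and the adversary's stopping distribution so that the LP-dual certificate is feasible exactly at ratio $\generalLB$ and not merely at $e/(e-1)$, which in turn appears to require a clean structural lemma showing that a fractional online algorithm cannot efficiently ``share'' color mass between the bipartite backbone and the odd-cycle gadgets.
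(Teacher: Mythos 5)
There are genuine gaps on both halves. For the upper bound, your algorithm is a \emph{symmetric} water-filling rule whose infinitesimal density on color $c$ depends on $x_u^c + x_v^c$, i.e.\ on both endpoints' loads, and you simply assert that ``a potential argument'' gives $\generalUBBeta\cdot\Delta$. The paper's algorithm (\Cref{alg:cap_water_filling}) is different in the two respects that its analysis actually needs: it is \emph{asymmetric} (the greedy choice is driven only by the loads of the previously-arrived endpoint $u$) and it \emph{caps} each edge-color value at $\beta/\Delta$ for the current maximum degree $\Delta$; the load of the newly-arrived vertex is then bounded trivially by the cap, and the load of the old vertex is bounded via the critical-color machinery (\Cref{lem:bound_smalltau}). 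The paper also shows (\Cref{sec:bad-example-wf}) that natural symmetric water-filling variants are exactly $2$-competitive, so your rule cannot be waved through without an argument that your specific $1/(\beta-x_u^c-x_v^c)$ density escapes that obstruction. Two further claims are wrong as stated: the extra $\beta^2-\beta$ term over the bipartite bound $\biUBBeta$ does not come from ``odd substructures'' but from the fact that in general vertex arrivals the older endpoint $u$ may itself have arrived with positive degree $\delta^{A_u}>0$; and a ``palette-doubling wrapper'' for unknown $\Delta$ does not cost $o(1)$ --- restarting the palette as the degree doubles loses a constant factor, whereas the paper's algorithm needs no wrapper because the cap $\beta/\Delta$ is defined with respect to the \emph{current} maximum degree and the analysis tracks the evolving $\delta^t$ directly.

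For the lower bound, the conceptual engine you propose --- odd-cycle gadgets contributing a ``Shannon excess'' --- cannot produce the separation. Odd cycles (and more generally the gap between the paper's relaxation and the classical one on simple graphs) cost only an additive $O(1)$, which vanishes multiplicatively as $\Delta\to\infty$; Shannon's $3\Delta/2$ phenomenon concerns multigraphs with large edge multiplicities, which are excluded here. The paper's construction (\Cref{thm:lower-general}) instead exploits the one feature general graphs genuinely add over one-sided bipartite arrivals: a previously-arrived \emph{online} vertex can receive further edges later. Concretely, it reuses the bipartite phase structure but gives the adversary, at the end of each phase $t$, the option to switch to a ``new'' state in which all subsequent arrivals attach to the phase-$t$ vertices; the resulting constraints \eqref{general-volume}--\eqref{general-capacity-new4} form an LP whose value is computed \emph{numerically} (for $m=50$) to be $\generalLB$. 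So your plan to ``solve the dual analytically'' and certify exactly $\generalLB$ asks for something the paper itself does not have and that you do not supply: you give neither the gadget-to-constraint translation nor a feasible dual solution, and the structural lemma you say is needed (that color mass cannot be shared between backbone and gadgets) is left entirely unproved.
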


To conclude, relying on our new relaxation, we present the first online algorithms beating greedy under any adversarial arrivals. In particular, we 
prove the conjecture of Bar-Noy et al.~and 
provide tight bounds for the well-studied model of Karp et al., both under known and unknown $\Delta$.

\subsection{Techniques}\label{sec:tec}   

\paragraph{Novel Relaxation.} 
The classic fractional relaxation for edge coloring asks to minimize $\sum_M x_M$, subject to $\sum_{M\ni e} x_e = 1$ for 
every $e\in E$ and $x_M\geq 0$ for every matching $M$. That is, this relaxation fractionally uses integral matchings to cover each edge.
For the online problem, this standard relaxation is not particularly useful, as the set of matchings in the input 
graph is unknown a priori (since the edge set is unknown). Our first insight is a novel fractional 
relaxation which allows for more ``myopic'' assignments upon vertex arrivals -- its variables $x_{e,c}$ are the extent to which an edge $e$ is colored $c$. 
Specifically, our relaxation \emph{integrally} uses \emph{fractional} matchings to cover each edge; i.e., $\sum_{c} x_{e,c}=1$. 
The goal is to minimize the number of non-zero fractional matchings used.
This simple change to the 
relaxation proves particularly useful in the online setting, and underlies both our upper and lower bounds. We believe this 
relaxation may find applications to other incomplete-information models, such as dynamic, distributed, and local computation algorithms.

\paragraph{Fractional Algorithms.} 
Our relaxation admits a trivial $1$-competitive solution: set $x_{e,c} = 1/\Delta$, for each edge and $c\in [\Delta]$. If $\Delta$ is known a priori, this solution can even be computed by an \emph{online} fractional algorithm (which must fix the values $x_{ec}$ for each edge $e$ immediately on arrival).
By our lower bounds, for unknown $\Delta$, other algorithms are needed. A natural candidate is the 
greedy ``water-filling'' algorithm, which continuously increases an arriving edge's assignment for all colors minimizing the maximum load for either endpoint. 
However, this approach may yield an extremely unbalanced allocation.
In particular, it can make a vertex $v$ have load of one in half of the colors used so far and another vertex $u$ have load of one in the other half. Adding an edge $(u,v)$ would then force the algorithm to open a new color -- resulting in the trivial $2\Delta-1$ bound. (See \Cref{sec:bad-example-wf}.) 
Guided by our lower bounds for fractional algorithms, we derive simple but crucial changes to the water-filling algorithm. First, we use an \emph{asymmetric} approach, where we pick colors to use based only on the load on the offline vertex. Secondly, in order to bound the load on the online vertex, we cap the value of each edge-color pair. These changes yield  bounded loads for the online vertex and a more balanced allocation, resulting in an \emph{optimal} online fractional edge coloring.

\paragraph{Online Rounding.} 
Given an $\alpha$-competitive fractional online algorithm, our approach would be to use its $\alpha\Delta$ fractional matchings 
to obtain $\alpha\Delta$ \emph{integral} matchings, or colors, which leave the remaining uncolored subgraph having low 
maximum-degree ($o(\Delta)$). Then using greedy on the remaining uncolored edges requires a further $o(\Delta)$ 
colors, or $(\alpha+o(1))\Delta$ colors overall. The question is how to compute $\alpha\Delta$ colors based on the online 
fractional edge coloring. One natural way to do so is to repeatedly round these fractional matchings online, using a 
near-lossless online rounding scheme for fractional matchings (\cite{cohen2018randomized}). Unfortunately, while maximum-degree vertices stand to be 
matched $\approx \Delta$ times during such a rounding stage, a constant fraction of these matches would be along 
previously-matched (colored) edges. To see this, note that each edge has a constant probability of being matched at least 
twice this way (since each edge has a constant probability of being colored more than once).

We therefore employ a more elaborate approach, repeatedly rounding \emph{subsets} of \emph{multiple} fractional edge 
colorings' fractional matchings. Our guiding intuition is the following simple observation, that the load assigned to the 
edges of a maximum-degree vertex by an average fractional matching among the $\alpha\Delta$ matchings is precisely 
$1/\alpha$. Consequently, rounding a randomly-chosen fractional matching will result in this vertex being matched 
with probability $\approx 1/\alpha$. If most such matches are along previously-unmatched(uncolored) edges, then this 
vertex's degree in the uncolored graph will decrease at the appropriate rate. However, as exhibited by the previous 
approach, the probability of matching along an uncolored edge decreases when we use round many fractional matchings. 
Therefore, we only round a \emph{subset} of fractional matchings, small enough to not decrease the probability of $v$ being matched along an uncolored edge (due to re-matches), yet large enough to apply tail bound and argue that all high-degree 
vertices have their uncolored degree decrease at a rate of $\approx 1/\alpha$ per color used. 

How to sample such a number of fractional matchings, \emph{not too few and not too many, but just right}, for unknown $\Delta$ is not immediate, however, as we do not even know how many fractional matchings we will use (as $\Delta$ is unknown and keeps increasing). To address this, we rely on the fact that $\Delta = \omega(\log n)$, allowing us to argue that sampling each fractional matching (including matchings which are currently trivial) a priori with appropriate probability $p$ gives the following. We choose $p=o(1)$ (guaranteeing few re-colors) which also satisfies $\Delta \cdot p = \omega(\log n)$ (in order to have concentration up to $(1\pm o(1))$ factors on the number of non-trivial colors used), we color a $\approx 1-p$ fraction of the edges of high-degree vertices (i.e., $\approx \Delta \cdot p)$ edges, while using only $\approx \alpha \Delta \cdot p$ colors, all with high probability.
We then compute \emph{another} fractional edge coloring, but this time only on the residual uncolored subgraph. Repeatedly 
applying this approach (computing another edge colorings, and rounding a sampled subset of its matchings) therefore 
allows us to decrease the maximum degree of the uncolored graph at the required rate w.h.p. So, after $\alpha\Delta$ 
colors are used this way, we safely run greedy, yielding an $(\alpha+o(1))$-competitive edge coloring. Plugging in the 
optimal algorithms for known and unknown $\Delta$ into this rounding scheme then yields our main positive results: \emph{optimal} randomized online edge coloring algorithms.

\paragraph{Lower Bounds.}
For our lower bounds (including the tight ones for bipartite graphs) we formulate linear programs capturing constraints 
satisfied by any $\alpha$-competitive fractional online algorithms (for our relaxation) when run on a tailor-made family of edge coloring instances. 
We then present a family of feasible solutions to the dual program, whose value converges to the claimed lower bounds, implying the lower bound 
on $\alpha$ by LP duality. (See  \cite{azar2017online} for more examples of this approach.)

\subsection{Related Work}\label{sec:related-work}

\paragraph{Online Edge Coloring.} 
Several previous papers studied edge coloring in online settings \cite{bar1992greedy,aggarwal2003switch,bahmani2012online,favrholdt2003line,ehmsen2010comparing,favrholdt2014online,mikkelsen2016randomization,mikkelsen2015optimal}. 
\citet{mikkelsen2016randomization,mikkelsen2015optimal} studied the online edge coloring problem, but with advice about the future.
Favrholdt et al.~\cite{favrholdt2003line,ehmsen2010comparing,favrholdt2014online} studied the ``dual'' problem of maximizing the number of edges colored using a fixed number of colors.
Most relevant to our paper is the work of Motwani et al.~\cite{bar1992greedy,aggarwal2003switch,bahmani2012online}.
\citet{aggarwal2003switch} presented a $(1+o(1))$-competitive algorithm for multigraphs with known $\Delta=\omega(n^2)$.
\citet{bahmani2012online}, inspired by the distributed algorithm of \citet{panconesi1997randomized}, 
gave a $1.26$-competitive algorithm for multigraphs with known $\Delta=\omega(\log n)$. 
Both algorithms require \emph{random order} edge arrivals, 
and fall short of the guarantees of \Cref{conj:bnm} (\cite{bar1992greedy}), either in the competitive ratio or in the requirement of $\Delta$. In contrast, 
we consider vertex arrivals under the stricter \emph{adversarial} arrival order, for which we match these conjectured bounds for known $\Delta$, and also achieve optimal bounds for (harder) unknown $\Delta$.

\paragraph{Online Matching.} As edge coloring is the problem of partitioning a graph's edges into matchings, it is natural that our work relates to the long line of work on online matching. This problem was introduced in the seminal work of \citet{karp1990optimal}, who presented the classic \textsc{ranking} algorithm, which is $(1-\frac{1}{e})$ competitive for bipartite graphs under one-sided arrivals, and proved its optimality. A simpler argument proves this algorithm's optimality even among fractional algorithms (see \cite{feige2018tighter}).
Alternative analyses of this algorithm were given over the years (\cite{goel2008online,birnbaum2008line,devanur2013randomized,eden2018economic}) and another optimal fractional algorithm with further applications, \textsc{water filling}, was given in \cite{kalyanasundaram2000optimal}. Better bounds are known under structural assumptions \cite{buchbinder2007online,naor2018near,cohen2018randomized}, and under stochastic arrivals \cite{feldman2009online,karande2011online,mahdian2011online}.
See the survey of \citet{mehta2013online} for more on this problem and its extensions.
Finally, we note the recent interest in online matching in general graphs \cite{wang2015two,huang2018match,huang2019tight,gamlath2019online}.
Our complementary 
results of \Cref{thm:general-graphs} for online edge coloring in general graphs are another step towards a better understanding of matching-theory-related problems in online models in general graphs.
	\section{A New Fractional Relaxation}\label{sec:fractional}
In this section, we define our new online fractional edge coloring relaxation
and discuss several of its properties.

\paragraph{The Classic Fractional Relaxation.}
The classic relaxation for edge coloring has a nonnegative variable $x_M$ for each
 matching $M$ in $G=(V,E)$, corresponding to the (fractional) extent to which this matching is used in the solution.
The objective is to minimize $\sum_{M} x_M$ subject to $\sum_{M\ni e} x_M = 1$ for each edge $e\in E$.
This relaxation clearly lower bounds the chromatic index; i.e., the minimum number of matchings needed to cover $G$. A long-standing conjecture of Goldberg and Seymour is that this relaxation is at most one lower than the chromatic index \cite{goldberg1973multigraphs,seymour1979multi}. 
(See \cite[Chapter 7.4]{lovasz2009matching} for more discussion of this relaxation.)
Unfortunately, this relaxation seems somewhat unwieldy in an online setting, 
as we outline below.

\subsection{The New Relaxation}\label{sec:fractional-def}

The standard fractional edge coloring relaxation is difficult to use in online settings, 
where we do not know the edges which will arrive in the future, let alone which matchings $G$ will contain.
This motivates us to define a more ``myopic'' relaxation, which allows us to make our (fractional) assignments immediately 
upon an edge's arrival (due to one of its endpoints' arrival). Specifically, rather than relax 
the integrality of the extent to which we use integral matchings, we relax the integrality of the matchings used. 
That is, while the classic relaxation fractionally uses integral matchings to color edges, our 
relaxation \emph{integrally} uses \emph{fractional} matchings to color edges.

The edge coloring relaxation we consider is thus the following. 
We say a graph $G(V,E)$ is \emph{fractionally $k$-edge-colorable} if there is a feasible solution to the linear program
\begin{align*}
\sum_{c \in [k]} x_{e,c} &= 1  \qquad \qquad \forall e \in E\\
\sum_{e\ni v} x_{e,c} &\leq 1 \qquad \qquad \forall v\in V, c \in [k] \\
x_{e,c} &\geq 0 \qquad \qquad  \forall e\in E, c \in [k]
\end{align*}
For any graph $G$, the minimal number of fractional colors $k$ is equal to $G$'s maximum degree,~$\Delta$. We note that in bipartite graphs this relaxation and the classic relaxation are equivalent in an offline sense, in that any solution to one can be transformed to a solution of equal value to the other (for general graphs, there can be a gap of one between the two, as exemplified by the triangle graph). In an online sense it is not clear how to go from one relaxation to the other, and so we will rely only on our new relaxation.

\paragraph{An LP Formulation.} For notational simplicity, rather than 
discuss fractional algorithms using some $k = \alpha \cdot \Delta$ colors, we will instead use 
$k=\Delta$ colors and relax the second constraint to 
\begin{align*}
\sum_{e\ni v} x_{e,c} &\leq \alpha \qquad \qquad \forall v\in V, c \in [\Delta]
\end{align*}
When dealing with fractional solutions, it is easy to ``stretch'' such a solution to obtain a feasible edge coloring (i.e., satisfying $\sum_{e\ni v} x_{e,c} \leq 1$) while using $\lceil\alpha \cdot \Delta\rceil \leq \alpha\cdot \Delta + 1$ colors, and this can be done online.
Therefore, our goal will be to minimize $\alpha$ --- the competitive ratio.

\paragraph{Online Algorithms for the LP Relaxation.} An online fractional edge coloring algorithm must assign $x_{e,c}$ values for all edges $e$ upon arrival, immediately and irrevocably. For example, if $\Delta$ is known a priori, assigning each edge-color pair a value of $\frac{1}{\Delta}$ trivially yields a $1$-competitive online fractional algorithm. 
If $\Delta$ is \emph{unknown}, the situation is not so simple, as our lower bounds of \Cref{sec:hardness} demonstrate. In the following section we present our online fractional algorithms for unknown $\Delta$, including an optimal algorithm for bipartite graphs.
	\section{The Fractional Online Algorithm}\label{sec:fractional-algorithm}

\newcommand{\ktau}{k}
Our LP relaxation asks to minimize the maximum \emph{load} of any vertex $u$ in color $c$, $L_u (c) \triangleq \sum_{e\ni u} x_{e,c}$. The greedy water-filling algorithm, upon arrival of edge $e$, increases all $x_{e,c}$ for all colors $c$ minimizing the maximum load of either endpoint of $e$. This natural algorithm is no better than the integral greedy algorithm, however (see \Cref{sec:bad-example-greedy}). 
In our algorithm, upon arrival of a vertex $v$, we run a variant of the water-filling algorithm on each edge $(u,v)$ in an arbitrary order. One difference in our algorithm compared to the greedy one is that its greedy choice is \emph{asymmetric}, and is only determined by the current loads of the previously-arrived endpoint, $u$.
The second difference is that we set a \emph{bound constraint} of $\beta/\Delta$ for each color per edge, where $\Delta$ is 
the \textbf{current} maximal degree, and $\beta$ is a parameter of the algorithm which will be determined later. The bound constraints result in bounded load trivially for the online vertex, and by careful analysis, also for the offline vertex. In addition, the bound constraints result in a more balanced allocation, which uses more colors for each edge, but fewer colors overall.
A formal description of our algorithm is given in \Cref{alg:cap_water_filling}. Our algorithm is described as a continuous process, but can be discretized easily.

\begin{algorithm}[H] 
	\caption{Bounded Water Filling}
	\label{alg:cap_water_filling}
	
	\begin{algorithmic}[1]
		\Require Online graph $G(V,E)$ with unknown $\Delta(G)$ under vertex arrivals, parameter $\beta\in (1,2)$.\;
		
		\Ensure Fractional edge coloring $\{x_{e,c} \mid e\in E$, $c\in [\Delta(G)]\}$.

		\For{each arrival of a vertex $v$}
		\State $\Delta \leftarrow \max\{d(u)\mid u\in V\}$. \Comment{$\Delta$ = \textbf{current} max.~degree} \label{line:delta-init}
		\For{each $e=(u,v)\in E$}
		\While{$\sum_{c\in [\Delta]}x_{e,c} < 1$} \Comment{initially, $x_{e,c}=0$}
		
		\State let $\mathit{U} \triangleq  \{c \in [\Delta] \mid x_{e,c} < \beta/\Delta\}$. \Comment{``unsaturated'' colors for $e$}
		\State let $\mathit{C} \triangleq \{c \in \mathit{U} \mid L_u(c) = \min_{c\in \mathit{U}} L_u(c)\}$. \Comment{``currently active'' colors for $e$}
		\For{all $c \in \mathit{C}$}
		\State increase $x_{e,c}$ continuously. \Comment{update $L_u(c),L_v(c)$,$\mathit{U}$ and $\mathit{C}$.}
		\EndFor								
		\EndWhile
		\EndFor
		\EndFor
	\end{algorithmic}
\end{algorithm}

\subsection{Basic properties of the algorithm}

Our water filling algorithm preserves important monotonicity properties on the loads of any previously-arrived vertex $v$. In particular, the order obtained by sorting colors by their loads for $v$ remains invariant following its future neighbors' arrivals. 
More formally, for each vertex $v$, we define an order permutation $\sigma_v: Z^+ \rightarrow Z^+$,
where $\sigma_v(i)$ is the index of $v$'s $i^{th}$ most loaded color index after the vertex $v$ arrives and its edges are fractionally colored (e.g., $\sigma_v(1)$ is the most-loaded color index). In addition, we define the load of a color in a vertex with respect to this order; i.e., we denote by $\ell_{u}^{t}(i)$ the load of color $\sigma_u(i)$ for vertex $u$ after its $t^{th}$ neighbor arrives -- which we refer to as \emph{step} $t$. In this notation, our monotonicity property will be that $\ell_{u}^{t}(i)\geq \ell_{u}^{t}(i+1)$ for each $u$ and $i,t\in \mathbb{Z}^+$.

We denote by $\delta_{u}^t$ the global maximum degree after the arrival of the $t^{th}$ neighbor of vertex $u$ and denote by $A_u$ the degree of $u$ when it arrives (e.g., $A_u = 0$ for offline vertices in bipartite graphs). 
Next, we prove properties of the load of a specific vertex $u$ after its arrival (i.e., for steps $t > A_u$), at which point the order $\sigma_u$ is already set. 
For ease of notation we omit the subscript $u$ from  variables $\ell,\delta$ and $A$ whenever it will be clear from context (i.e., when considering a single vertex $u$). 
In addition, as $\sigma$ will be clear from context, we will use \emph{color $k$} as shorthand notation to $\sigma(k)$.
Moreover, due to space constraints, we defer most proofs to \Cref{sec:fractional-deferred}.

We first observe that for our bounded water-filling algorithm (as for its unbounded counterpart), the load of $u$ is monotone decreasing with respect to the $\sigma_u$ order, and for each step $t$, the increase in the load for $i\leq \delta^t$ is monotone increasing in the $\sigma_u$ order.
\begin{obs}\label{obs:monotonicity}
	For all color indices $i$, and for any $t > A$
	\begin{itemize}
		\item $\ell^{t}(i) \geq  \ell^{t}(i+1)$.
		\item $\ell^{t}({i}) - \ell^{t-1}(i) \geq \ell^{t}(i-1) - \ell^{t-1}(i-1) $, for all  $i \leq \delta^{t}$.
	\end{itemize}
\end{obs}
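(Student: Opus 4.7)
The plan is to prove both parts by simultaneous induction on $t$. The base case $t = A_u$ holds by definition of $\sigma_u$ (for claim 1) and is vacuous (for claim 2). For $t > A_u$, note that step $t$ amounts to fractionally coloring the single edge $e_t = (u, v_t)$ incident to $u$ via the bounded water-filling subroutine, with per-edge-per-color cap $\beta/\delta^t$ and priorities determined by $u$'s current loads. By the inductive hypothesis $\ell^{t-1}$ is already nonincreasing in $\sigma$-order, so the subroutine can be described as a continuously rising \emph{water level} $L$ equal to the common load (on $u$) of the currently \emph{active} set of colors: $L$ starts at $\ell^{t-1}(\delta^t)$ with the least-loaded colors active; color $i$ joins the active set when $L$ reaches $\ell^{t-1}(i)$; a color \emph{saturates} and drops out of the active set when its per-edge mass $x_{e_t,\sigma(i)}$ reaches $\beta/\delta^t$; the process stops when $\sum_c x_{e_t,c} = 1$.

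Under this description each color $\sigma(i)$ ends step $t$ in one of three states, which drive all comparisons. Letting $L^{\mathrm{fin}}$ denote the terminal water level: color $i$ is \emph{untouched}, with $\ell^t(i) = \ell^{t-1}(i) \geq L^{\mathrm{fin}}$; or \emph{active-unsaturated}, with $\ell^t(i) = L^{\mathrm{fin}} > \ell^{t-1}(i)$; or \emph{saturated}, with $\ell^t(i) = \ell^{t-1}(i) + \beta/\delta^t$ and $L^{\mathrm{fin}} \geq \ell^{t-1}(i) + \beta/\delta^t$ (the latter because $L$ is monotone in time and must have passed the saturation threshold of color $i$).

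Claim 1 then reduces to a short case split on adjacent indices $i$ and $i+1$ using $\ell^{t-1}(i) \geq \ell^{t-1}(i+1)$. The only slightly delicate case is when $i+1$ saturates: then $\ell^t(i+1) = \ell^{t-1}(i+1) + \beta/\delta^t \leq L^{\mathrm{fin}}$, while $\ell^t(i)$ is either $\ell^{t-1}(i) \geq L^{\mathrm{fin}}$ (untouched), $L^{\mathrm{fin}}$ (active-unsaturated), or $\ell^{t-1}(i) + \beta/\delta^t \geq \ell^{t-1}(i+1) + \beta/\delta^t$ (saturated), and each is at least $\ell^t(i+1)$.

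For claim 2 the key auxiliary observation is that throughout the step-$t$ water-filling, $x_{e_t,\sigma(i)} \geq x_{e_t,\sigma(i-1)}$: color $i$ enters the active set no later than color $i-1$ (since $\ell^{t-1}(i) \leq \ell^{t-1}(i-1)$) and they rise in lock-step thereafter, so color $i-1$ cannot saturate strictly before color $i$. Writing $\Delta_j := \ell^t(j) - \ell^{t-1}(j)$, the three cases on the state of $i-1$ yield: untouched gives $\Delta_{i-1} = 0 \leq \Delta_i$; saturated forces $i$ also saturated by the observation, so $\Delta_i = \Delta_{i-1} = \beta/\delta^t$; and active-unsaturated forces $i$ to be active too (from $\ell^{t-1}(i) < L^{\mathrm{fin}}$), whereupon either $i$ is also unsaturated and $\Delta_i - \Delta_{i-1} = \ell^{t-1}(i-1) - \ell^{t-1}(i) \geq 0$, or $i$ is saturated and $\Delta_i = \beta/\delta^t \geq \Delta_{i-1}$. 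The main obstacle I anticipate is simply keeping the interleaved bookkeeping of ``join'' and ``saturation'' events straight; once the three-state characterization above is crystallized, each case collapses to a one-line comparison.
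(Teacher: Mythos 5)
The paper never actually proves \Cref{obs:monotonicity} --- it is stated as an observation and its proof is omitted even from \Cref{sec:fractional-deferred} --- so there is no ``official'' argument to compare against; judged on its own, your water-level formalization is the right reading of \Cref{alg:cap_water_filling}, and your proof of the second bullet is complete: the lock-step fact $x_{e_t,\sigma(i)}\geq x_{e_t,\sigma(i-1)}$ throughout the step, plus the three terminal states (untouched / active-unsaturated at level $L^{\mathrm{fin}}$ / saturated with increment exactly $\beta/\delta^t$), does collapse every case to a one-line comparison.

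The one slip is in the case analysis for the first bullet. You assert the only delicate case is ``$i+1$ saturates,'' but with the three-state bounds as you listed them the genuinely problematic combination is the reverse one: $i$ \emph{saturated} while $i+1$ ends \emph{active-unsaturated}. There your bounds give $\ell^{t}(i)=\ell^{t-1}(i)+\beta/\delta^t\leq L^{\mathrm{fin}}=\ell^{t}(i+1)$, i.e.\ the inequality in the wrong direction, so this case is not ``immediate.'' It is, however, easy to close with tools you already have: either invoke your own lock-step observation (since $\ell^{t-1}(i+1)\leq\ell^{t-1}(i)$, color $i+1$ joins no later and always carries at least as much edge mass, so $i$ saturating forces $i+1$ to saturate as well, making the case vacuous), or note directly that an active-unsaturated color's increment is at most the cap, so $\ell^{t}(i+1)=\ell^{t-1}(i+1)+x_{e_t,\sigma(i+1)}\leq\ell^{t-1}(i)+\beta/\delta^t=\ell^{t}(i)$. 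A cleaner packaging of the whole argument is the closed form $\ell^{t}(i)=\min\bigl\{\max\{\ell^{t-1}(i),L^{\mathrm{fin}}\},\;\ell^{t-1}(i)+\beta/\delta^t\bigr\}$ for all $i\leq\delta^t$, from which bullet one follows because this map is nondecreasing in $\ell^{t-1}(i)$ and bullet two because the increment $\min\{(L^{\mathrm{fin}}-\ell^{t-1}(i))^{+},\beta/\delta^t\}$ is nonincreasing in $\ell^{t-1}(i)$. Two cosmetic caveats: your ``untouched $\Rightarrow\ell^{t-1}(i)\geq L^{\mathrm{fin}}$'' only holds for colors in $[\delta^t]$ (colors beyond $\delta^t$ are untouched for a different reason), though everywhere you use it the color is in range; and the water level can jump when all active colors saturate simultaneously, but it never jumps strictly past an unsaturated color's load, so your ``joins when $L$ reaches its load'' description survives.
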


In our analysis, we focus on the \emph{critical} colors at step $T$ -- colors whose load increased at step $T$ and is higher than the following color load.
Formally, color $\ktau$ is \emph{critical} with respect to vertex $u$ and its $T^{th}$ neighbor if $\ell^{T}(\ktau) > \ell^{T-1}(\ktau)$ and $\ell^{T}(\ktau) > \ell^{T}(\ktau+1)$.
Clearly, in order to upper bound the load at step $T$, it is sufficient to upper bound the load for critical colors $\ktau$ for $T$.
If we let $V_1^\ktau \triangleq \sum_{i=1}^\ktau \ell^T(i)$ be the total load on colors $1,2,\dots,\ktau$ and $V_2^\ktau \triangleq \sum_{i=\ktau+1}^{\delta^T}  \ell^T(i)$ be the total load on colors $\ktau+1,\dots ,\delta^T$, 
we will upper bound the load of color $\ktau$ by 
\begin{equation}
\label{eq:upperboundk}
\ell^T(\ktau) \leq \frac{V^k_1}{\ktau} \leq \frac{\delta^T - V^k_2}{\ktau},
\end{equation}
where the first inequality is due to the monotonicity of the loads, and the second inequality is due to the total load being at most $\delta^T$. 
Therefore, we will upper bound the load by proving a lower 
bound on the index of any critical color, and a lower bound on the total load after this index. 

The next lemma plays a key role in both lower bounds.
We show that for any color $\ktau$ critical at step $T$ and for all steps $A < t \leq T$ during which $\ktau$'s load increases, all colors after $\ktau$ that could be increased (i.e. $\ktau < i \leq \delta^T$) have their load increase by the maximum allowable amount, $\beta/\delta^t$.

\begin{restatable}{lem}{lemtechlem}
	\label{lem:tech_lem}	
	For a color $k$ critical at step $T$, for all $A < t \leq T$ such that  $\ell^{t}(\ktau) > \ell^{t-1}(\ktau)$,
	we have
	$$\ell^{t}(i) - \ell^{t-1}(i) = \beta / \delta^{t} \qquad \forall \ktau < i \leq \delta^t.$$
\end{restatable}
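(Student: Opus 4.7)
Fix a color $k$ critical at step $T$, a step $t$ with $A<t\leq T$ and $\ell^{t}(k)>\ell^{t-1}(k)$, and an index $i$ with $k<i\leq \delta^{t}$; write $c_j\triangleq \sigma_u(j)$. The plan is to combine the two parts of Observation~\ref{obs:monotonicity} with the water-filling structure at step $t$ to deduce $\ell^{t}(i)-\ell^{t-1}(i)=\beta/\delta^{t}$.

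First I would upgrade the criticality hypothesis (only stated at time $T$) to a strict separation $\ell^{t}(k)>\ell^{t}(i)$ at the earlier time $t$. Criticality gives $\ell^{T}(k)>\ell^{T}(k+1)$, and the first part of Observation~\ref{obs:monotonicity} gives $\ell^{T}(k+1)\geq \ell^{T}(i)$, so $\ell^{T}(k)-\ell^{T}(i)>0$. For each $t'\in (t,T]$, iterating the second part of Observation~\ref{obs:monotonicity} along the consecutive indices $k,k+1,\dots,i$ (all $\leq \delta^{t}\leq \delta^{t'}$) yields $\ell^{t'}(i)-\ell^{t'-1}(i)\geq \ell^{t'}(k)-\ell^{t'-1}(k)$. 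Telescoping over $t'=t+1,\dots,T$ gives $\ell^{t}(k)-\ell^{t}(i)\geq \ell^{T}(k)-\ell^{T}(i)>0$.

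Next, at step $t$ only the edge $e_t=(u,v_t)$ contributes to $u$'s loads, so $\ell^{t}(j)-\ell^{t-1}(j)=x_{e_t,c_j}$ for every $j$. The bounded water-filling on $e_t$ (with per-color cap $\beta/\delta^{t}$ and total budget $\sum_c x_{e_t,c}=1$) is characterized by a single water level $\lambda$ that partitions the colors into three buckets at the end of step $t$: \emph{untouched} ($x_{e_t,c}=0$ and $L_u(c)\geq \lambda$), \emph{active} ($0<x_{e_t,c}<\beta/\delta^{t}$ and $L_u(c)=\lambda$), and \emph{saturated} ($x_{e_t,c}=\beta/\delta^{t}$ and $L_u(c)\leq \lambda$). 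Because $\ell^{t}(k)>\ell^{t-1}(k)$, the color $c_k$ is active or saturated, and therefore $\ell^{t}(k)\leq \lambda$. Combining with the strict separation from the first step, $\ell^{t}(i)<\ell^{t}(k)\leq \lambda$, which rules out $c_i$ being untouched (that would force $\ell^{t}(i)\geq \lambda$) or active (that would force $\ell^{t}(i)=\lambda$). Thus $c_i$ must be saturated, giving $x_{e_t,c_i}=\beta/\delta^{t}$ and hence $\ell^{t}(i)-\ell^{t-1}(i)=\beta/\delta^{t}$, as required.

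The main obstacle is the first step: the hypothesis supplies information at two different times -- criticality at $T$ and a load increase at the possibly much earlier step $t$ -- so the strict inequality coming from criticality must be transported backward in time from $T$ down to $t$. The monotone-increments property (second part of Observation~\ref{obs:monotonicity}) is precisely the tool that makes this transport possible; once the strict separation $\ell^{t}(k)>\ell^{t}(i)$ is in hand, the water-filling case analysis at step $t$ is routine.
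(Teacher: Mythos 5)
Your proof is correct and uses the same two ingredients as the paper's own argument: the monotone-increments part of \Cref{obs:monotonicity} to relate step $t$ to step $T$, and the saturated/active/untouched water-level structure of the bounded water-filling step (each increment equals $x_{e_t,c}$, capped at $\beta/\delta^t$). The paper organizes this as a contradiction routed through color $k+1$ (an unsaturated $k+1$ at step $t$ forces $\ell^{t}(k)=\ell^{t}(k+1)$, which propagates forward to $T$ and contradicts criticality, after which general $i>k$ follows from monotone increments plus the cap), while you run the contrapositive, transporting the strict gap $\ell^{T}(k)>\ell^{T}(i)$ backward to step $t$ and concluding saturation of each $c_i$ directly --- essentially the same proof, differently arranged.
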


Using the previous lemma, we bound the minimal index of a critical color at step $T$.
\begin{restatable}{lem}{lemtauminindex}
	\label{lem:tau_min_index}
	If $\ktau$ is a critical color at step $T$, then $\ktau > \delta^T \cdot \left(1-1/\beta\right)$.
\end{restatable}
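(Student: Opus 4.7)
The plan is to derive the bound directly from Lemma~\ref{lem:tech_lem} applied at the single step $t = T$, together with the fact that the total fractional ``mass'' the algorithm pours into the edge arriving at step $T$ equals $1$. Since $k$ is critical at $T$, by definition $\ell^{T}(k) > \ell^{T-1}(k)$, so the hypothesis of Lemma~\ref{lem:tech_lem} is satisfied at $t = T$ itself, and yields
\[
\ell^{T}(i) - \ell^{T-1}(i) \;=\; \frac{\beta}{\delta^{T}} \qquad \text{for every } k < i \le \delta^{T}.
\]

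Next I would account for the total load increment at step $T$. At that step exactly one new edge incident to $u$ arrives, namely $e = (u, v_T)$, and the inner while-loop of Algorithm~\ref{alg:cap_water_filling} terminates only once $\sum_{c} x_{e,c} = 1$. Since any increment of $x_{e,c}$ during step $T$ increases $L_u(c)$ by the same amount, summing the changes in $u$'s loads across all $\delta^T$ colors gives
\[
\sum_{i=1}^{\delta^{T}} \bigl(\ell^{T}(i) - \ell^{T-1}(i)\bigr) \;=\; 1.
\]
Splitting this sum at index $k$, and using that the head contribution is bounded below by $\ell^{T}(k) - \ell^{T-1}(k) > 0$ (all other head increments being nonnegative, as loads only grow), we get the strict inequality
\[
(\delta^{T} - k)\cdot \frac{\beta}{\delta^{T}} \;<\; 1,
\]
which rearranges exactly to $k > \delta^{T}\,(1 - 1/\beta)$.

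The only subtlety is verifying the two ``bookkeeping'' ingredients: that processing $v_T$ contributes exactly $1$ to $\sum_{c} x_{e,c}$ on the new edge (immediate from the termination condition of the while loop), and that Lemma~\ref{lem:tech_lem} is legitimately invokable at $t = T$ itself (immediate from the critical condition $\ell^{T}(k) > \ell^{T-1}(k)$). Once these are noted, the strict inequality follows without further work, so I expect no real obstacle beyond pinning these down.
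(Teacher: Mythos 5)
Your proof is correct and is essentially the paper's argument: both invoke Lemma~\ref{lem:tech_lem} at the single step $t=T$ (justified by the criticality condition $\ell^T(\ktau)>\ell^{T-1}(\ktau)$) to get increments of exactly $\beta/\delta^T$ on colors $\ktau<i\le\delta^T$, and combine this with the fact that the total load increase on $u$ at step $T$ is $1$. The only difference is presentational -- the paper phrases it as a contradiction (if $\ktau\le\delta^T(1-1/\beta)$ the head increments would sum to at most $0$, forcing $\ell^T(\ktau)=\ell^{T-1}(\ktau)$), while you run the same inequality directly.
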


Next, using \Cref{lem:tech_lem} and some useful claims in \Cref{sec:fractional-deferred} we prove a lower
bound on $V^k_2$.
\begin{restatable}{lem}{techlemall}
	\label{lem:tech_lem_all}
	If $\ktau$ is a critical color at step $T$ and	$\ktau^* \geq \max \{ \ktau, \delta^{A}\}$, we have	
	\[
	V_2^k \geq \sum_{j = \ktau+1}^{\delta^{T}} \left( \ell^{T}(j) - \ell^{\ktau^*}(j) \right) \geq \beta \cdot \left(\delta^T - \ktau^* - \ktau \log\,\frac{\delta^T}{\ktau^*}\right).   
	\]
\end{restatable}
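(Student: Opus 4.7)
The first inequality is immediate, since each $\ell^{\ktau^*}(j)\geq 0$ and so $V_2^k=\sum_{j=\ktau+1}^{\delta^T}\ell^T(j)\geq\sum_{j=\ktau+1}^{\delta^T}(\ell^T(j)-\ell^{\ktau^*}(j))$.

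For the second inequality, my plan is to telescope across the steps $t\in(\ktau^*,T]$:
\[
\sum_{j=\ktau+1}^{\delta^T}\bigl(\ell^T(j)-\ell^{\ktau^*}(j)\bigr)=\sum_{t=\ktau^*+1}^{T}\ \sum_{j=\ktau+1}^{\delta^T}\bigl(\ell^t(j)-\ell^{t-1}(j)\bigr),
\]
and to lower-bound each step's inner sum via Lemma~\ref{lem:tech_lem}. To do so I first need $\ell^t(\ktau)>\ell^{t-1}(\ktau)$ at each relevant step $t$ with $\delta^t>\ktau$; this follows from Lemma~\ref{lem:tau_min_index} (with $\delta^t\leq\delta^T$), which gives $\ktau>\delta^t(1-1/\beta)$, so the colors of rank $>\ktau$ have total per-edge capacity only $(\delta^t-\ktau)\cdot\beta/\delta^t=\beta(1-\ktau/\delta^t)<1$. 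Consequently, the water-filling at edge $(u,v_t)$ cannot achieve $\sum_c x_{e,c}=1$ without spilling into rank $\ktau$. Invoking Lemma~\ref{lem:tech_lem} at such a step yields $\ell^t(j)-\ell^{t-1}(j)=\beta/\delta^t$ for every $\ktau<j\leq\delta^t$, so its inner sum contributes at least $(\delta^t-\ktau)\cdot\beta/\delta^t=\beta(1-\ktau/\delta^t)$; terms with $j>\delta^t$ contribute nonnegatively by monotonicity of loads in $t$. Thus
\[
\sum_{j=\ktau+1}^{\delta^T}\bigl(\ell^T(j)-\ell^{\ktau^*}(j)\bigr)\ \geq\ \sum_{t=\ktau^*+1}^{T}\beta\!\left(1-\frac{\ktau}{\delta^t}\right).
\]

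It remains to match this discrete sum with the integral
$\int_{\ktau^*}^{\delta^T}\beta(1-\ktau/s)\,ds=\beta(\delta^T-\ktau^*-\ktau\log(\delta^T/\ktau^*))$.
The hypothesis $\ktau^*\geq\max\{\ktau,\delta^A\}$, combined with $\delta^t\geq A+t$ (since $d(u)=A+t$ at step $t$) and the monotonicity of $\delta^t$, ensures $\delta^{\ktau^*}\geq\ktau^*$, so the integrand $1-\ktau/s$ is nonnegative throughout $[\ktau^*,\delta^T]$. My plan is then to invoke the auxiliary claims deferred to Section~\ref{sec:fractional-deferred}: conceptually, I pair each step $t$ with the subinterval $[\delta^{t-1},\delta^t]\subseteq[\ktau^*,\delta^T]$; since $1-\ktau/s$ is nondecreasing in $s$, the contribution $\beta(1-\ktau/\delta^t)$ of step $t$ dominates $\int_{\delta^{t-1}}^{\delta^t}\beta(1-\ktau/s)\,ds$ in the well-behaved regime, and telescoping across all steps covers $[\ktau^*,\delta^T]$ to produce the stated bound.

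\textbf{Main obstacle.} The cleanest part is the telescoping plus Lemma~\ref{lem:tech_lem} application; the step I expect to be delicate is the final Riemann-sum-to-integral comparison, particularly when $\delta^t$ grows by more than one between consecutive steps. This is precisely where I plan to lean on the auxiliary claims of Section~\ref{sec:fractional-deferred}, which encode the necessary properties of $\delta^t$'s growth (using the constraint $\ktau^*\geq\delta^A$) to carry out the telescoping carefully and produce the integral form.
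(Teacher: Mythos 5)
Your first half matches the paper's own route almost exactly: telescoping the increments over steps $t\in(\ktau^*,T]$, arguing via the capacity bound of \Cref{lem:tau_min_index} that the critical color's load must strictly increase at every such step (the paper proves precisely this, by contradiction, inside \Cref{cor:tech_lem}), and then invoking \Cref{lem:tech_lem} to get a per-step contribution of $\beta(\delta^t-\ktau)/\delta^t$. This is the content of Claims \ref{cor:tech_lem} and \ref{cla:tech_lem_all}, merely organized step-by-step rather than color-by-color.

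The genuine gap is your final sum-to-integral step. Pairing step $t$ with the interval $[\delta^{t-1},\delta^t]$ cannot be made to work: the single term $\beta(1-\ktau/\delta^t)$ dominates $\int_{\delta^{t-1}}^{\delta^t}\beta(1-\ktau/s)\,ds$ only when $\delta^t-\delta^{t-1}\le 1$, and nothing bounds the growth of the \emph{global} maximum degree between two consecutive neighbors of $u$ -- the adversary can make $\delta^t$ jump arbitrarily, and the auxiliary claims in \Cref{sec:fractional-deferred} that you plan to lean on say nothing about $\delta^t$'s increments (they are exactly the two facts you already used in your first half). Moreover, even granting the domination, your intervals only tile $[\delta^{\ktau^*},\delta^T]$ rather than $[\ktau^*,\delta^T]$ (note $\delta^{\ktau^*}\ge\ktau^*$), so with a nonnegative integrand you would be comparing against a smaller integral than the one you need. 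The paper's actual final step runs the comparison through the opposite inequality: since $u$ has degree $t$ at step $t$, the current maximum degree satisfies $\delta^t\ge t$, hence $\beta(1-\ktau/\delta^t)\ge\beta(1-\ktau/t)=\beta\,\frac{t-\ktau}{t}$, and the resulting harmonic-type sum is bounded below by the integral using $\sum_t 1/t\le\log(\cdot)$ over unit intervals -- that is, the monotone variable driving the Riemann comparison is the step index $t$ itself, not $\delta^t$, and no control on $\delta$'s growth is needed or available. (There is also a range issue your plan leaves open: the telescoping produces only $T-\ktau^*$ terms, while the stated bound is expressed in terms of $\delta^T-\ktau^*$; the paper's \Cref{cla:tech_lem_all} writes its sum with upper limit $\delta^T$ and its final computation uses that form, so any write-up along your lines would have to address this point explicitly as well.)
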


\paragraph{Bounding the maximum load.}
Next, we use the previous lemmas in order to bound the maximum load after an assignment of an edge.
Specifically, we will bound the load of $\ell_u$ and $\ell_v$ after coloring the edge $(v,u)$, where $v$ is the newly-arrived vertex. First, it is easy to bound the load of a vertex $v$ for each color after its arrival, 
since we bound each edge-color pair's value $x_{e,c}$ by $\beta/\delta_v^{A_v} \leq \beta/A_v$ at arrival of $v$ (when it has $A_v$ neighbors).
\begin{obs}\label{obs:load_arrival}
	$\ell_{v}^{A_v}(i) \leq \beta$ for all $i\in [\delta^{A_v}]$.
\end{obs}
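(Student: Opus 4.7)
The plan is to observe that the claim follows almost immediately from the bound constraint built into \Cref{alg:cap_water_filling} together with counting the edges incident to $v$ at its arrival. I will not need any of the machinery developed for critical colors (\Cref{lem:tech_lem}, \Cref{lem:tau_min_index}, \Cref{lem:tech_lem_all}); those lemmas are the harder tools needed to bound the load of an \emph{offline} endpoint, whereas here we only need a direct counting argument for the \emph{online} endpoint at the moment of its arrival.

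First, I would unpack the notation. When $v$ arrives with its $A_v$ incident edges, line~\ref{line:delta-init} sets the local variable $\Delta$ to $\max_{u \in V} d(u)$, where $d$ denotes current degrees. Since $d(v) = A_v$ and the max is taken over all vertices (including $v$), this value equals $\delta_v^{A_v}$ in the notation of the section. Throughout the processing of $v$'s edges, this is the value of $\Delta$ used to define the unsaturated-color set $U$, so the algorithm enforces the per-edge-per-color cap
\[
x_{e,c} \;\leq\; \frac{\beta}{\Delta} \;=\; \frac{\beta}{\delta_v^{A_v}}
\qquad \forall\, e \ni v,\; c \in [\delta_v^{A_v}].
\]

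Next, I would bound the load of any color $\sigma_v(i)$ with $i \in [\delta_v^{A_v}]$ by summing the above cap over the $A_v$ edges incident to $v$:
\[
\ell_v^{A_v}(i) \;=\; \sum_{e \ni v} x_{e,\sigma_v(i)} \;\leq\; A_v \cdot \frac{\beta}{\delta_v^{A_v}} \;\leq\; \beta,
\]
where the last inequality uses $A_v \leq \delta_v^{A_v}$, which holds because the global maximum degree after $v$'s arrival is at least $v$'s own degree $A_v$. This gives the claim for every color index $i \in [\delta_v^{A_v}]$.

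The only subtlety, and really the sole thing worth being careful about, is making explicit that the value of $\Delta$ used in the cap at line 6 equals $\delta_v^{A_v}$ (the global maximum degree at the arrival step, not some earlier or later value), and that this in turn dominates $A_v$; once that identification is made the bound is immediate. I expect no real obstacle beyond this bookkeeping.
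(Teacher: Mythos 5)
Your proof is correct and is essentially the paper's own argument: the paper justifies the observation by noting that each edge-color value is capped at $\beta/\delta_v^{A_v} \leq \beta/A_v$ at $v$'s arrival, so summing over $v$'s $A_v$ incident edges gives load at most $\beta$ per color. Your write-up just makes the bookkeeping (that the cap uses $\Delta = \delta_v^{A_v} \geq A_v$) explicit.
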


We next use \Cref{lem:tech_lem_all} and \Cref{eq:upperboundk} to bound the load of previously-arrived vertex $u$.

\begin{restatable}{lem}{boundbigtau}
	\label{lem:bound_bigtau}
	If $\ktau > \delta_u^{A_u}$ is a critical color at step $T$ with respect to $u$, then $\ell_u^T(\ktau) \leq \beta\log\,\frac{\beta}{\beta - 1}$.
\end{restatable}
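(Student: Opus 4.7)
The plan is to combine the bound \eqref{eq:upperboundk} with the $V_2^k$ lower bound of \Cref{lem:tech_lem_all}. Since the hypothesis gives $k > \delta_u^{A_u}$, I am free to pick $k^* = k$ in \Cref{lem:tech_lem_all} (the condition $k^* \geq \max\{k,\delta^A\}$ is satisfied). This choice is the crucial one: it eliminates one of the two ``free'' terms in the bound on $V_2^k$ and leaves a one-parameter expression that depends only on the ratio $\delta^T/k$.

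Concretely, plugging $V_2^k \geq \beta\,\bigl(\delta^T - k - k\log(\delta^T/k)\bigr)$ into \eqref{eq:upperboundk} yields
\[
\ell_u^T(k) \;\leq\; \frac{\delta^T - \beta\bigl(\delta^T - k - k\log(\delta^T/k)\bigr)}{k} \;=\; \beta + \beta\log\frac{\delta^T}{k} - (\beta-1)\cdot\frac{\delta^T}{k}.
\]
Setting $r := \delta^T/k$, this is $f(r) = \beta + \beta\log r - (\beta-1)\,r$. The plan is then a one-line calculus step: $f'(r) = \beta/r - (\beta-1) = 0$ at $r^\star = \beta/(\beta-1)$, and $f$ is concave, so
\[
f(r) \;\leq\; f(r^\star) \;=\; \beta + \beta\log\frac{\beta}{\beta-1} - \beta \;=\; \beta\log\frac{\beta}{\beta-1},
\]
which is exactly the claimed bound.

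I do not expect a genuine obstacle here, as the heavy lifting is already done in \Cref{lem:tech_lem_all}. The only subtle point to check is that the hypothesis $k > \delta_u^{A_u}$ really is what licenses the choice $k^* = k$: without it one would be forced to take $k^* = \delta^A$, producing an extra $\beta(\delta^T - \delta^A)/k$ term that does not simplify to the clean $\beta\log\frac{\beta}{\beta-1}$ bound. It is also worth noting (as a sanity check, not a logical step) that \Cref{lem:tau_min_index} guarantees $r = \delta^T/k < \beta/(\beta-1)$, so we are actually always on the increasing branch of $f$; the stated bound is attained only in the limit, which is consistent with $f(r^\star)$ being the sup.
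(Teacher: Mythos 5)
Your proposal is correct and matches the paper's proof essentially step for step: take $\ktau^*=\ktau$ in \Cref{lem:tech_lem_all} (licensed by $\ktau>\delta^{A}$), plug the resulting bound on $V_2^{\ktau}$ into \eqref{eq:upperboundk}, and maximize $\beta+\beta\log r-(\beta-1)r$ at $r=\beta/(\beta-1)$. The only cosmetic difference is that the paper cites \Cref{lem:tau_min_index} to place $\delta^T/\ktau$ below $\beta/(\beta-1)$, whereas your concavity argument makes that invocation unnecessary (you correctly relegate it to a sanity check).
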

\begin{restatable}{lem}{boundsmalltau}
	\label{lem:bound_smalltau}	
	If $\ktau \leq \delta_u^{A_u}$ is a critical color at step $T$ with respect to $u$, then  $\ell_u^T(\ktau) \leq \beta^2 - \beta + \beta\log\frac{1}{\beta - 1}.$
\end{restatable}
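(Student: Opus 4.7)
The plan is to adapt the approach of \Cref{lem:bound_bigtau}, which used \Cref{lem:tech_lem_all} with $k^* = k$, to the regime $k \leq \delta_u^{A_u}$. The crucial change is the choice of $k^* = \delta_u^{A_u}$ (the smallest admissible value, and also optimal since the lower bound in \Cref{lem:tech_lem_all} is decreasing in $k^* \geq k$), which gives
\[
V_2^k \geq \beta \left( \delta^T - \delta_u^{A_u} - k \log \frac{\delta^T}{\delta_u^{A_u}} \right).
\]
Combining this with $V_1^k + V_2^k \leq \delta^T$ (since the total load at $u$ is at most the max degree) and \Cref{eq:upperboundk} yields
\[
\ell_u^T(k) \leq \frac{\delta^T (1-\beta) + \beta \delta_u^{A_u}}{k} + \beta \log \frac{\delta^T}{\delta_u^{A_u}}.
\]
Setting $p = \delta^T/k$ and $q = \delta_u^{A_u}/k$, with $p < \beta/(\beta-1)$ by \Cref{lem:tau_min_index} and $1 \leq q \leq p$, the task reduces to bounding $f(p,q) := p(1-\beta) + \beta q + \beta \log(p/q)$ over this region.

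The main obstacle is the final optimization step. Whereas in \Cref{lem:bound_bigtau} the choice $k^* = k$ forces $q = 1$ and reduces the optimization to a single variable, here $q$ is genuinely free in $[1, p]$, and a naive maximization only yields the weaker value $\beta/(\beta-1)$ along the diagonal $q = p$. To pin down the tighter claimed bound, I would additionally invoke \Cref{obs:load_arrival}: since $\ell_u^{A_u}(i) \leq \beta$ for every $i$, one has $V_1^k(A_u) \leq k\beta$ and hence $V_2^k(A_u) \geq A_u - k\beta$ whenever $A_u > k\beta$ (equivalently $q > \beta$, using $A_u \leq \delta_u^{A_u}$). Because $V_2^k$ is monotone in the step, this sharpens the lower bound on $V_2^k(T)$ in the regime $q > \beta$ and prevents the diagonal from being the true maximum. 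After splitting into the two cases $q \leq \beta$ and $q > \beta$ and re-optimizing, the effective maximum is attained at $(p, q) = (\beta/(\beta-1), \beta)$, where
\[
f(p,q) = -\beta + \beta^2 + \beta \log \frac{1}{\beta-1} = \beta^2 - \beta + \beta \log \frac{1}{\beta-1},
\]
matching the claimed bound.
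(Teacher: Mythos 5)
Your analysis of the first regime is sound and matches the paper's own Case 1: with $k^* = \delta_u^{A_u}$ the resulting bound $f(p,q)=p(1-\beta)+\beta q+\beta\log(p/q)$ is increasing in $q$, so over $1\leq q\leq\beta$ it is maximized at $q=\beta$, $p=\beta/(\beta-1)$, giving the claimed value. The genuine gap is in the complementary regime. Your pivot from ``$q>\beta$'' to ``$A_u>\beta k$'' is not an equivalence: from $A_u\leq\delta_u^{A_u}$ one only gets $A_u>\beta k\Rightarrow q>\beta$, which is the direction you do not need. In the regime $\delta_u^{A_u}>\beta k$ the arrival degree $A_u$ can be arbitrarily small --- even $0$ --- when $u$ arrives into a graph whose maximum degree $\delta_u^{A_u}$ is already large (precisely the situation this lemma is about). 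There your extra term $V_2^k(A_u)\geq A_u-\beta k$ is vacuous, the bound collapses back to $f(p,q)$, and near the diagonal ($\delta_u^{A_u}\approx\delta^T$, so $q\approx p$) its value is $\approx p$, which by \Cref{lem:tau_min_index} can approach $\beta/(\beta-1)\approx 2.71$ for $\beta=1.586$ --- far above the claimed $\beta^2-\beta+\beta\log\frac{1}{\beta-1}\approx 1.777$. Moreover, even on the subregion where $A_u>\beta k$ does hold, that inequality only improves the bound by $A_u/k-\beta$, which can be arbitrarily close to $0$; so the asserted re-optimization to the point $(p,q)=(\beta/(\beta-1),\beta)$ does not follow, and the diagonal is not actually excluded by your argument.

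What is missing is an account of the load that the colors beyond $k$ accumulate during the steps \emph{between} $u$'s arrival (step $A_u$) and step $\delta_u^{A_u}$, and this is exactly how the paper's Case 2 proceeds: by \Cref{cor:tech_lem}, in every such step $t$ each color in $(k,\delta^t]$ gains the full $\beta/\delta^t$, contributing at least $(\delta_u^{A_u}-A_u)\cdot\beta\cdot\frac{\delta_u^{A_u}-k}{\delta_u^{A_u}}$ to $V_2^k$. This middle term is added to the arrival term $A_u-\beta k$ (from \Cref{obs:load_arrival}) and to the \Cref{lem:tech_lem_all} term with $k^*=\delta_u^{A_u}$; the dependence on $A_u$ is then eliminated by observing that the coefficient of $A_u$ in the combined lower bound is nonnegative (using \Cref{lem:tau_min_index}) together with the paper's reduction that one may assume $A_u\geq\beta k$. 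Only with that middle term present does the resulting two-parameter optimization (in $k/\delta_u^{A_u}\leq 1/\beta$ and $\delta^T/k\leq\beta/(\beta-1)$) evaluate to $\beta^2-\beta+\beta\log\frac{1}{\beta-1}$. As written, your proposal establishes the lemma only when $\delta_u^{A_u}\leq\beta k$, so the second case needs to be redone along these lines.
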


\paragraph{Upper Bounding \Cref{alg:cap_water_filling}'s Competitive Ratio}\label{sec:upper_bound_bipartite}
We are now ready to bound the competitive ratio of \Cref{alg:cap_water_filling}.
First, we show that  \Cref{alg:cap_water_filling} is $\frac{e}{e-1}$ competitive for one-sided bipartite graphs. 
That is, $G(L, R, E)$ is a bipartite graph and the offline vertices $L$ arrive before the algorithm starts (i.e., $A_u = 0$ for all $u \in L$).
\begin{thm}\label{thm:bipartite-upper-bound}
	For bipartite graphs under one-sided arrivals, \Cref{alg:cap_water_filling} is $\max\{\beta, \beta \log \,\frac{\beta}{\beta - 1}\}$ competitive. Setting $\beta = \frac{e}{e - 1}$, we obtain an $(\frac{e}{e - 1})$-competitive algorithm.
\end{thm}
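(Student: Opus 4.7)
The plan is to control the maximum load $\ell_u^T(i)$ on each vertex separately, using the structure of one-sided bipartite arrivals to reduce everything to bounds already established in the section. Because the relaxation's competitive ratio equals the maximum load over all vertex--color pairs (after the stretching argument from Section~2 that converts loads bounded by $\alpha$ on $\Delta$ colors into a fractional coloring using $\lceil \alpha\Delta\rceil$ colors), it suffices to show $\ell_u^T(i)\leq \max\{\beta,\beta\log\frac{\beta}{\beta-1}\}$ for every vertex $u$, index $i$, and step $T$.

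First I handle the online vertices $v\in R$. By the one-sided assumption, when $v$ arrives it reveals all of its edges, and no edges incident to $v$ ever arrive again; thus the final step for $v$ is $T=A_v$. Observation~\ref{obs:load_arrival} gives immediately $\ell_v^{A_v}(i)\leq \beta$ for every $i$, so the load at any online vertex is at most $\beta$.

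Next I handle the offline vertices $u\in L$. Under one-sided arrivals, $A_u=0$, so $\delta_u^{A_u}=0$ and the ``small $\ktau$'' regime of \Cref{lem:bound_smalltau} is vacuous: every critical color $\ktau$ for $u$ satisfies $\ktau > \delta_u^{A_u}=0$, so \Cref{lem:bound_bigtau} applies uniformly and yields $\ell_u^T(\ktau)\leq \beta\log\frac{\beta}{\beta-1}$ at every critical color. As noted in the discussion preceding Equation~\eqref{eq:upperboundk}, it suffices to bound the load at critical colors in order to bound the maximum load (intuitively, any color attaining the running maximum at step $T$ is either critical at $T$ or carries a load already bounded at an earlier step, and one can formalize this by induction on $T$). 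Combining the two cases gives $\max_{u,i,T}\ell_u^T(i)\leq \max\{\beta,\beta\log\frac{\beta}{\beta-1}\}$, which is the claimed competitive ratio.

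Finally, to obtain the second part of the theorem I plug in $\beta=\frac{e}{e-1}$. Then $\beta-1 = \frac{1}{e-1}$, so $\frac{\beta}{\beta-1}=e$ and $\beta\log\frac{\beta}{\beta-1}=\beta$; hence the maximum of the two expressions equals $\beta=\frac{e}{e-1}$, giving the stated $\frac{e}{e-1}$-competitive guarantee. The only delicate step is the reduction from ``all colors'' to ``critical colors,'' which relies on the monotonicity properties of \Cref{obs:monotonicity}; since the ordering $\sigma_u$ is fixed once $u$ has arrived and loads are monotone nondecreasing in $T$, the global maximum load on $u$ over the course of the algorithm is always witnessed at a critical color at the step it is attained, and the bounds above apply verbatim.
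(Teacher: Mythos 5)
Your proof is correct and follows essentially the same route as the paper's: bound online vertices via \Cref{obs:load_arrival}, note that $A_u=\delta_u^{A_u}=0$ for offline vertices so only \Cref{lem:bound_bigtau} is needed at critical colors, and invoke the (paper-asserted) reduction from maximum load to critical colors before plugging in $\beta=\frac{e}{e-1}$. Your extra remark justifying the critical-color reduction (a maximal-load color ties with some critical color at the step the maximum is attained) only makes explicit what the paper leaves implicit.
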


\begin{proof}
	We bound the load after coloring of edge $(v,u)$, where $v\in R$ is the $T^{th}$ online neighbor of $u$.
	First, we bound the load for any color $i$ of $v$. By \Cref{obs:load_arrival}, we have $\ell_v(i) = \ell^{A_v}_v(i) \leq \beta$.
	For vertex $u$, we have $A_u = \delta^{A_u} = 0$. Thus, by \Cref{lem:bound_bigtau} we have that
	$\max_{i}\ell^T_u(i) \leq \beta\log\,\frac{\beta}{\beta - 1}$.
\end{proof}

Finally, in \Cref{sec:fractional-deferred} we bound our algorithm's competitive ratio on general graphs, proving that it is better than greedy.
\begin{thm}\label{thm: general upper bound}
	For any graph, \Cref{alg:cap_water_filling} is $ \beta^2 - \beta + \beta\log \frac{1}{\beta - 1}$ competitive. Setting $\beta = 1.586$, we obtain a $\generalUB$-competitive algorithm.
\end{thm}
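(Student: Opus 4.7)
The plan is to show that for every vertex $u$ and every step $T$, the maximum load $\max_i \ell_u^T(i)$ is at most $\beta^2 - \beta + \beta\log\frac{1}{\beta-1}$; rescaling all $x_{e,c}$ values by this factor then yields a feasible fractional edge coloring of $G$, giving the stated competitive ratio. Compared to the one-sided bipartite proof of \Cref{thm:bipartite-upper-bound}, the new ingredient is that for a previously-arrived vertex $u$ with $A_u > 0$, the dominating critical color could lie on either side of the threshold $\delta_u^{A_u}$, so both \Cref{lem:bound_bigtau} and \Cref{lem:bound_smalltau} need to be combined.

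The key reduction replaces ``bound $\max_i \ell_u^T(i)$'' with ``bound $\ell_u^{T'}(\ktau^*)$ for some critical color $\ktau^*$ at some step $T' \leq T$.'' Given $u$ and $T > A_u$, I would take $\ktau^*$ to be the \emph{largest} index achieving $\max_i \ell_u^T(i)$; using the convention $\ell^T(\delta^T+1) = 0$ and the monotonicity in \Cref{obs:monotonicity}, this ensures $\ell_u^T(\ktau^*) > \ell_u^T(\ktau^*+1)$. I would then take $T'$ to be the last step $\leq T$ at which $\ell_u^{T'}(\ktau^*)$ strictly increased over $\ell_u^{T'-1}(\ktau^*)$, which exists whenever $\ell_u^T(\ktau^*) > 0$ (and otherwise the bound is trivial). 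Since loads never decrease in $T$, I get $\ell_u^{T'}(\ktau^*+1) \leq \ell_u^T(\ktau^*+1) < \ell_u^T(\ktau^*) = \ell_u^{T'}(\ktau^*)$, so $\ktau^*$ is critical at step $T'$ with respect to $u$.

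Applying \Cref{lem:bound_bigtau} when $\ktau^* > \delta_u^{A_u}$ and \Cref{lem:bound_smalltau} when $\ktau^* \leq \delta_u^{A_u}$ yields
\[
\max_i \ell_u^T(i) \;=\; \ell_u^{T'}(\ktau^*) \;\leq\; \max\!\left\{\beta\log\tfrac{\beta}{\beta-1},\; \beta^2 - \beta + \beta\log\tfrac{1}{\beta-1}\right\}.
\]
A one-line check shows the second term dominates for $\beta \in (1,2)$: their difference equals $\beta(\beta - 1 - \log\beta)$, which vanishes at $\beta = 1$ and has derivative $1 - 1/\beta \geq 0$ on $[1,2)$. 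A similar calculation shows $\beta$ (the bound from \Cref{obs:load_arrival}, which covers $T = A_u$) is also dominated on this range. Numerically minimizing $\beta \mapsto \beta^2 - \beta + \beta\log\frac{1}{\beta-1}$ over $(1,2)$ gives a minimizer near $\beta \approx 1.586$ with value $\approx 1.777$. The only subtlety I anticipate is handling ties in the continuous water-filling process when defining $\ktau^*$ and $T'$, which can be dealt with by an infinitesimal perturbation or by consistently breaking ties via the arbitrary edge ordering used inside the while-loop of \Cref{alg:cap_water_filling}.
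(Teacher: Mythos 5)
Your proposal is correct and follows essentially the same route as the paper, which assembles this theorem from \Cref{obs:load_arrival}, \Cref{lem:bound_bigtau}, \Cref{lem:bound_smalltau}, and the comparison inequalities of \Cref{cl:beta} (your difference computation $\beta(\beta-1-\log\beta)\geq 0$ is exactly the paper's). Your explicit reduction of the maximum load to a color that is critical at the last step where its load increased is just a spelled-out version of the paper's one-line claim that bounding critical colors suffices, so there is nothing genuinely different to compare.
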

	\section{Online Rounding of Fractional Edge Coloring}\label{sec:online-rounding}

In this section we show how to round fractional edge-coloring algorithms' output online. 
Specifically, we will round fractional edge colorings provided by algorithms which assign at most 
some (small) value $\epsilon$ to each edge-color pair, which we refer to as  \emph{$\epsilon$-bounded} algorithms. (As we shall see, the optimal fractional algorithms we will plug into this rounding scheme both satisfy this property.)  We now state our main technical result of this section: a nearly-lossless rounding process for bounded algorithms on graphs with high enough lower bound on $\Delta$.

\begin{restatable}{thm}{roundingunknown}\label{rounding-unknown}
	For all $\alpha\in [1,2]$ and $\epsilon\leq 1$, if there exists an $\epsilon$-bounded $\alpha$-competitive fractional algorithm $\mathcal{A}$ for bipartite graphs with unknown maximum degree $\Delta\geq \Delta' \geq 2/\epsilon$, then there exists a randomized integral algorithm $\mathcal{A}'$ which is $(\alpha+O(\sqrt[\rootconstant]{(\log n)/\Delta'})$-competitive w.h.p on bipartite graphs of unknown maximum degree $\Delta\geq \Delta' \geq c\cdot \log n$ for some constant $c$.
\end{restatable}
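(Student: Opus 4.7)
}
The plan is to run $\mathcal{A}$ in \emph{phases}: in each phase we take the current residual (uncolored) subgraph, feed it to a fresh instance of the $\epsilon$-bounded fractional algorithm $\mathcal{A}$, round a random subset of its fractional matchings to actual matchings (i.e., colors) using the near-lossless online matching rounding of \cite{cohen2018randomized}, and then pass the remaining uncolored edges to the next phase. After roughly $\numphases$ phases, the residual max-degree will be $o(\Delta)$ w.h.p., and a final invocation of greedy will cost only $o(\Delta)$ additional colors, yielding an overall palette of $(\alpha+o(1))\Delta$.

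The first obstacle is that $\Delta$ is unknown, so we cannot decide a priori how many fractional matchings $\mathcal{A}$ will produce in a given phase. Following the technique outlined in \Cref{sec:tec}, for each phase I would fix a sampling rate $p$ in advance and independently mark \emph{every} possible color index $c\in\{1,2,\ldots,N\}$ (for some generous $N=\mathrm{poly}(n)$ upper bound) with probability $p$ at the start of that phase. Only fractional matchings indexed by a marked $c$ are passed to the online matching rounding subroutine (the rest of the fractional assignments produced by $\mathcal{A}$ are simply discarded for that phase, and their edges remain uncolored). Because marks are chosen independently of the adversary and before the run of $\mathcal{A}$, every matching produced by $\mathcal{A}$ in this phase is included independently with probability $p$.

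The second, and main, technical obstacle is to show that a single phase decreases the residual degree of every high-degree vertex from $d$ down to roughly $(1-p)d$ while using only $\approx \alpha\cdot p\cdot d$ colors. Fix a vertex $v$ of residual degree $d=\omega(\log n)$ at the start of the phase. Since $\mathcal{A}$ is $\alpha$-competitive and $\epsilon$-bounded, $v$'s edges are covered by at most $\alpha d$ fractional matchings, each contributing at most $\epsilon$ to any single edge-color pair. Conditioned on the marks, the probability that $v$ is matched by the rounding in a given sampled matching is essentially $L_v(c)\in[1/\alpha -O(\epsilon), 1]$, and crucially, by the bounded property, the probability that a \emph{fixed} edge $e\ni v$ is matched twice across the $\approx \alpha d p$ sampled matchings is $O((\alpha d p\cdot \epsilon)^2)=o(p)$. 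Summing these bounds (which is where I would invoke the nearly-lossless guarantee of \cite{cohen2018randomized} edge by edge) and applying a Chernoff/Azuma concentration bound, which is valid because $dp=\omega(\log n)$, I would argue that with probability $1-1/\mathrm{poly}(n)$, the number of newly-colored edges at $v$ is $(1\pm o(1))\cdot p d$ and the number of colors opened is $(1+o(1))\alpha p d$. A union bound over the $n$ vertices preserves this guarantee.

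Iterating this guarantee over the $\numphases$ phases, the residual degree shrinks by a factor of $(1-(1-o(1))p)$ per phase and thus drops below $p\Delta$ after $O((1/p)\log(1/p))$ phases, while the total number of colors used telescopes to at most $(\alpha+o(1))\Delta$. A final greedy stage on the residual uses at most $2p\Delta$ additional colors. The $12$-th root in the competitive ratio comes from optimizing $p$ against the several sources of slackness: I would set $p=(\log n/\Delta')^{1/c}$ for a constant $c$ large enough that (i) the concentration radius $\sqrt{\log n/(dp)}$ per phase is dominated, (ii) the telescoped $(1-p)^{\numphases}$ residual is $o(1)$, and (iii) the greedy tail contributes $o(1)\cdot \Delta$, and verify that $c=\rootconstant$ suffices. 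The condition $\Delta'\geq 2/\epsilon$ is what lets me treat $\epsilon$ as sufficiently small relative to $p$ so that the re-coloring probability above is genuinely $o(p)$, and $\Delta'\geq c\log n$ is exactly the regime in which the Chernoff bound survives the union bound over all vertices and phases.
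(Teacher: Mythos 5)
Your plan is essentially the paper's own algorithm and analysis of \Cref{rounding-unknown} (\Cref{alg:round-unknown}): pre-sample every potential color index independently with probability $p$ at the start of each \phase, round only the sampled fractional matchings of a fresh run of $\mathcal{A}$ on the residual graph via \marking (\Cref{peredge-general}), argue that each \phase uses $\approx\alpha\Delta_i p$ colors while decreasing the residual maximum degree by $\approx\Delta_i p$, telescope over $\numphases$ \phases, and finish with greedy on a residual of degree $O(p\Delta)$; these per-\phase claims are exactly \Cref{rand-num-colors-used} and \Cref{rand-max-deg-decrease}, and your choice of $p$ as a root of $(\log n)/\Delta'$ matches the paper's.

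Two steps in your sketch do not survive as written. First, your bound $O((\alpha d p\cdot\epsilon)^2)=o(p)$ for the probability that a fixed edge $e\ni v$ is matched twice in a \phase is false with the actual parameters: with $\epsilon\approx p^4/\log n$ and $d$ as large as $\Delta$ (up to $n$), that expression is typically $\gg 1$. The correct argument, as in \Cref{num-recolors}, uses that the fractional values of a residual edge sum to one over the colors of the \phase, so its \emph{sampled} load has mean $p$ (and is $O(p)$ w.h.p.\ by the paper's \Cref{load-concentration}), whence the expected number of re-matches of $e$ is $O(p^2)$ and $\E[R_v]=O(d_i(v)p^2)$; $\epsilon$-boundedness enters only through the concentration of the sampled loads and the near-losslessness of \marking, not through a pair-counting bound. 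Second, upgrading $R_v=O(d_i(v)p^2)$ to a w.h.p.\ statement cannot be done by "summing edge by edge" and invoking a plain Chernoff bound: the re-match indicators of the different edges at $v$ are dependent (at most one edge of $v$ is matched per color, and re-matches are correlated with matches). The paper handles this via negative association -- the zero-one rule, closure under independent union and monotone functions of disjoint variables, and a Chernoff bound for NUOD variables (\Cref{sec:na}); your passing mention of Azuma could be made rigorous instead (bounded differences over the independent per-color matchings, using $d_i(v)p^3=\Omega(\log n)$), but some such device is needed and must be stated. A minor further slip: your claim that each sampled matching matches $v$ with probability at least $1/\alpha-O(\epsilon)$ is neither justified (per-color loads at $v$ may be unbalanced) nor needed -- only the total load $d_i(v)$, hence sampled load $\approx d_i(v)p$, is used.
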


In the end of the section we show how to use this theorem to obtain a $(1+o(1))$-competitive for known $\Delta$.
For now, we note that plugging in our optimal fractional algorithm for unknown $\Delta$ into \Cref{rounding-unknown},\footnote{Strictly speaking, our optimal fractional algorithm, \Cref{alg:cap_water_filling}, is not $2/\Delta'$ bounded. However, setting our initial lower bound on $\Delta$ to be $\Delta'$ in \Cref{line:delta-init} yields a $2/\Delta'$-bounded solution without worsening the competitive ratio. (This is equivalent to adding a dummy star which does not increase the maximum degree.)} we get an optimal randomized algorithm for edge coloring graphs with unknown $\Delta$.

\begin{thm}\label{unknown-delta}
	There exists an $(\frac{e}{e-1} + O(\sqrt[\rootconstant]{(\log n)/\Delta'}))$-competitive algorithm for $n$-vertex bipartite graphs $G$ with unknown maximum degree $\Delta\geq \Delta' \geq c\cdot \log n$ for some absolute constant $c$.
\end{thm}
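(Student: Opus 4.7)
The plan is to derive \Cref{unknown-delta} as essentially a corollary: combine the optimal fractional algorithm for unknown $\Delta$ (\Cref{thm:bipartite-upper-bound}) with the online rounding scheme (\Cref{rounding-unknown}). The glue step is verifying that an appropriate variant of \Cref{alg:cap_water_filling} is simultaneously $\frac{e}{e-1}$-competitive and $\epsilon$-bounded with $\epsilon=2/\Delta'$, so that \Cref{rounding-unknown} applies with $\alpha=\frac{e}{e-1}$.

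First I would modify Line~\ref{line:delta-init} of \Cref{alg:cap_water_filling} so that it uses $\Delta \leftarrow \max\{\Delta', \max_{u\in V} d(u)\}$. Because each $x_{e,c}$ increase is capped at $\beta/\Delta$ and $\Delta \geq \Delta'$ throughout the run, this immediately yields $x_{e,c} \leq \beta/\Delta' \leq 2/\Delta'$ (using $\beta\in(1,2)$), i.e., the resulting fractional solution is $(2/\Delta')$-bounded. As noted in the excerpt's footnote, this modification is equivalent to pretending a dummy star of degree $\Delta'$ arrived at the start and then never interacted with the real graph; hence it does not worsen the competitive ratio. Formally I would observe that none of the ingredients used to prove \Cref{thm:bipartite-upper-bound} --- the monotonicity \Cref{obs:monotonicity}, the critical color lemmas \Cref{lem:tech_lem}, \Cref{lem:tau_min_index}, and \Cref{lem:tech_lem_all}, and the two per-vertex load bounds \Cref{lem:bound_bigtau}, \Cref{lem:bound_smalltau} --- is affected by replacing $\delta^t$ with $\max(\delta^t,\Delta')$, because the bounds involve only $\delta^t$ at steps where the load of the critical color actually grew, and the load of the currently-arriving vertex is still controlled by \Cref{obs:load_arrival} (with $\delta_v^{A_v}$ replaced by $\max(\delta_v^{A_v},\Delta')$, which still gives $\ell_v(i)\leq \beta$). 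Setting $\beta=\frac{e}{e-1}$ as in \Cref{thm:bipartite-upper-bound} thus yields an $\frac{e}{e-1}$-competitive, $(2/\Delta')$-bounded fractional algorithm for bipartite graphs with unknown $\Delta\geq \Delta'$.

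With this in hand, the theorem follows by invoking \Cref{rounding-unknown} with $\alpha=\frac{e}{e-1}$ and $\epsilon=2/\Delta'$. The hypothesis $\epsilon \le 1$ holds for $\Delta'\geq 2$, and the hypothesis $\Delta\geq \Delta'\geq 2/\epsilon$ is precisely our choice; the additional lower bound $\Delta' \geq c\log n$ needed by \Cref{rounding-unknown} is carried over directly into the statement. The conclusion is a randomized integral algorithm that is $\frac{e}{e-1}+O(\sqrt[\rootconstant]{(\log n)/\Delta'})$-competitive w.h.p., which is exactly \Cref{unknown-delta}.

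The only nontrivial step is the verification that plugging $\Delta'$ into Line~\ref{line:delta-init} preserves the analysis of \Cref{thm:bipartite-upper-bound}; since that analysis is already structured in terms of $\delta^t$ and the per-step increments of $\ell^t(\cdot)$, I expect this to be a short check rather than a substantive obstacle. Everything else is a black-box composition of \Cref{thm:bipartite-upper-bound} and \Cref{rounding-unknown}.
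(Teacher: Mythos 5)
Your proposal is correct and is essentially the paper's own proof: \Cref{unknown-delta} is obtained exactly by plugging \Cref{alg:cap_water_filling} (with \Cref{line:delta-init} initialized to $\Delta'$, the ``dummy star'' trick from the paper's footnote, giving $(2/\Delta')$-boundedness without hurting the $\frac{e}{e-1}$ ratio) into \Cref{rounding-unknown} with $\alpha=\frac{e}{e-1}$. Your explicit check that the load analysis of \Cref{thm:bipartite-upper-bound} survives replacing $\delta^t$ by $\max(\delta^t,\Delta')$, and that $2/\Delta'$ is at most the boundedness the rounding scheme needs when $\Delta'\geq c\log n$, simply spells out what the paper leaves to that footnote.
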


\noindent\textbf{Remark.} 
The algorithm of \Cref{unknown-delta} requires only a lower bound $\Delta'\leq \Delta$ for some $\Delta' = \omega(\log n)$ in order to output an $(\frac{e}{e-1}+o(1))\cdot \Delta$ coloring, and not the exact value of $\Delta$. 
Alternatively, our algorithm uses $(\frac{e}{e-1}+o(1))\cdot \max\{\Delta,\Delta'\}$ colors for \emph{any} unknown~$\Delta$, where the multiplicative approximation ratio is clearly only worse than $(\frac{e}{e-1}+o(1))$ for small $\Delta < \Delta'$ -- in which case the additive approximation term is only $O(\Delta')$. This result can therefore be read as an asymptotic approximation scheme, trading off between the additive term and the asymptotic competitive ratio.

To describe our rounding scheme, we need the following online rounding scheme of bounded fractional matchings, which motivates our study of bounded fractional edge colorings.

\begin{lem}[Per-Edge Guarantees \cite{cohen2018randomized}]
	\label{peredge-general}
	For all $\epsilon\in [0,1]$, there exists an online dependent rounding algorithm, \marking, which if presented online with a feasible fractional bipartite matching $\vec{x}$ with an (a priori) guarantee $\max_e x_e \leq \epsilon$, outputs a matching $\mathcal{M}$ which matches each edge $e$ with probability
	$$x_e\cdot \left(1-11\sqrt[3]{\epsilon \cdot \log (1/\epsilon)}\right) \leq \Pr[e\in \mathcal{M}] \leq x_e.$$
\end{lem}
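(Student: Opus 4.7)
I will prove this via a multi-phase rounding scheme. Pre-commit, independently of the input: choose $p := C(\log n/\Delta')^{1/\rootconstant}$ for a sufficiently large constant $C$ and independently place each of $2n$ reserved color indices into a set $S$ with probability $p$. Note $p=o(1)$ and $p\Delta' = \omega(\log n)$. Run phases $i=1,2,\ldots$: in each phase execute $\mathcal{A}$ on the current uncolored residual graph $G_i$ (of unknown max degree $\Delta_i$) to produce an $\epsilon$-bounded $\alpha$-competitive fractional coloring with $\leq \alpha\Delta_i+1$ color classes, indexed by a fresh range of color indices; for every sampled color index $c\in S$ falling in the current phase, apply \marking (\Cref{peredge-general}) to $\vec x_{\cdot,c}$ to obtain a matching $M_c$ and permanently assign color $c$ to each edge of $M_c$ not yet colored. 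Stop once $\Delta_i$ falls below a threshold $\tau = \Theta(\log n/p)$, then finish with greedy, which uses $\leq 2\tau-1 = o(\Delta')$ further colors.

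\textbf{One-phase bounds.} Consider a phase on a residual graph with max degree $\widehat\Delta \geq \tau$. The number of colors actually opened is $\mathrm{Bin}(\alpha\widehat\Delta+1, p)$, hence $(1\pm o(1))\alpha\widehat\Delta p$ w.h.p.~by Chernoff, using $p\widehat\Delta \geq p\tau = \omega(\log n)$. Fix a vertex $v$ of current degree $d_v\geq \tau$ and let $Y_c := \mathbf{1}[v \text{ is matched in color } c]$; these are independent across $c$ (sampling and \marking coins are independent), and by \Cref{peredge-general}
\[
\E[Y_c] \in \bigl[p\,L_v(c)\bigl(1 - \lossconstant\sqrt[3]{\epsilon\log(1/\epsilon)}\bigr),\; p\,L_v(c)\bigr],
\]
where $L_v(c) = \sum_{e\ni v}x_{e,c}$. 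Summing using $\sum_c L_v(c) = d_v$, the total count $Y_v := \sum_c Y_c$ of matches at $v$ has $\E[Y_v] = (1-o(1))\,p\,d_v$ (since $\epsilon\leq 2/\Delta'$ makes the \marking loss $O((\log\Delta'/\Delta')^{1/3}) = o(1)$). Chernoff on these independent Bernoullis gives $Y_v = (1\pm o(1))\,p\,d_v$ w.h.p. The number $N_v$ of \emph{distinct} newly-matched edges at $v$ differs from $Y_v$ only by the number of duplicate matches, whose expectation is $O(d_v p^2) = o(p\,d_v)$ using $\E[\#\{c:e\in M_c\}]\leq p$ for every edge $e$ (which follows from $\sum_c x_{e,c}=1$); hence $N_v = (1\pm o(1))\,p\,d_v$ w.h.p. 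A union bound over the $\leq n$ vertices of degree $\geq \tau$ gives $\Delta_{i+1} \leq (1 - (1-o(1))p)\Delta_i$ w.h.p., and vertices of degree $<\tau$ have residual degree $<\tau \leq \Delta_{i+1}$ automatically.

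\textbf{Aggregation.} Conditioning on success of earlier phases, iterating $T = O(\tfrac{1}{p}\log(\Delta/\tau))$ phases brings $\Delta_T < \tau$. The colors used telescope as a geometric sum:
\[
\sum_{i=1}^T (1+o(1))\,\alpha\,p\,\Delta_i \;\leq\; (1+o(1))\,\alpha\,p\,\Delta\sum_{i\geq 0}(1-p)^i \;=\; (1+o(1))\,\alpha\,\Delta,
\]
and final greedy on the residual adds at most $2\tau = o(\Delta')$ more. The $o(1)$ overhead aggregates $\max\{p,\,\sqrt[3]{\epsilon\log(1/\epsilon)},\,\tau/\Delta',\,\sqrt{\log n/(p\tau)}\}$, which for our choice of $p$ and $\tau$ equals $O((\log n/\Delta')^{1/\rootconstant})$ provided $\rootconstant$ is large enough to absorb all four sources. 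A final union bound over the $T = \poly(n)$ phases preserves the w.h.p.~guarantee.

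\textbf{Main obstacle.} The subtlest step is concentration of $N_v$: the events ``edge $e\ni v$ is matched in some sampled color'' are neither independent nor negatively correlated across $e$ (they couple through \marking's internal choices inside a single color class), so Chernoff cannot be applied directly to $N_v$. I sidestep this by concentrating the multiplicity-counting $Y_v$ instead (a sum of genuinely independent per-color indicators) and separately bounding $\E[Y_v-N_v] = O(d_v p^2)$ via $\sum_c x_{e,c}=1$; this is an $o(p d_v)$ perturbation, not large enough to break concentration. A secondary subtlety is the recursive dependence of each phase's residual graph on earlier random choices; this is handled by conditioning and a union bound over the polynomially many phases, since each phase's one-phase guarantee holds with probability $1 - 1/\poly(n)$ for any fixed input of the relevant max degree. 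Balancing the four error contributions to collapse to a single $O((\log n/\Delta')^{1/\rootconstant})$ term dictates the constant $\rootconstant$.
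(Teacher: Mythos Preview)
You have proved the wrong statement. \Cref{peredge-general} asserts the existence and per-edge guarantees of the \marking rounding primitive from \cite{cohen2018randomized}; it is quoted from that prior work and not re-proved in this paper. Your proposal does not construct or analyze \marking at all --- on the contrary, it \emph{invokes} \Cref{peredge-general} as a black box (``apply \marking (\Cref{peredge-general}) to $\vec x_{\cdot,c}$''), which is circular. What you have actually sketched is a proof of \Cref{rounding-unknown}, the theorem that \emph{uses} \Cref{peredge-general} to turn a bounded fractional algorithm into an integral one.

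Even read as a proof of \Cref{rounding-unknown}, there is a real gap in the duplicate-matches step. You concentrate $Y_v$ (matches counted with multiplicity) via genuine independence across colors, then claim $N_v = (1\pm o(1))\,p\,d_v$ w.h.p.\ because $\E[Y_v - N_v] = O(d_v p^2)$. But an expectation bound on $Y_v - N_v$ does not transfer a high-probability bound from $Y_v$ to $N_v$: you need $Y_v - N_v \leq o(p\,d_v)$ \emph{with high probability}, not merely in expectation. The paper confronts exactly this issue in \Cref{num-recolors}: the re-match count $R_v = Y_v - N_v$ is a sum of dependent variables (the per-edge re-match indicators couple through \marking's internal choices within a color and through sharing the endpoint $v$), and the paper obtains concentration by showing these indicators are negatively associated (via the zero-one rule, closure under independent union, and closure under monotone functions of disjoint variables), then applying a Chernoff-type tail bound for NA sums. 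Without that or an equivalent w.h.p.\ argument, your $N_v$ bound --- and hence the per-phase degree-drop guarantee and the whole telescoping sum --- is unjustified.
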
\color{black}

We now outline our rounding scheme, which consists of \phases, as follows.
For each \phase $i$, let $U_i$ be the uncolored graph at start of \phase $i$. (Initially, $U_1=G$.) We compute an $\alpha$-competitive fractional edge coloring in $U_i$ online. Upon the algorithm's initialization, we sample each of the possible $\alpha\cdot n$ fractional matchings of this fractional coloring, i.i.d with probability $p$. 
We then round and color the sampled fractional matchings in an online fashion, as follows.
Whenever a \emph{sampled} fractional matching becomes non trivial, we assign it a new color. Whenever a new vertex $v$ arrives, for each \phase $i$ in increasing order, we run the next step of \marking for each of the sampled fractional matchings of \phase $i$'s fractional coloring, and color all newly-matched edges with the color assigned to the relevant fractional matching. Finally, we greedily color the remaining uncolored edges of $v$. Setting $p=o(1)$ (guaranteeing few re-colors) and also satisfying $\Delta \cdot p = \omega(\log n)$ (in order to have concentration up to $(1\pm o(1))$ factors on number of colors used), this approach will use roughly $p\cdot \alpha \cdot \Delta(U_i)$ colors for the $i^{th}$ \phase, while decreasing the uncolored subgraph's maximum degree by roughly $p \cdot \Delta(U_i)$, or a $(1-p)$ factor. 
Thus, using $(1/p)\log (1/p)$ \phases yield an uncolored subgraph of maximum degree $p\cdot \Delta$ (using $\alpha\cdot \Delta$ colors), which the greedy algorithm colors using $2p\cdot \Delta$ new colors. This implies \Cref{rounding-unknown}.
    
\subsection{Our Online Rounding Scheme}\label{sec:unknown-delta-algorithm}

Our online rounding scheme, given an $\eps$-bounded fractional edge-coloring algorithm $\mathcal{A}$ which is $\alpha$~competitive on graphs of maximum degree at least $2/\eps$, for $\eps=p^4/(12\log n)$, works as follows. 
Let $p \triangleq \sqrt[\rootconstant]{24(\log n)/\Delta'}$. We use $P \triangleq \numphases$ many \phases.
For \phase $i$, we sample in advance a subset $\mathcal{S}_i$ of all possible color indices, each taken into $\mathcal{S}_i$ with probability $p$. 
Let $U_i$ be the subgraph of edges not colored before \phase $i$. When online vertex $v$ arrives, for each \phase 
$i\in [P]$, we update a fractional coloring $x^{(i)}$ using Algorithm $\mathcal{A}$, based on $v$'s arrival in $U_i$. 
For all sampled $j\in \mathcal{S}_i$ for which $x^{(i)}_j$ (the $j^{th}$ fractional matching of $x^{(i)}$) is non trivial, we use a distinct color $c_{i,j}$ to color edges of a matching $M_{i,j}$ computed online by running \marking on $x^{(i)}_j$. Finally, all remaining uncolored edges of $v$ are greedily colored
 using new colors. This is \Cref{alg:round-unknown}, below.

\begin{algorithm}[t] 
	\caption{Randomized Edge Coloring for Unknown $\Delta$}
	\label{alg:round-unknown}
	
	\begin{algorithmic}[1]
		\Require Online $n$-vertex bipartite graph $G(L,R,E)$ with $\Delta \geq \Delta' \geq c \cdot \log n$, for $c$ a constant TBD. \Statex Parameter $p \triangleq \sqrt[\rootconstant]{(24\log n)/\Delta'}(\leq 1/10)$. An $\epsilon$-bounded fractional online edge-coloring~algorithm~$\mathcal{A}$ which is $\alpha$ competitive on graphs $U$ with $\Delta(U) \geq 2/\epsilon$, for $\epsilon \triangleq (p^4/12\log n)$.
		
		\Ensure Integral $(\alpha+O(p))\cdot \Delta$ edge coloring, w.h.p.
	
		\State for all $i$, set $\mathcal{S}_i\subseteq \lceil \alpha \cdot n\rceil$ to be such that each $j\in \lceil \alpha \cdot n\rceil$ is in $\mathcal{S}_i$ independently with probability~$p$.
		\State for all $i$, denote by $U_i$ the \textbf{online} subgraph of $G$ not colored during \phases  $1,2,\dots,i-1$.

		\For{each arrival of a vertex $v\in R$}
		\For{\phase $i=1,2,\dots,\numphases$} 
		\State $x^{(i)}\leftarrow$ output of Algorithm $\mathcal{A}$ on \textbf{current} $U_i$. \Comment{run next step of $\mathcal{A}$}
		\For{$j \in \mathcal{S}_i$ with $x^{(i)}_{j} \neq \vec{0}$}
		\If{$c_{i,j}$ not set}
		\State set $c_{i,j}$ to be the next unassigned color index.
		\EndIf 
		
		\State $M_{i,j}\leftarrow $ output of \marking run on \textbf{current} $x^{(i)}_{j}$. \Comment{run next step of \marking}
		\If{an edge $e\in M_{i,j}$ is previously uncolored} \Comment{note: $e\ni v$}
		\State color $e$ using color $c_{i,j}$.
		\EndIf
		\EndFor 
		\EndFor
		\State run greedy on uncolored edges of $v$, using colors not assigned during the \phases. \label{line:greedy-unknown}
		\EndFor
	\end{algorithmic}
\end{algorithm}
\vspace{-0.3cm}

\subsection{Analysis}\label{sec:unknonwn-delta-analysis}
We will study changes in the uncolored graph between subsequent \phases and the colors used during the \phases.
For each $i$, let $\Delta_i \triangleq \Delta(U_i)$ be the maximum degree of the online graph not colored by \phase $1,2,\dots,i-1$.
In this section we will show that during each \phase $i$, provided $\Delta_i$ is sufficiently large,  \Cref{alg:round-unknown} uses some $\alpha\cdot \Delta_i\cdot p(1+O(p))$ new colors w.h.p., and obtain an uncolored subgraph $U_{i+1}$ of maximum degree $\Delta_{i+1}= \Delta_i \cdot (1-p \pm O(p^2))$ w.h.p. This will imply a degree decrease at a rate of one per $\alpha+O(p)$ colors used. Repeating this for $\numphases$ \phases, will therefore require $(\alpha+O(p))\Delta$ colors and yield a subgraph of maximum degree $p\cdot \Delta$, which we color greedily with $O(p)\Delta$ new colors, implying \Cref{rounding-unknown}.

To upper bound the number of colors used in \phase $i$, 
we note that the number of non-trivial (i.e., not identically zero) fractional matchings we round in each \iteration is clearly a $p$-fraction of the (at most $\lceil \alpha\cdot \Delta_i \rceil$) non-trivial colors of $x^{(i)}$. Therefore, by standard Chernoff bounds (\Cref{chernoff}), if $\Delta_i$ is large enough, the number of colors in the \phase is small, w.h.p.

\begin{restatable}{lem}{unknownnumcolors}\label{rand-num-colors-used}
	If $\Delta_i \geq (6\log n) / p^3$, then $C_i$, the number of colors used in \phase $i$, satisfies
	$$\Pr\left[C_i\geq \alpha \Delta_i\cdot p \cdot \left(1+p\right)\right] \leq \frac{1}{n^2}.$$
\end{restatable}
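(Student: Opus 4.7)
The plan is to bound $C_i$ by a binomial random variable and then apply a standard multiplicative Chernoff bound.

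First I would observe that $C_i$ counts exactly those indices $j \in \mathcal{S}_i$ for which the $j^{th}$ fractional matching $x^{(i)}_j$ of the fractional coloring produced by $\mathcal{A}$ on $U_i$ ever becomes non-trivial during phase $i$. Since $\mathcal{A}$ is $\alpha$-competitive on $U_i$, which has maximum degree $\Delta_i$, at most $\lceil \alpha \Delta_i \rceil$ of the color indices $j \in \lceil \alpha n \rceil$ have non-trivial fractional matchings at any point. Call this set of ``live'' indices $\mathcal{L} \subseteq \lceil \alpha n \rceil$; note that $\mathcal{L}$ is determined by the adversary together with $\mathcal{A}$'s internal choices, but crucially it does \textbf{not} depend on $\mathcal{S}_i$ (since $\mathcal{A}$ is not told which indices are sampled). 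Thus
\[
C_i = |\mathcal{S}_i \cap \mathcal{L}| \leq |\mathcal{S}_i \cap \mathcal{L}|,
\]
and conditioned on any realization of $\mathcal{L}$ of size $m \leq \lceil \alpha \Delta_i \rceil$, the random variable $|\mathcal{S}_i \cap \mathcal{L}|$ is distributed as $\mathrm{Bin}(m, p)$, since each index of $\lceil \alpha n \rceil$ (and in particular each index in $\mathcal{L}$) lies in $\mathcal{S}_i$ independently with probability $p$.

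Next I would apply the standard multiplicative Chernoff bound (the paper's \Cref{chernoff}): for $X \sim \mathrm{Bin}(m,p)$ with $\mu = \mathbb{E}[X] = mp$ and any $\delta \in (0,1)$,
\[
\Pr[X \geq (1+\delta) \mu] \leq \exp\!\left(-\delta^2 \mu / 3\right).
\]
Taking $\delta = p \leq 1/10$ and $m \leq \lceil \alpha \Delta_i \rceil \leq \alpha \Delta_i + 1$, and using $\mu \leq \alpha \Delta_i p + p$, we want to upper bound $\Pr[C_i \geq \alpha \Delta_i p (1+p)]$. A small technical point is that the stated threshold is $\alpha \Delta_i p (1+p)$, not $(1+p)\lceil \alpha \Delta_i\rceil p$, but the slack from rounding is absorbed easily since $p \cdot 1 = p$ is tiny compared to the multiplicative slack $\alpha\Delta_i p \cdot p$; more formally, monotonicity in $m$ gives that the tail of $\mathrm{Bin}(m,p)$ is dominated by that of $\mathrm{Bin}(\lceil \alpha \Delta_i\rceil,p)$, and a direct Chernoff at expectation $\alpha \Delta_i p$ with deviation factor $(1+p)$ suffices (the $+1$ rounding contributes a lower-order term to the mean which I can handle either by padding $p$ slightly or by using the hypothesis $\Delta_i \geq (6\log n)/p^3$ to absorb it).

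Finally, I would plug in $\mu \geq \Delta_i p \geq 6(\log n)/p^2$ (using $\alpha \geq 1$ and the hypothesis on $\Delta_i$) to obtain
\[
\Pr\bigl[C_i \geq \alpha \Delta_i p (1+p)\bigr] \leq \exp\!\left(-p^2 \cdot \mu / 3\right) \leq \exp\!\left(- p^2 \cdot 6(\log n)/p^2 / 3\right) = \exp(-2\log n) = 1/n^2,
\]
which is the claim. The main ``obstacle'' is really bookkeeping: one must be careful that the set $\mathcal{L}$ of non-trivial color indices in $x^{(i)}$ is independent of the random sample $\mathcal{S}_i$ (so that $|\mathcal{S}_i \cap \mathcal{L}|$ genuinely is binomial, not some worse-correlated variable), and that the ceiling in $\lceil \alpha \Delta_i\rceil$ and the use of $\alpha \geq 1$ do not swamp the multiplicative slack $p$; both are easily dealt with using the assumption $\Delta_i \geq 6(\log n)/p^3$ and $p \leq 1/10$.
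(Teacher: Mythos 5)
Your proof is correct and follows essentially the same route as the paper: bound the number of non-trivial colors of $x^{(i)}$ by roughly $\alpha\Delta_i$, observe that $C_i$ is (dominated by) a binomial with success probability $p$ over these indices, and apply the multiplicative Chernoff bound of \Cref{chernoff} with $\epsilon=p$, using $\Delta_i\geq (6\log n)/p^3$ to get the $1/n^2$ tail. Your extra bookkeeping (independence of the non-trivial color set from $\mathcal{S}_i$, and the ceiling on $\alpha\Delta_i$) is handled implicitly in the paper and does not change the argument.
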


\Cref{rand-num-colors-used} upper bounds the number of colors used in \phase $i$ by $\alpha\Delta_i\cdot p\cdot (1+p)$. Our main technical lemma, below, whose full proof is deferred to \Cref{sec:unknown-delta-deferred}, asserts that these colors result in a decrease of roughly $\Delta_i \cdot p$ in the uncolored subgraph's maximum degree during the \phase.

\begin{restatable}{lem}{unknowndegdecrease}\label{rand-max-deg-decrease}
	If $\Delta_i\geq (24\log n)/p^4$, then 
\begin{enumerate}
	\item 
	$\Pr\left[\Delta_{i+1}\leq \Delta_i\cdot (1-p-4p^2)\right] \leq 3/n^3$.
	\item $\Pr\left[\Delta_{i+1}\geq \Delta_i\cdot (1-p+7p^2)\right] \leq 6/n^2$.
\end{enumerate}
\end{restatable}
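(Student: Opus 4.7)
The plan is to fix any vertex $v$ and tightly control the random variable $Z_v$, defined as the number of $v$'s edges colored during phase $i$, both in expectation and via a concentration inequality. Once such control is established, the two parts of the lemma fall out immediately: the lower bound on $\Delta_{i+1}$ by applying the bound to a single vertex $v^\star$ with $d_i(v^\star)=\Delta_i$, and the upper bound by a union bound over all vertices with $d_i(v)>(1-p+7p^2)\Delta_i$ (smaller-degree vertices are trivially fine since $d_{i+1}(v)\le d_i(v)$).

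To compute the expectation, I would observe that an edge $e\ni v$ in $U_i$ is colored during phase $i$ iff it appears in $\bigcup_{j\in\mathcal{S}_i} M_{i,j}$, so $Z_v = \sum_{e\ni v,\,e\in U_i} W_e$ where $W_e = \mathbf{1}[e\in \bigcup_j M_{i,j}]$. Conditioning on $U_i$ freezes the fractional coloring $x^{(i)}$ (since $\mathcal{A}$ is deterministic) and leaves only the independent random pairs $\omega_j = (\mathbf{1}[j\in\mathcal{S}_i],\,R_j)$, where $R_j$ is the internal randomness used by \marking on $x^{(i)}_j$. By \Cref{peredge-general}, conditional on $j\in\mathcal{S}_i$, each edge $e$ lies in $M_{i,j}$ with probability $q_{e,j}\in[(1-\delta)x^{(i)}_{e,j},\,x^{(i)}_{e,j}]$, where $\delta = 11\sqrt[3]{\epsilon\log(1/\epsilon)} = O(p^{4/3})$ for $\epsilon=p^4/(12\log n)$. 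Independence across $j$ gives $\Pr[W_e=0]=\prod_j(1-p\,q_{e,j})$, and combining $\sum_j x^{(i)}_{e,j}=1$ with the elementary bounds $\sum a_j - \tfrac12(\sum a_j)^2 \le 1-\prod(1-a_j)\le \sum a_j$ yields $\Pr[W_e=1]\in[p-p\delta-p^2/2,\,p]\subseteq[p-O(p^2),\,p]$. Summing over $e\ni v$ gives $\mathbb{E}[Z_v]\in[d_i(v)(p-O(p^2)),\,d_i(v)\cdot p]$.

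For concentration, I would apply McDiarmid on the product space of the $\omega_j$. Since each $M_{i,j}$ is a matching, it contains at most one edge of $v$; perturbing a single $\omega_j$ thus removes and/or adds at most one edge of $v$ to $\bigcup_{j'} M_{i,j'}$, so $Z_v$ has bounded difference $2$ in each of the at most $\lceil \alpha\Delta_i\rceil$ non-trivial coordinates. McDiarmid then yields $\Pr[|Z_v-\mathbb{E}[Z_v]|\ge Cp^2\Delta_i] \le 2\exp(-C^2 p^4 \Delta_i/(4\alpha))$, and plugging in $\Delta_i\ge 24\log n/p^4$ and $\alpha\le 2$ makes this at most $2\exp(-3C^2\log n)$, which is $o(1/n^3)$ for any $C\ge 2$.

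Putting things together, $Z_v = p\cdot d_i(v)\pm O(p^2)\Delta_i$ holds with probability $1-1/n^{\Omega(1)}$ for each fixed $v$. Applied to a vertex $v^\star$ with $d_i(v^\star)=\Delta_i$, the upper tail of $Z_{v^\star}$ gives $\Delta_{i+1}\ge d_{i+1}(v^\star)\ge(1-p-4p^2)\Delta_i$ with failure probability $\le 3/n^3$, yielding part~(1); the lower tail, combined with a union bound over the at most $n$ vertices with $d_i(v)>(1-p+7p^2)\Delta_i$, gives $d_{i+1}(v)\le(1-p+7p^2)\Delta_i$ for all such $v$ with failure probability $\le 6/n^2$, yielding part~(2). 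The main obstacle will be packing all the $O(p^2)$ error terms (the $p\delta$ and $p^2/2$ contributions to the mean, plus the McDiarmid deviation) inside the stated slack constants $4p^2$ and $7p^2$; the crucial quantitative fact making this possible is $\delta = O(p^{4/3})$, so $p\delta = O(p^{7/3}) = o(p^2)$ is safely absorbed, leaving essentially only $p^2/2$ and $Cp^2$ to handle with a small constant $C$.
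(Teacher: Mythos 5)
Your proposal is correct, but it takes a genuinely different route from the paper's. The paper decomposes the degree drop of a heavy vertex $v$ as $M_v - R_v$ (total matches minus re-matches along already-colored edges): it first shows the sampled load $\ell_v$ concentrates via a Chernoff bound over the choice of $\mathcal{S}_i$, then bounds $M_v$ by a second Chernoff bound over the independent \marking runs conditioned on that load, and controls $R_v$ with negative-association machinery (the 0--1 rule, closure properties, and a Chernoff bound for NA variables), concluding $D_v = M_v - R_v = d_i(v)\cdot p(1\pm O(p))$ w.h.p. You instead work directly with the number of distinct edges of $v$ colored in the phase, $Z_v=\sum_{e\ni v} W_e$: the inclusion--exclusion computation $\Pr[W_e=1]=1-\prod_j(1-p\,q_{e,j})\in[p-p\delta-p^2/2,\,p]$ absorbs the re-match issue already at the level of expectations, and a single McDiarmid bound over the independent per-color coordinates $(\mathbf{1}[j\in\mathcal{S}_i],R_j)$ --- with bounded differences because each $M_{i,j}$ is a matching and hence touches at most one edge of $v$ --- replaces both the two-stage conditioning and the NA argument. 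The hypothesis $\Delta_i\geq (24\log n)/p^4$ enters your argument exactly where it enters the paper's: it makes an $O(p^2\Delta_i)$ deviation affordable against a variance proxy of order $\alpha\Delta_i$. Your version is arguably cleaner, avoiding the negative-dependence toolkit of the appendix entirely; the paper's decomposition yields per-vertex bounds relative to $d_i(v)$ rather than $\Delta_i$, but since the lemma only concerns vertices of degree at least about $\Delta_i/2$, this buys nothing here. Minor points to tighten in a write-up: the bounded difference is in fact $1$, not $2$ (changing one $\omega_j$ swaps at most one matched edge of $v$ for another); you should make explicit that you condition on $U_i$ (and on $\mathcal{A}$'s coins, if it is randomized) so that $x^{(i)}$ is frozen and the $\omega_j$ are independent of it, with only the at most $\lceil\alpha\Delta_i\rceil$ non-trivial color indices carrying nonzero differences; and the constant bookkeeping indeed closes even with the paper's cruder bound $\delta\leq 3p$ (e.g.\ $C=2$ gives $p\delta+p^2/2+Cp^2<7p^2$ for part (2) and $p+Cp^2\leq p+4p^2$ for part (1)), so the sharper $\delta=O(p^{4/3})$ estimate is welcome but not needed.
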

\begin{proof}[Proof Sketch]
	Let $v$ be a vertex of degree $d_i(v) \geq \Delta_i/2$ in $U_i$. By \Cref{peredge-general} and the $\epsilon$-boundedness of the fractional algorithm $\mathcal{A}$ (and some simple calculations), each edge $e\in U_i$ is matched in $M_{i,j}$ ($j\in \mathcal{S}_i$) with probability $x^{(i)}_{e,j} \cdot (1-O(p))\leq \Pr[e\in M_{i,j}] \leq x^{(i)}_{e,j}$.
	That is, we match $e$ in $M_{i,j}$ with probability close to its sampled ``load'' for this color.
	By Chernoff bounds, as we sample each color of $x^{(i)}$ with probability $p$, the sampled load on $v$'s edges is $d_i(v) \cdot p (1 \pm O(p))$ w.h.p. 
	So, by linearity and another Chernoff bound, the number of times $v$ is matched during the $i^{th}$ \phase satisfies $M_v \leq d_i(v)\cdot p(1+O(p))^2\leq d_i(v)\cdot p(1+O(p))$,
	and $M_v\geq d_i(v)\cdot p(1-O(p))^3 \geq d_i(v)\cdot p (1-O(p))$.
	
	However, $M_v$ also counts repeated matchings of edges of $v$, which do not contribute to $v$'s degree decrease in the uncolored subgraph. We therefore want to bound $R_v$ -- the number of times a previously-colored edge of $v$ is matched during the \phase. 
	By Chernoff's bound and $\epsilon$-boundedness of the fractional algorithm, the load on each edge in the sampled colors $\mathcal{S}_i$, which in expectation is precisely $p$, is $O(p)$ w.h.p.
	So, intuitively, we would expect $R_v = \Theta(p) \cdot M_v$ w.h.p., implying $R_v = \Theta(d_i(v)\cdot p^2)$ w.h.p. Of course, as re-matches are not independent of matches, we cannot simply multiply these expressions this way. However, relying on the theory of negative association (see \Cref{sec:na}), the intuitive claim that $R_v = \Theta(d_i(v)\cdot p^2)$ w.h.p.~can be formalized. We conclude that the degree decrease of vertex $v$ in the uncolored graph during the $i^{th}$ \phase is $M_v - R_v = d_i(v)\cdot p\cdot (1-\Theta(p))$ w.h.p. Taking union bound over all vertices $v$, 
	the lemma follows.
\end{proof}

\Cref{rounding-unknown} now follows from \Cref{rand-num-colors-used} and  \Cref{rand-max-deg-decrease}. We sketch a proof of this theorem and defer its full proof to \Cref{sec:unknown-delta-deferred}.
\begin{proof}[Proof of \Cref{rounding-unknown} (Sketch)]
	Clearly, \Cref{alg:round-unknown} colors all edges of $G$, due to \Cref{line:greedy-unknown}. By definition, all color classes computed are matchings. As we shall show, the number of colors used during the \phases is at most $(\alpha+O(p))\cdot \Delta$ w.h.p., and the greedy algorithm requires some $O(p)\cdot \Delta$ colors w.h.p., implying our claimed result. We outline this proof using a stronger claim than \Cref{rand-max-deg-decrease}.
	
	Suppose instead of \Cref{rand-max-deg-decrease} we had that with high probability $\Delta_{i+1} = \Delta_i\cdot (1-p)$. Then, by induction we would have $\Delta_i = \Delta\cdot (1-p)^i$ and in particular for all $i\leq (1/p)\log (1/p)$ we would have $\Delta_i \geq \Delta \cdot p \geq \Delta'\cdot p$. Taking $p\geq \sqrt[5]{(24\log n)/\Delta'}$ would therefore imply that $\Delta_i \geq \Delta'\cdot p \geq (24\log n)/p^4$, which in turn would allow us to appeal to union bound to prove that $\Delta_i = \Delta\cdot (1-p)^i$ for all $i$, or in other words $\Delta_{i} - \Delta_{i+1} = \Delta_i\cdot p$, and that the number of colors used in each \phase $i$ is at most $C_i\leq \alpha\cdot \Delta_i \cdot p\cdot (1+p)$. Summing over all \phases, this would imply that w.h.p., the number of colors used during the \phases is 
	\begin{align*}
	\sum_i C_i & \leq \sum_i (\alpha + p(1+p)) \cdot (\Delta_{i}-\Delta_{i+1}) \leq (\alpha + p(1+p))\cdot \Delta_0 = (\alpha + p(1+p))\cdot \Delta.
	\end{align*}
	On the other hand, after $(1/p)\log (1/p)$ \phases we would get a final uncolored subgraph of maximum degree  $\Delta\cdot (1-p)^{(1/p) \log (1/p)} \approx \Delta\cdot p$ w.h.p., and so the greedy step of \Cref{line:greedy-unknown} would use at most $2\Delta\cdot p$ colors. Overall,  \Cref{alg:round-unknown} therefore uses at most $(\alpha+O(p))\cdot \Delta$ colors for $p=O(\sqrt[5]{(\log n)/\Delta'})$ and $\Delta\geq 24\log n$. Our more involved bounds are due to the slightly looser bounds for $\Delta_{i+1}$ in terms of $\Delta_i$ in \Cref{rand-max-deg-decrease}. See full proof in \Cref{sec:unknown-delta-deferred} for details.
\end{proof}	

\paragraph{Applications to Known $\Delta$.} 
\Cref{alg:round-unknown} finds applications for \emph{known} $\Delta$, too. In particular, by \Cref{rand-max-deg-decrease} we find that if in each phase $i$ we assign value $1/((1-p+7p^2)^i\cdot \Delta)$ for each edge-color pair, then we obtain a feasible coloring w.h.p., requiring $(1-p+7p^2)^i\cdot \Delta$ colors when the maximum degree is at least $(1-p-4p^2)^i\cdot \Delta$, w.h.p.; i.e., this is a $(1+O(p^2))$-competitive fractional algorithm for uncolored subgraph $U_i$. 
Replacing algorithm $\mathcal{A}$ in \Cref{alg:round-unknown} with this approach then yields, as in the proof of  \Cref{rounding-unknown}, an optimal randomized algorithm for known $\Delta$.
\begin{thm}\label{known-delta-from-rounding}
	There exists a $(1 + O(\sqrt[\rootconstant]{(\log n)/\Delta}))$-competitive algorithm for $n$-vertex bipartite graphs $G$ with known maximum degree $\Delta \geq c\cdot \log n$ for some absolute constant $c$.
\end{thm}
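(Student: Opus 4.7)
The plan is to instantiate the online rounding scheme of \Cref{alg:round-unknown} with a new, phase-adaptive fractional algorithm $\mathcal{A}$ that exploits knowledge of $\Delta$. In phase $i$, on the residual graph $U_i$, algorithm $\mathcal{A}$ uses exactly $k_i \triangleq \lceil (1-p+7p^2)^i \cdot \Delta \rceil$ colors, assigning $x^{(i)}_{e,c} = 1/k_i$ to each arriving edge $e$ for every $c \in [k_i]$. This is simply the trivial fractional coloring from \Cref{sec:fractional-def}, rescaled in each phase to the expected maximum degree of $U_i$ predicted by \Cref{rand-max-deg-decrease}.

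First I would verify the two structural properties that the rounding scheme demands of $\mathcal{A}$. Feasibility follows from \Cref{rand-max-deg-decrease}(2) together with a union bound over all $P = \numphases$ phases: with probability $1 - O(P/n^2) = 1-o(1)$, we have $\Delta(U_i) \leq (1-p+7p^2)^i \Delta \leq k_i$ for every $i$, so the per-vertex load per color, $\Delta(U_i)/k_i$, is at most $1$. For $\epsilon$-boundedness with $\epsilon = p^4/(12\log n)$, I would note that even the smallest $k_i$ is of order $\Theta(p^4 \Delta)$, so $1/k_i \leq \epsilon$ whenever $\Delta \geq c\log n$ for a sufficiently large constant $c$ (using $p = \sqrt[\rootconstant]{(24\log n)/\Delta}$); the same calculation also yields $\Delta_i \geq 2/\epsilon$, as required by the rounding scheme.

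Next I would combine the per-phase fractional efficiency with the existing rounding analysis. By \Cref{rand-max-deg-decrease}(1), $\Delta(U_i) \geq (1-p-4p^2)^i \Delta$ w.h.p., so $k_i / \Delta(U_i) \leq 1 + O(p^2)$, i.e., $\mathcal{A}$ is $(1+O(p^2))$-competitive on $U_i$. Plugging this phase-local $\alpha_i = 1 + O(p^2)$ into \Cref{rand-num-colors-used}, the number of new colors opened in phase $i$ is at most $C_i \leq \alpha_i \Delta_i \cdot p(1+p) = \Delta_i \cdot p(1 + O(p))$ w.h.p. Paired with the degree drop $\Delta_i - \Delta_{i+1} \geq \Delta_i \cdot p(1 - O(p))$ and telescoping as in the proof of \Cref{rounding-unknown}, this yields $\sum_i C_i \leq (1+O(p)) \Delta$. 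After $P$ phases, \Cref{rand-max-deg-decrease}(2) gives $\Delta(U_{P+1}) \leq \Delta \cdot (1-p+7p^2)^P = O(p\Delta)$ w.h.p., so the final greedy pass of \Cref{line:greedy-unknown} contributes at most $O(p\Delta)$ additional colors. Altogether $\mathcal{A}$ uses $(1+O(p))\Delta = (1+O(\sqrt[\rootconstant]{(\log n)/\Delta}))\Delta$ colors w.h.p.

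The main obstacle I foresee is bookkeeping rather than any new idea: one must ensure that every high-probability event — feasibility of $\mathcal{A}$ in every phase, the upper and lower bounds on $\Delta_i$ from \Cref{rand-max-deg-decrease}, and the color-count bound of \Cref{rand-num-colors-used} — holds simultaneously for all $i \leq P$. Each failure probability is $O(1/n^2)$ per phase, and $P = o(n)$, so a union bound suffices, provided the degree hypothesis $\Delta_i \geq (24\log n)/p^4$ is preserved inductively (which it is, by the choice of $p$ and of $c$). No new concentration argument is needed beyond what already appears in the proof of \Cref{rounding-unknown}; the only novelty is the phase-adaptive trivial algorithm $\mathcal{A}$ and the observation that its per-phase competitive ratio is $1+O(p^2)$ rather than $\frac{e}{e-1}$.
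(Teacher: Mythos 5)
Your proposal is correct and is essentially the paper's own proof: the paragraph ``Applications to Known $\Delta$'' instantiates \Cref{alg:round-unknown} with exactly this phase-adaptive trivial fractional algorithm (value $1/((1-p+7p^2)^i\cdot\Delta)$ per edge-color pair in phase $i$), invokes \Cref{rand-max-deg-decrease} for feasibility and per-phase near-optimality, and concludes by the telescoping argument of \Cref{rounding-unknown}, just as you do. The one caveat you inherit from the paper is that the per-phase ratio $\bigl((1-p+7p^2)/(1-p-4p^2)\bigr)^i$ compounds to $1+O(p\log(1/p))$ over the $\numphases$ phases rather than remaining $1+O(p^2)$, but this is the paper's own imprecision and is harmless for the claimed $1+o(1)$ guarantee.
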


In this section we provided optimal online edge coloring algorithms for known and unknown $\Delta$. In \Cref{sec:known-delta} we improve the $o(1)$ term in the $1+o(1)$ competitive ratio for known $\Delta$. In the following section we present our lower bounds for known and unknown $\Delta$, proving the optimality of our fractional and randomized algorithms, up to $o(1)$ terms.
	\section{Lower Bounds}\label{sec:hardness}
In this section we present our lower bounds for online edge coloring. We start with by noting that for known $\Delta$, the competitive ratio of $(1+o(1))$ we obtain is optimal (up to the exact $o(1)$ term).\footnote{A similar argument implies that $1+o(1)$ competitiveness is impossible on arbitrary multigraphs. See \Cref{sec:multigraphs}.}

\begin{obs}\label{lb-known}
	No randomized online edge coloring algorithm is $(1+o(1/\sqrt{\Delta}))$-competitive.
\end{obs}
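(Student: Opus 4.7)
The plan is to apply Yao's minimax principle: for each sufficiently large $\Delta$, it suffices to exhibit a distribution $\mathcal{D}$ over $n$-vertex bipartite graphs with max degree exactly $\Delta$ (under vertex arrivals) such that every deterministic online edge coloring algorithm uses $\Delta + \Omega(\sqrt{\Delta})$ colors in expectation under $\mathcal{D}$. The $\sqrt{\Delta}$ scaling strongly suggests a birthday-paradox phenomenon, and this is what would drive the analysis.

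A natural candidate for $\mathcal{D}$ is a two-phase random construction. In the first phase, $\Delta-k$ online vertices arrive (with $k=\Theta(\sqrt{\Delta})$), each incident to all $\Delta$ offline vertices; however, the distribution should include hidden randomness (say, a uniformly random relabeling of the $\Delta$ offline vertices, invisible to the algorithm) so that any deterministic strategy is forced to commit blindly. In the second phase, $k$ further online vertices arrive, incident to all $\Delta$ offline vertices, but with per-edge correspondences to offline vertices chosen to depend on the hidden first-phase randomness. Because $G$ is $K_{\Delta,\Delta}$ as a graph, the max degree is exactly $\Delta$ and $\mathrm{OPT}=\Delta$.

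The heart of the argument would be a birthday-style calculation on Latin-rectangle extensions. If the algorithm ever uses only $\Delta$ colors, then after phase one its partial coloring is a $(\Delta-k)\times\Delta$ Latin rectangle, and each offline vertex $u_i$ has a ``missing palette'' of exactly $k$ colors (the complement of the colors already used at $u_i$). Each phase-two arrival requires a perfect matching in the bipartite ``availability graph'' between colors and offline vertices; by the randomness in $\mathcal{D}$, the missing palettes look like uniformly random $k$-subsets of $[\Delta]$ from the algorithm's perspective. A standard birthday/Hall-violation computation then gives that the probability of having no valid $\Delta$-coloring extension is bounded below by a constant for each of the $\Theta(\sqrt{\Delta})$ phase-two arrivals, and each such failure forces a new color. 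Taking expectation yields $\Omega(\sqrt{\Delta})$ expected additional colors, which is the desired lower bound.

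The main obstacle is ensuring that the algorithm's phase-one commitments are genuinely random from its standpoint, ruling out structured Latin-rectangle strategies (e.g., the cyclic coloring $\pi_j(i) = (i+j)\bmod \Delta$) that could otherwise use exactly $\Delta$ colors on the plain $K_{\Delta,\Delta}$. This is where the hidden randomization of the offline vertices' identities is crucial. A cleaner alternative, which sidesteps this, is to take $\mathcal{D}$ as a disjoint union of $\Omega(\sqrt{\Delta})$ independent ``conflict gadgets,'' each a small random subgraph forcing a new color with constant probability, and then aggregate via linearity of expectation -- though one must still verify that the global max degree remains exactly $\Delta$. Either route reduces the bound to a concrete birthday-paradox inequality, which is straightforward to carry out once the distribution is fixed.
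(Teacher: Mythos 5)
Your proposal has a genuine gap, and it is in the construction itself. If every online vertex --- in both phases --- is adjacent to all $\Delta$ offline vertices, the instance is just $K_{\Delta,\Delta}$ under one-sided arrivals, and it carries no uncertainty whatsoever: a hidden relabeling of the offline vertices is a no-op, since both the graph and the algorithm's view are invariant under it, and the phrase ``per-edge correspondences chosen to depend on the hidden randomness'' has no content for a simple graph once each arrival neighbors everything. In particular, the key claim that ``the missing palettes look like uniformly random $k$-subsets of $[\Delta]$ from the algorithm's perspective'' is false: the palettes are produced by, and fully known to, the algorithm. Worse, the whole plan cannot force even one extra color on this instance: by the classical Latin-rectangle extension theorem (Hall's theorem applied to the colors-versus-offline-vertices availability graph), every $(\Delta-k)\times\Delta$ Latin rectangle can be extended one row at a time, so an online algorithm can $\Delta$-color $K_{\Delta,\Delta}$ exactly, for any arrival order --- the cyclic coloring you try to exclude is just one instance of this, and no relabeling rules it out. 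To create genuine uncertainty the adversary must randomize \emph{which} offline vertices the later arrivals neighbor, while keeping all degrees at most $\Delta$; designing such a distribution and carrying out the resulting anti-concentration argument is precisely the hard part your sketch leaves open. Your fallback also fails: with disjoint gadgets the color budget is global, so the algorithm reuses the same ``extra'' color in every gadget; forcing one new color per gadget with constant probability yields only $\Delta+1$ colors overall, not $\Delta+\Omega(\sqrt{\Delta})$, and linearity of expectation does not apply to the global color count.

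For contrast, the paper proves the observation without constructing any instance: it reduces to the known lower bound for online matching in $\Delta$-regular bipartite graphs \cite{cohen2018randomized}. Given a $(1+\epsilon)$-competitive edge-coloring algorithm, pick one of its $(1+\epsilon)\Delta$ color classes uniformly at random upon initialization; in a $\Delta$-regular graph on $2n$ vertices (with $\Delta n$ edges) this is a matching of expected size $\Delta n/((1+\epsilon)\Delta)=(1-O(\epsilon))n$, so $\epsilon=o(1/\sqrt{\Delta})$ would contradict the fact that no online matching algorithm attains expected size $(1-o(1/\sqrt{\Delta}))n$ on such graphs. What you are attempting to prove directly is essentially a re-derivation of that matching lower bound, which is a nontrivial result in its own right; if you want a self-contained argument, that is the statement you would need to establish, not a birthday computation on $K_{\Delta,\Delta}$.
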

\begin{proof}
	By \cite{cohen2018randomized}, no online matching algorithm outputs a matching of expected size $(1-o(1/\sqrt{\Delta}))\cdot n$ in $\Delta$-regular $2n$-vertex bipartite graphs under one-sided arrivals. Given a $(1+\epsilon)$-competitive edge coloring algorithm, we can randomly pick one of the $(1+\epsilon)\cdot \Delta$ color classes upon initialization and output that as our matching. For $\Delta$-regular graphs on $2n$ vertices, which have $\Delta\cdot n$ edges, this results in a matching of expected size $\frac{\Delta\cdot n}{(1+\epsilon)\cdot \Delta} = (1-O(\epsilon))\cdot n$, from which we conclude $\epsilon = \Omega(1/\sqrt{\Delta})$.
\end{proof}

Our main result of the section is a lower bound for unknown $\Delta$ of $\frac{e}{e-1}$ on the competitive ratio of any fractional online algorithm for our relaxation (and by extension, for any randomized algorithm). 
To obtain this result, 
we derive linear constraints that any $\alpha$-competitive fractional online algorithm must satisfy and formulate these constraints as a family of linear programs. 
Specifically, we will rely on the modified fractional edge coloring formulation, where the competitive ratio $\alpha \triangleq \max_{v,c} \sum_{e\ni v} x_{e,c}$ is the maximum \emph{load} of any vertex $v$ for color $c$, and $x_{e,c}\geq 0$ for all $c\in [\Delta]$ and $x_{e,c}=0$ for all $c>\Delta$, for $\Delta$ the \textbf{current} maximum degree. (See \Cref{sec:fractional-def}.)
We then construct feasible dual solutions to these LPs, which by LP duality imply our claimed lower bounds.

\subsection{Matching Lower Bound for Bipartite Graphs}
\label{sec:lower-bipartite}
Our first lower bound concerns fractionally edge coloring bipartite graphs.
\begin{thm}\label{thm: lower-bipartite}
	No fractional online edge coloring algorithm is better than $\frac{e}{e-1}$ competitive on bipartite graphs under one-sided arrivals.
\end{thm}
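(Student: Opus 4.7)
The plan is to apply LP duality on a tailored hard instance. I would use the ``staircase'' bipartite graph: $n$ offline vertices $u_1, \ldots, u_n$ are known at the start, then online vertices $v_1, \ldots, v_n$ arrive in order with $v_k$ adjacent to $u_1, \ldots, u_k$. After step $k$, the maximum degree is exactly $k$ (realized by $v_k$ and $u_1$), so in the scaled formulation of \Cref{sec:fractional-def}, an $\alpha$-competitive algorithm has access to only $k$ colors immediately after $v_k$'s arrival. Writing $y_{j,c}$ for the value the algorithm places on an edge $(u_i, v_j)$ at color $c$, this amounts to $y_{j,c} = 0$ whenever $c > j$.

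By averaging any algorithm's output over uniformly random relabelings of the offline side and using convexity of the load constraints, I may assume the algorithm is symmetric in the offline vertices, so $y_{j,c}$ is independent of $i$. Competitiveness then reduces to feasibility of the primal LP: minimize $\alpha$ subject to $\sum_{c=1}^j y_{j,c} \geq 1$ (edge fully colored), $\sum_{j=c}^n y_{j,c} \leq \alpha$ (load on $u_i$ for the binding $i \leq c$), $j \cdot y_{j,c} \leq \alpha$ (load on $v_j$), and $y_{j,c} \geq 0$, for $1 \leq c \leq j \leq n$. A natural primal attaining $\alpha = e/(e-1)$ is $y_{j,c} = (e/(e-1))/j$ for $c \in [j/e, j]$ and $0$ otherwise, matching the upper bound \Cref{thm:bipartite-upper-bound}; the task is to certify that no smaller $\alpha$ is achievable.

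For the matching dual, with multipliers $\lambda_j, \mu_c, \nu_{j,c} \geq 0$ for the three constraint families, I would propose $\lambda_j = (e/(e-1))/n$ for every $j$; $\mu_c = (e/(e-1))/n$ for $c \leq n/e$ and $\mu_c = 0$ otherwise; and $\nu_{j,c} = (e/(e-1))/(jn)$ for $n/e < c \leq j$ and $\nu_{j,c} = 0$ otherwise. The dual-$y$ constraint $\lambda_j \leq \mu_c + j\nu_{j,c}$ then holds with equality in both regimes: for $c \leq n/e$ the $\mu_c$ term alone equals $\lambda_j$, while for $c > n/e$ the $j\nu_{j,c}$ term alone does. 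The budget constraint $\sum_c \mu_c + \sum_{c \leq j}\nu_{j,c} \leq 1$ splits as $\sum_c \mu_c \to 1/(e-1)$ and, via the harmonic estimate $\sum_{j > n/e}(1 - (n/e)/j) \sim n(1 - 2/e)$, $\sum_{j,c}\nu_{j,c} \to (e-2)/(e-1)$, so the two pieces sum to $1$ as $n \to \infty$. The dual objective is $\sum_j \lambda_j = e/(e-1)$, so by weak LP duality $\alpha \geq e/(e-1) - o(1)$.

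The main obstacle is guessing the dual. Its support pattern -- $\mu$ concentrated on the ``early'' colors $c \leq n/e$ and $\nu$ on the band $n/e < c \leq j$ -- is forced by complementary slackness against the conjectured primal optimum: the threshold $c = n/e$ is precisely where the $u$-load constraint transitions from tight to slack, and the scalings are chosen so the budget closes at exactly $1$ in the limit. The remaining verification is discrete-to-continuous bookkeeping carrying only $O(1/n)$ harmonic corrections, so letting $n \to \infty$ yields the claimed bound.
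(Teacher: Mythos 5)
There is a genuine gap, and it lies in the choice of hard instance together with the symmetrization step. Your constraint $\sum_{j=c}^{n} y_{j,c} \leq \alpha$ is not a valid consequence of the offline-vertex load constraints on the staircase instance. The staircase's offline vertices are \emph{not} interchangeable: $u_i$ has degree $n-i+1$, and at every moment the algorithm can distinguish them by their current degrees, so averaging over random relabelings of the offline side changes the instance rather than the algorithm, and you cannot assume $y_{j,c}$ is independent of $i$. Concretely, the staircase admits a $1$-competitive online coloring: assign edge $(u_i,v_j)$ entirely to color $j-i+1$. This is a proper coloring, uses only colors in $[j]$ at the arrival of $v_j$, is $1$-competitive after every prefix (so adversarial early stopping does not help), and yet its edge-averaged values $y_{j,c}=1/j$ for all $c\leq j$ give $\sum_{j\geq c} y_{j,c} \approx \log(n/c) \gg \alpha$. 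So your LP is not a relaxation of the online problem on your instance, and the dual solution you construct (which is otherwise of the right shape -- your LP is formally the same as the paper's $\mathrm{LP}_m$) certifies a bound for the wrong feasible region; no LP written over the staircase alone can prove a lower bound above $1$.

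The paper's proof sidesteps exactly this issue by engineering the instance so that the averaging needs no symmetry assumption on the algorithm: it uses $m!$ offline vertices and, in phase $k$, $m!/k$ online vertices of degree $k$ wired so that \emph{every} offline vertex gains exactly one edge in every phase. Then all offline vertices have identical degrees at all times, each online vertex of phase $k$ has exactly $k$ edges, and the three constraint families follow simply because an average of quantities each bounded by $\alpha$ is itself bounded by $\alpha$ (no relabeling argument is invoked). Your dual construction would essentially go through once transplanted onto that instance -- indeed the paper's dual has the same support structure, with the threshold at $c(m)=\lfloor m/e\rfloor$ -- but as written, the argument fails at the point where the offline-load constraint is asserted.
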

\begin{proof}
	Consider the following construction. For any $m$, we construct a bipartite graph $G_m = (L_m, R_m, E_{m})$, where $L_m$ is the offline side and $R_m$ is the online side. The offline side, $L_m$, contains $m!$ vertices, denoted by $v_1, \cdots, v_{m!}$. The online side, $R_m$, arrives over $m$ phases. In phase $k$ ($k \in [m]$), some $m!/k$ vertices of degree $k$ arrive. Each vertex $u_{i}$ which arrives in phase $k$ ($i\in [m!/k]$) neighbors offline vertices $v_{i}, v_{m!/k + i}, \cdots, v_{m!(k-1)/k + i}$. We can see that each offline vertex has exactly one more neighbor in phase $k$ and the maximum degree in phase $k$ is exactly $k$. See \Cref{fig:hard-instance-bipartite} for an illustrative example. The algorithm will have to be $\alpha$ competitive after each phase, as the adversarial sequence can ``terminate early'', after essentially presenting disjoint copies of $G_{m'}$ for some $m'\leq m$.
	
	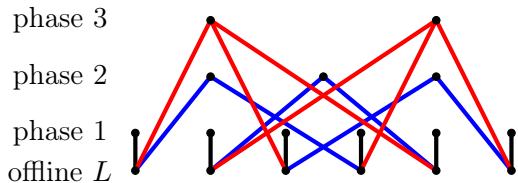
\begin{figure}[!htbp]
		\centering
		\begin{tikzpicture}[scale=0.5]

		\draw[ultra thick] (0, 0) -- (0,1);
		\draw[ultra thick]  (2, 0) -- (2, 1);
		\draw[ultra thick]  (4, 0) -- (4,1);
		\draw[ultra thick]  (6, 0) -- (6, 1);
		\draw[ultra thick]  (8, 0) -- (8,1);
		\draw[ultra thick]  (10, 0) -- (10, 1);

		\draw[ultra thick, blue]  (2, 2.5) -- (0, 0);
		\draw[ultra thick, blue]   (2, 2.5) -- (6, 0);
		\draw[ultra thick, blue]  (5, 2.5) -- (2, 0);
		\draw[ultra thick, blue]   (5, 2.5) -- (8, 0);
		\draw[ultra thick, blue]  (8, 2.5) -- (4, 0);
		\draw[ultra thick, blue]  (8, 2.5) -- (10, 0);

		\draw[ultra thick, red]  (2, 4) -- (0, 0);
		\draw[ultra thick, red]  (2, 4)  -- (4, 0);
		\draw[ultra thick, red]  (2, 4)  -- (8, 0);
		\draw[ultra thick, red]  (8,4) -- (2, 0);
		\draw[ultra thick, red]  (8,4) -- (6, 0);
		\draw[ultra thick, red]  (8,4) -- (10, 0);

		\draw [fill = black] (0,0) circle [radius=0.1];	
		\draw [fill = black] (2,0) circle [radius=0.1];	
		\draw [fill = black] (4,0) circle [radius=0.1];
		\draw [fill = black] (6,0) circle [radius=0.1];	
		\draw [fill = black] (8,0) circle [radius=0.1];	
		\draw [fill = black] (10,0) circle [radius=0.1];
		
		\draw [fill = black] (2,1) circle [radius=0.1];	
		\draw [fill = black] (4,1) circle [radius=0.1];	
		\draw [fill = black] (6,1) circle [radius=0.1];
		\draw [fill = black] (8,1) circle [radius=0.1];	
		\draw [fill = black] (10,1) circle [radius=0.1];	
		\draw [fill = black] (0,1) circle [radius=0.1];	
		
		\draw [fill = black] (2,2.5) circle [radius=0.1];	
		\draw [fill = black] (5,2.5) circle [radius=0.1];	
		\draw [fill = black] (8,2.5) circle [radius=0.1];
		
		\draw [fill = black] (2,4) circle [radius=0.1];	
		\draw [fill = black] (8,4) circle [radius=0.1];	
		
		\node at (-2, 0){offline $L$};
		\node at (-2, 1){phase 1};
		\node at (-2, 2.5){phase 2};
		\node at (-2, 4){phase 3};
		
		\end{tikzpicture}
		\caption{The hard instance for bipartite graphs for $m = 3$.}
		\label{fig:hard-instance-bipartite}		
	\end{figure}

	We use $x_{kj}\triangleq \frac{\sum_{e\in \textrm{phase } k}x_{e,j}}{|\{e\in \textrm{phase } k\}|}$ to denote the average assignment of color $j$ to edges of phase $k$.

	The average load for online vertices of phase $k$ for color $j$ is $k\cdot x_{kj}$, as each such online vertex has $k$ edges. Consequently, as their average load is at most $\alpha$, we have the following constraints.
	\begin{align}
	k \cdot x_{kj} \leq \alpha \quad \forall 1 \leq j \leq k. \label{eq:bipartite-1}
	\end{align}
	Moreover, since each offline vertex has one more edge during phase $k$, the average assignment to all edges should cover all edges of phase $k$, implying the following constraint.
	\begin{align}
	\sum_{j = 1}^{k}x_{kj} \geq 1 \quad \forall k. \label{eq:bipartite-2}
	\end{align}
	Finally, as the load of all offline vertices (which have only one edge in phase $k$) for any color $j$ cannot exceed $\alpha$ (and so neither can their average), we have the following constraint.
	\begin{align}
	\sum_{k = j}^{m} x_{kj} \leq \alpha \quad \forall j. \label{eq:bipartite-3}
	\end{align}
	
	Combining constraints \eqref{eq:bipartite-1}-\eqref{eq:bipartite-3}, yields the following linear program $\text{LP}_{m}$, which lower bounds $\alpha$.
	\begin{align*}
	\textrm{LP}_{m}&\triangleq\min \alpha\\
	\sum_{j = 1}^{k}x_{kj} & \geq 1  \qquad\qquad 1 \leq k\leq m\\
	k\cdot x_{kj} &\leq \alpha \qquad\qquad  1 \leq j \leq k \leq m\\
	\sum_{k = j}^{m}x_{kj} &\leq \alpha \qquad\qquad  1 \leq j \leq m \\
	x_{kj} &\geq 0 \qquad\qquad  1 \leq j \leq k \leq m.
	\end{align*}

	We construct a series of dual feasible solutions to lower bound $\alpha$. First, the dual LP is as follows.
	\begin{align*}
	\max \sum_{k = 1}^{m}y_{k}& \\	
	\sum_{k = 1}^{m}\sum_{j =1 }^{k}z_{kj} + \sum_{j = 1}^{m}w_{j} &\leq 1 \\
	- k\cdot z_{kj} - w_{j} + y_{k} &\leq 0 \qquad\qquad  1\leq j\leq k \leq m\\
	y_k, w_j, z_{kj} &\geq 0 \qquad\qquad 1\leq j\leq k \leq m.
	\end{align*}
	Let $c(m)\triangleq \lfloor m/e \rfloor$. We know that $\lim_{m \rightarrow \infty} c(m)/m \rightarrow 1/e$. Let $t \triangleq 1/(m + 1 + c(m)\cdot (H_{c(m)} - H_m))$, where $H_k \triangleq \sum_{i=1}^k 1/k$ satisfies $\lim_{m\rightarrow \infty} H_{c(m)} - H_m \rightarrow \log(c(m)/m) \rightarrow -1$. We construct a feasible dual solution as follows: We let $y_1 = \cdots y_m = t$, and
	\begin{gather*}
		w_j = \left\{
		\begin{matrix}
		t & 1 \leq j \leq c(m)\\
		0 & \text{otherwise}
		\end{matrix}
		\right.
	\end{gather*}
	\vspace{-0.3cm}
	\begin{gather*}
	z_{kj} = \left\{
	\begin{matrix}
	t/k & c(m) + 1 \leq j \leq k \leq m\\
	0 & \text{otherwise.}
	\end{matrix}
	\right.
	\end{gather*}
	For any $1 \leq j\leq k\leq m$, we have that $k\cdot z_{kj} + w_{j} = t = y_k$. For the first dual constraint, we have
	\begin{align*}
	\sum_{k = 1}^{m}w_{k} + \sum_{k = 1}^{m}\sum_{j =1 }^{k}z_{kj} &= c(m)\cdot t + \sum_{k = c(m)}^{m}\left(\frac{k - c(m)}{k}\right)\cdot t \\
	& = c(m)\cdot t + (m - c(m) + 1)\cdot t - c(m)\cdot t\cdot (H_m - H_{c(m)}))\\
	& = \left(m+1 + c(m)\cdot (H_{c(m)} - H_m\right)\cdot t = 1.
	\end{align*}

	The above is therefore a feasible dual solution, of value 
	\begin{gather*}
	\sum_{i=k}^{m}y_k = m\cdot t = \frac{m}{m + c(m)\cdot (H_{c(m)} - H_m)} = \frac{1}{1 + \frac{c(m)}{m}\cdot (H_{c(m)} - H_m)}.
	\end{gather*}	
	When $m\rightarrow \infty$, this tends to $\frac{1}{1 - 1/e}=\frac{e}{e-1}$.
	Consequently, $\lim_{m \rightarrow \infty} \text{LP}_m \geq e/(e-1)$, implying our claimed lower bound for fractional online edge coloring of bipartite graphs.
\end{proof}
\paragraph{Making the Graph Dense}
\label{sec:dense-graph}
The above construction yields a sparse graph, as the number of vertices in this graph, $n = m! + m!(1 + \frac{1}{2} + \cdots + \frac{1}{m}) \approx m!\log m$, is exponential in its maximum degree, $m$. However, the following change yields a dense graph where the same lower bound still holds. Fix any integer $t > 0$, in the hard instance, we replace each vertex with $t$ identical copies, and correspondingly, connecting all copies of pairs $(u,v)$  which are adjacent in the sparse graph.
The obtained graph is still bipartite and the maximum degree and the number of vertices both increase by a factor of $t$, to $t\cdot m$ and $t\cdot m!\log m$, respectively. Since we can take $t$ to be arbitrarily large, the graph has maximum degree as high as $\Omega(n)$. In order to show that the lower bound still holds, we only need to slightly change the meaning of $x_{kj}$ to be the average assignment of colors $(j-1)t + 1, (j-1)t+2,\dots jt$ during phase $k$. Constraints  \eqref{eq:bipartite-1}-\eqref{eq:bipartite-3} still hold with this new meaning in the denser graph. Thus, we conclude that \Cref{thm: lower-bipartite} holds for graphs of arbitrarily high degree.

\vspace{-0.1cm}
\subsection{Lower Bound for General Graphs}
\label{sec:lower-general}
Next, we present a lower bound for general graphs. The lower bound is based on the construction for bipartite graphs, but with more alterations. More specifically, recall that in the construction for bipartite graphs, when the online vertices of phase $k$ arrive, we always connect them to $k$ offline vertices. However, in general graphs, we have more freedom. In phase $k$, there can be two possible futures: in one we continue the sequence for bipartite graphs; in the other we connect all vertices which arrive during phases $k,k+1,\dots$ to the vertices which arrived in phase $k - 1$. This example yields a lower bound of $\generalLB$, showing a separation between bipartite and general graphs.\footnote{In \Cref{sec:tight-example}, we show this example is a tight instance for \Cref{alg:cap_water_filling}, which is $\generalUB$ competitive on it.} We state this lower bound here and defer its proof to \Cref{sec:hardness-omitted}. 
\begin{restatable}[]{thm}{lowergeneral}
	\label{thm:lower-general}
	No fractional online edge coloring algorithm is better than $\generalLB$ competitive in general graphs.
\end{restatable}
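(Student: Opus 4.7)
The plan is to extend the bipartite construction from the proof of \Cref{thm: lower-bipartite} by exploiting the extra freedom available in general graphs: as suggested in the preceding discussion, at each phase $k$ the adversary may either continue the bipartite sequence or \emph{branch off} by connecting all subsequent online arrivals to the previously-arrived online vertices of phase $k-1$ rather than to the offline side. First, I would define a family of instances $\{G^{(k^*)}\}_{k^*\in[m]}$ in which phases $1,\ldots,k^*-1$ reproduce the bipartite construction verbatim and, from phase $k^*$ onward, the newly-arriving online vertices are connected in a symmetric pattern to the phase-$(k^*-1)$ vertices, so that each such vertex acquires one new post-branch neighbor per phase. Any $\alpha$-competitive fractional algorithm must succeed on every branch, and since all branches share a common prefix, its assignments on phases $1,\ldots,k^*-1$ must coincide across branches.

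Next, I would formulate the corresponding LP. Beyond the spine variables $x_{kj}$ (the average assignment of color $j$ to phase-$k$ edges of the bipartite spine), I would introduce variables $y_{k^*,k,j}$ for the average assignment of color $j$ to edges arriving in post-branch phase $k$ of $G^{(k^*)}$. Three families of constraints then arise. (i) Coverage: $\sum_j x_{kj}\geq 1$ and $\sum_j y_{k^*,k,j}\geq 1$. (ii) Load at newly-arriving online vertices, e.g., $k\cdot x_{kj}\leq \alpha$ and an analogous bound on $y_{k^*,k,j}$. (iii) Load at phase-$(k^*-1)$ vertices in branch $k^*$, which combines their spine load with their post-branch load to give $x_{k^*-1,j}+\sum_{k\geq k^*}y_{k^*,k,j}\leq \alpha$. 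The offline-vertex load bounds $\sum_{k\geq j} x_{kj}\leq \alpha$ from \Cref{thm: lower-bipartite} still apply. The resulting single LP, formed by the union of all these constraints over all branches $k^*$, lower bounds the competitive ratio, and we let $m\to\infty$.

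Finally, I would construct a parametric dual feasible solution analogous to the one in \Cref{thm: lower-bipartite}, but now placing weight on both the spine-type and the branch-type load constraints. The right balance corresponds to identifying the critical threshold $k^*$ at which the adversary should prefer branching to continuing; optimizing the resulting one-dimensional expression in the limit $m\to\infty$ yields the numerical bound $\generalLB$. As in \Cref{sec:dense-graph}, replacing each vertex by $t$ identical copies densifies the construction without weakening the bound, so the lower bound applies to arbitrarily dense general graphs as well.

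The main obstacle I foresee is the bookkeeping in the branched LP, specifically ensuring that the strengthened load constraints in branch $k^*$ accurately express the dual role of the phase-$(k^*-1)$ vertices (as online vertices of the spine and as ``offline-like'' hubs in the branch) and that the spine variables $x_{kj}$ are shared correctly across branches. Pinpointing the closed-form optimal dual -- in particular the optimal threshold -- will likely require some numerical exploration before verifying feasibility rigorously.
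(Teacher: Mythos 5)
Your construction and LP formulation are essentially the paper's: the same ``branch at phase $k^*$ and redirect all later arrivals to the phase-$(k^*-1)$ vertices'' family of futures, shared spine variables $x_{kj}$ forced to coincide across branches, branch variables $y_{k^*,k,j}$, and a single LP whose limit value lower bounds $\alpha$, followed by the densification trick. However, as written your key constraint (iii) has a substantive error, not just a bookkeeping risk: you wrote $x_{k^*-1,j}+\sum_{k\geq k^*}y_{k^*,k,j}\leq \alpha$, but a phase-$(k^*-1)$ vertex has $k^*-1$ spine edges and $x_{k^*-1,j}$ is a \emph{per-edge} average, so its average spine load in color $j$ is $(k^*-1)\cdot x_{k^*-1,j}$; the paper's constraint \eqref{general-capacity-new2} carries exactly this factor (and the paper additionally records \eqref{general-capacity-new3} for colors $j>k^*-1$). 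The factor is the crux of the separation: the hubs enter the branch already carrying total load $k^*-1$ spread over at most $k^*-1$ colors, leaving only about $(\alpha-1)(k^*-1)$ residual capacity in those colors for the post-branch edges. In your unweighted version the spine term is at most $\alpha/(k^*-1)$ by your constraint (ii), hence negligible, the hub constraint degenerates to essentially $\sum_{k}y_{k^*,k,j}\leq\alpha$ --- the same shape as the offline-vertex constraint in the bipartite LP --- and the branches force nothing beyond $\frac{e}{e-1}$, so the LP value does not reach $\generalLB$.

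The second divergence is the closing step. The paper does not exhibit a closed-form dual for the general-graph LP: it solves the primal LP numerically for $m=50$ with a commercial solver and reports the optimal value $\generalLB$ (this constant is numerical, unlike the bipartite $\frac{e}{e-1}$). Your plan to find a parametric dual with a single branching threshold and optimize it analytically in the limit $m\to\infty$ is therefore more than the published proof attempts, and there is no guarantee such a clean closed form exists; you yourself note that numerical exploration would be needed. If the analytic dual does not materialize, the fallback that matches the paper is to certify the LP value computationally (solve the LP, or verify a numerically found dual feasible solution), after fixing the constraint above. With that correction and this more modest final step, your argument coincides with the paper's.
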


\paragraph{Acknowledgements} The authors would like to thank Marek Elias, Seffi Naor and Ola Svensson for comments on an earlier draft of this paper, which helped improve its presentation. The authors would also like to thank Lex Schrijver for asking about extension of our results to multigraphs.
	\section*{Appendix}
	\appendix
	\section{Bad Examples for Natural Algorithms}
\label{sec:bad-example}
In this section we present bad examples for a variety of natural edge coloring algorithms for known and unknown $\Delta$.

\subsection{Repeated Maximal Online Matching}
\label{sec:bad-example-greedy}
Here we give a bad example which shows that a family of natural online algorithms, i.e., algorithms that iteratively find a maximal matching, are no better than $2$-competitive, even on dense bipartite graphs under one-sided arrivals. Notice that this family of algorithms includes the natural extension of the \textsc{ranking} algorithm. I.e., iteratively find the maximal matching via the optimal, $(1-1/e)$-competitive, online matching algorithm, \textsc{ranking} -- an approach which at first glance one might guess yields an $\approx(e/(e-1))$-competitive edge coloring.

The bad example is as follows. The graph is made up of $\Delta$ stars with $\Delta-1$ leaves each, with the stars' centers, which are offline vertices, connected to a common vertex $v$, which is the last offline vertex to arrive. It is easy to see that that any algorithm that repeatedly uses a maximal online matching for each color $c=1,2,\dots$ would color each star's edges with colors $1,2, \cdots, \Delta-1$ following the star's $\Delta-1$ leaves' arrivals. The $\Delta$ edges of $v$ therefore require a further $\Delta$ colors. Consequently, such an algorithm would use $2\Delta-1$ colors and is thus $2$ competitive. Adding $n-\Delta^2-1$ isolated dummy nodes with no edges, we get an example with $n$ nodes and maximum degree $\Delta=O(\sqrt n)$. This example therefore rules out this natural family of online edge coloring algorithms for all $\Delta=O(\sqrt n)$.

\subsection{Repeatedly Running Marking}\label{sec:bad-example-marking}
Our online rounding scheme for fractional edge colorings of \Cref{sec:online-rounding} applies \marking to multiple fractional edge colorings. 
For known $\Delta$, this is done by running \marking on some fractional matchings assigning values $\frac{1}{\Delta'}$, which we refer to as $\marking_{\Delta'}$, for increasingly smaller value of $\Delta'$.
Here we show an example underlying the need for rounding multiple edge colorings. In particular, we show that simply iteratively coloring the matching output by $\marking_\Delta$ on the uncolored subgraph -- i.e., rounding one trivial edge coloring -- results in suboptimally many colors.

To be precise, for $c=1,2,\dots, \Delta$, the algorithm considered computes $M_c$, the $c$-th color class, by running $\marking_\Delta$ in the subgraph not colored by the first $c-1$ colors, $U_c \triangleq G\setminus \bigcup_{c'=1}^{c-1} M_{c'}$ (and then reverts to some other algorithm on the uncolored graph, say greedy).
For simplicity (though this is too good be true), let us assume that $\marking_\Delta$ matches each edge $e$ with probability \emph{precisely} $1/\Delta$ in $M_c$ if $e\in U_{c}$. 
Consider a star graph of degree $\Delta$, with the star's center arriving last. Denote by $p_{e,c} \triangleq \Pr[e\in M_c]$ the probability that $e$ is colored $c$. Then, if we run $\marking_\Delta$ for $\Delta$ phases on the uncolored graph, the probabilities $p_{e,c}$ satisfy the recurrence relation
$p_{e,c} = \frac{1}{\Delta}\cdot \left(1-\sum_{c'<c} p_{e,c'}\right).$
But this recurrence captures non-empty bins in a balls and bins process with $\Delta$ balls (colors) thrown into $\Delta$ bins (edges). Specifically, $p_{e,c}$ is the probability of $c$ being the first ball occupying bin $e$. The expected number of unoccupied bins in the above process, $\Delta - \sum_e \sum_c p_{e,c}$, is $(\frac{1}{e}+o(1))\Delta$, so this process results in at least $\frac{\Delta}{e} - o(\Delta)$ uncolored edges in the star after $\Delta$ colors used. Consequently, this approach would use at least $(1+\frac{1}{e} - o(1)) \Delta$ colors, even on a star of maximum degree $\Delta$.

The above bad example rules out running $\marking_\Delta$ for the first $\Delta$ colors and then resorting to some other algorithm (say, greedy). More generally, extending this idea to any prefix of colors computed using $\marking_\Delta$ and then reverting to greedy can be similarly shown to be suboptimal. For example, standard coupon collector arguments show that the extreme approach of repeatedly running $\marking_\Delta$ in $U_c$ for $c=1,2,\dots$ until all edges are colored (i.e., without running greedy) requires at least $\D \log \Delta$ colors(!), even on a star of maximum degree $\Delta$ whose center arrives last.

\subsection{Bad Examples for (Unbounded) Water Filling}
\label{sec:bad-example-wf}
In this section, we will give bad examples to rule out a natural candidate algorithm for fractional edge coloring; i.e., \textsc{water filling}. It is easy to see that we can make the level of a color arbitrary large if we only do \textsc{water filling} on one side. On the other hand, the following algorithm is a natural extension of this algorithm which only conducts \textsc{water filling} on the \emph{maximum} of the two endpoints' loads for any edge $(u,v)$ which arrives. More formally, the algorithm is as follows.

\begin{algorithm}[H] 
	\caption{\textsc{water filling}}
	\label{water-filling}
	\begin{algorithmic}[1]
		\Require Online graph $G(V,E)$ with unknown $\Delta(G)$ under vertex arrivals.\;
		
		\Ensure Fractional edge coloring $\{x_{e,c} \mid e\in E$, $c\in [\Delta(G)]\}$.

		\For{each arrival of a vertex $v$}
		\State $\Delta \leftarrow \max\{d(u)\mid u\in V\}$. \Comment{$\Delta$ = \textbf{current} max.~degree} \label{line:delta-init-wf}
		\For{each $e=(u,v)\in E$}
		\While{$\sum_{c\in [\Delta]}x_{e,c} < 1$} \Comment{initially, $x_{e,c}=0$}
		
		\State let $\mathit{C} \triangleq \{c \in \mathit{U} \mid \min_{c\in \D} \max\{L_u(c), L_v(c)\}\}$. \Comment{``currently active'' colors for $e$}
		\For{all $c \in \mathit{C}$}
		\State increase $x_{e,c}$ continuously. \Comment{update $L_u(c),L_v(c)$ and $\mathit{A}$.}
		\EndFor								
		\EndWhile
		\EndFor
		\EndFor
	\end{algorithmic}
\end{algorithm}

Our first observation here is that \textsc{water filling} is $2$ competitive, even under edge arrivals.
\begin{claim}
	The \textsc{water filling} algorithm is at most $2$ competitive under adversarial edge arrivals. 
\end{claim}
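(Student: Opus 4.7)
My plan is to proceed by induction on the order in which edges are processed and show that after each edge arrival, every load $L_w(c)$ is at most $2$. Since the algorithm uses $k=\Delta$ colors with competitive ratio equal to $\max_{w,c}L_w(c)$, this yields the claimed $2$-competitiveness. The only loads that can change while processing an arriving edge $e=(u,v)$ are those on $u$ and $v$, and only for colors whose priority value $\max\{L_u(c),L_v(c)\}$ is raised by the water-filling procedure. Let $\lambda^{\ast}$ denote the final water level, i.e., $\min_c\max\{L_u(c),L_v(c)\}$ at the moment $\sum_c x_{e,c}=1$.

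The first step is to establish the lower bound $\sum_{c\in[\Delta]}\max\{L_u(c),L_v(c)\}\geq\Delta\cdot\lambda^{\ast}$ at the end of processing $e$. For this I would argue that once a color becomes active (i.e., attains the current water level and begins receiving mass), it remains active throughout the remainder of the processing: as mass is added uniformly to all active colors, each active color's $\max$ grows in lockstep with the water level itself and so never falls behind. At termination every activated color therefore satisfies $\max\{L_u(c),L_v(c)\}=\lambda^{\ast}$, while every color that was never activated keeps the (strictly higher) $\max$ it had before $e$'s arrival and thus has $\max\{L_u(c),L_v(c)\}\geq \lambda^{\ast}$, giving the lower bound after summing over all $\Delta$ colors.

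The second step is the matching upper bound on the same sum. Since every edge $e'\ni u$ processed so far (including $e$ itself upon termination) satisfies $\sum_c x_{e',c}=1$, we obtain $\sum_c L_u(c)=d(u)\leq\Delta$, and symmetrically $\sum_c L_v(c)=d(v)\leq\Delta$, so
\begin{equation*}
\sum_{c\in[\Delta]}\max\{L_u(c),L_v(c)\}\;\leq\;\sum_c L_u(c)+\sum_c L_v(c)\;=\;d(u)+d(v)\;\leq\;2\Delta.
\end{equation*}
Combining the two bounds gives $\lambda^{\ast}\leq 2$. Every color activated during the processing of $e$ therefore ends with $L_u(c),L_v(c)\leq\lambda^{\ast}\leq 2$, and inactive colors retain their prior loads (bounded by $2$ by the inductive hypothesis), closing the induction.

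I expect the main obstacle to be a clean formalization of the first step: verifying in continuous time that active colors really do remain pinned to the water level even when the endpoint achieving $\max\{L_u(c),L_v(c)\}$ swaps between $u$ and $v$, and that inactive colors never catch up with the rising level. The continuous water-filling formulation in the statement of the algorithm should reduce this to tracking the derivatives of $L_u(c)$ and $L_v(c)$, but some care is needed to handle ties among colors' initial priority values and to bound intermediate loads (not just those at termination), which follows because loads are monotone nondecreasing during the processing of $e$ and the final value is at most $2$.
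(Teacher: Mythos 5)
Your proposal is correct and follows essentially the same route as the paper's proof: the key step in both is the chain $\Delta\cdot\lambda^{*}\leq\sum_{c}\max\{L_u(c),L_v(c)\}\leq\sum_{c}L_u(c)+\sum_{c}L_v(c)\leq 2\Delta$, giving a water level of at most $2$ after each edge is processed. Your additional bookkeeping (the induction and the argument that activated colors stay pinned to the water level) just makes explicit what the paper leaves implicit.
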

\begin{proof}
	We only need to prove $\ell \leq 2$ following updates due to the arrival of an edge $(u,v)$. This inequality holds because $$2\Delta \geq \sum_{i = 1}^{\D}\ell_{u}(i) + \sum_{i = 1}^{\D}\ell_{v}(i) = \sum_{i = 1}^{\D}(\ell_{u}(i) + \ell_{v}(i)) \geq \sum_{i = 1}^\D \max\{\ell_{u}(i), \ell_{v}(i)\} \geq \Delta \cdot \ell,$$ where the last inequality holds since we run \textsc{water filling} on the maximum of $u$ and $v$'s loads.
\end{proof}
By the above, \textsc{water filling} is always at most $2$ competitive. We will show that there exists a hard instance on which \textsc{water filling} is $2$ competitive, even under one-sided vertex arrivals in bipartite graphs. We start first with a bad example for vertex arrivals in general graphs, which illustrates the weakness of \textsc{water filling}, and also motivates the hard instance given in \Cref{sec:hardness}. 

\paragraph{A Bad Example for General Graphs.}
Consider a tree of height $n + 1$. The root locates in the first level. Each vertex in level $k$ has $n - k + 1$ children. In the online process, vertices arrive from level $n + 1$ down to $1$. See \Cref{fig:lower-wtf-general} for an example.
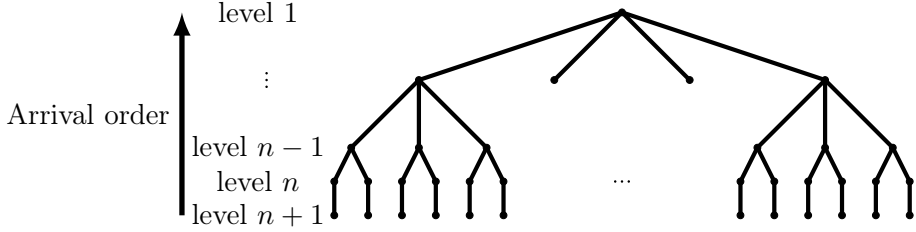
\begin{figure}[!htbp]
	\begin{tikzpicture}[scale = 0.45][!htbp]
	\draw[ultra thick] (0, 0) -- (0,1);
	\draw[ultra thick]  (1, 0) -- (1, 1);
	\draw[ultra thick]  (2, 0) -- (2,1);
	\draw[ultra thick]  (3, 0) -- (3, 1);
	\draw[ultra thick]  (4, 0) -- (4,1);
	\draw[ultra thick]  (5, 0) -- (5, 1);
	\draw[ultra thick]  (0.5, 2) -- (0, 1);
	\draw[ultra thick]  (0.5, 2) -- (1, 1);
	\draw[ultra thick]  (2.5, 2) -- (2, 1);
	\draw[ultra thick]  (2.5, 2) -- (3, 1);
	\draw[ultra thick]  (4.5, 2) -- (4, 1);
	\draw[ultra thick]  (4.5, 2) -- (5, 1);
	\draw[ultra thick]  (2.5, 4) -- (0.5, 2);
	\draw[ultra thick]  (2.5, 4) -- (2.5, 2);
	\draw[ultra thick]  (2.5, 4) -- (4.5, 2);
	
	\draw [fill = black] (0,0) circle [radius=0.1];	
	\draw [fill = black] (1,0) circle [radius=0.1];	
	\draw [fill = black] (2,0) circle [radius=0.1];
	\draw [fill = black] (3,0) circle [radius=0.1];	
	\draw [fill = black] (4,0) circle [radius=0.1];	
	\draw [fill = black] (5,0) circle [radius=0.1];	
	\draw [fill = black] (0,1) circle [radius=0.1];	
	\draw [fill = black] (1,1) circle [radius=0.1];	
	\draw [fill = black] (2,1) circle [radius=0.1];
	\draw [fill = black] (3,1) circle [radius=0.1];	
	\draw [fill = black] (4,1) circle [radius=0.1];	
	\draw [fill = black] (5,1) circle [radius=0.1];	
	\draw [fill = black] (0.5,2) circle [radius=0.1];	
	\draw [fill = black] (2.5,2) circle [radius=0.1];	
	\draw [fill = black] (4.5,2) circle [radius=0.1];	
	\draw [fill = black] (2.5,4) circle [radius=0.1];

	\draw[ultra thick] (12, 0) -- (12,1);
	\draw[ultra thick]  (13, 0) -- (13, 1);
	\draw[ultra thick]  (14, 0) -- (14,1);
	\draw[ultra thick]  (15, 0) -- (15, 1);
	\draw[ultra thick]  (16, 0) -- (16,1);
	\draw[ultra thick]  (17, 0) -- (17, 1);
	\draw[ultra thick]  (12.5, 2) -- (12, 1);
	\draw[ultra thick]  (12.5, 2) -- (13, 1);
	\draw[ultra thick]  (14.5, 2) -- (14, 1);
	\draw[ultra thick]  (14.5, 2) -- (15, 1);
	\draw[ultra thick]  (16.5, 2) -- (16, 1);
	\draw[ultra thick]  (16.5, 2) -- (17, 1);
	\draw[ultra thick]  (14.5, 4) -- (12.5, 2);
	\draw[ultra thick]  (14.5, 4) -- (14.5, 2);
	\draw[ultra thick]  (14.5, 4) -- (16.5, 2);
	
	\draw [fill = black] (12,0) circle [radius=0.1];	
	\draw [fill = black] (13,0) circle [radius=0.1];	
	\draw [fill = black] (14,0) circle [radius=0.1];
	\draw [fill = black] (15,0) circle [radius=0.1];	
	\draw [fill = black] (16,0) circle [radius=0.1];	
	\draw [fill = black] (17,0) circle [radius=0.1];	
	\draw [fill = black] (12,1) circle [radius=0.1];	
	\draw [fill = black] (13,1) circle [radius=0.1];	
	\draw [fill = black] (14,1) circle [radius=0.1];
	\draw [fill = black] (15,1) circle [radius=0.1];	
	\draw [fill = black] (16,1) circle [radius=0.1];	
	\draw [fill = black] (17,1) circle [radius=0.1];	
	\draw [fill = black] (12.5,2) circle [radius=0.1];	
	\draw [fill = black] (14.5,2) circle [radius=0.1];	
	\draw [fill = black] (16.5,2) circle [radius=0.1];	
	\draw [fill = black] (14.5,4) circle [radius=0.1];	
	
	\draw[ultra thick]  (8.5, 6) -- (2.5, 4);
	\draw[ultra thick]  (8.5, 6) -- (14.5, 4);
	\draw[ultra thick]  (8.5, 6) -- (6.5, 4);
	\draw[ultra thick]  (8.5, 6) -- (10.5, 4);
	\draw [fill = black] (8.5,6) circle [radius=0.1];
	\draw [fill = black] (6.5,4) circle [radius=0.1];
	\draw [fill = black] (10.5,4) circle [radius=0.1];

	\draw [fill = black] (8.3,1) circle [radius=0.02];
	\draw [fill = black] (8.5,1) circle [radius=0.02];
	\draw [fill = black] (8.7,1) circle [radius=0.02];
	\node at (-2.25, 0){level $n+1$};
	\node at (-2.25, 1){level $n$};
	\node at (-2.25, 2){level $n-1$};	
	\draw [fill = black] (-2,3.8) circle [radius=0.02];
	\draw [fill = black] (-2,4.0) circle [radius=0.02];
	\draw [fill = black] (-2,4.2) circle [radius=0.02];
	\node at(-2.25, 6){level $1$};
	\draw[line width=2pt, -latex] (-4.5,0) -- node[auto] {Arrival order} (-4.5,6);
	
	\end{tikzpicture}
	\caption{The hard instance for (unbounded) water filling.}
	\label{fig:lower-wtf-general}
\end{figure}

The following claim can be proven by induction and we omit the proof here.
\begin{claim}\label{claim:wt-general}
	The loads for all vertices in level $k$ when they first arrive are
	\begin{itemize}
		\item If $(n - k + 1)$ is odd, the load is $\left(\frac{2(n - k + 1)}{(n - k + 2)}, \cdots, \frac{2(n - k + 1)}{(n - k + 2)}, 0, \cdots, 0 \right)$.
		\item If $(n - k + 1)$ is even, the load is $\left(\frac{2n - 2k + 1}{(n - k + 1)}, \cdots, \frac{2n - 2k + 1}{(n - k + 1)}, \frac{1}{(n - k + 1)}, \cdots, \frac{1}{(n - k + 1)} \right)$.
	\end{itemize}
	When $n$ is large and we set $k = 1$, then the competitive ratio goes to $2$.
\end{claim}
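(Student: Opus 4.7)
The natural approach is strong induction on $k$, running from $k=n$ down to $k=1$. The base case is $k=n$, where each level-$n$ vertex $u$ has exactly one child (a level-$(n+1)$ leaf $w$) and the current maximum degree at $u$'s arrival is $1$; water filling trivially sets $x_{(u,w),1}=1$, giving a load $(1)$, which matches the odd-case formula at $n-k+1=1$. For $k<n$ I would assume inductively that every level-$(k+1)$ vertex finishes its arrival with load multiset equal to $L_{k+1}$ as described in the statement.

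For the inductive step I fix a level-$k$ vertex $v$ with $d:=n-k+1$ children $u_1,\ldots,u_d$. At $v$'s arrival the current maximum degree rises to $d$, which introduces a brand-new color; each $u_i$ enters with its inductively-given load on colors $[d-1]$ plus a zero in color $d$. A preliminary observation, provable by a side-induction on the subtree structure, is that because all subtrees rooted at a level-$(k+1)$ vertex are isomorphic and water filling is deterministic, all children $u_i$ share the same permutation of loads across colors; we may therefore identify ``high,'' ``low,'' and ``zero'' colors uniformly across the $u_i$'s. The body of the argument is then a case analysis on the parity of $d$: when $d$ is odd, the extended $\tilde{L}_{k+1}$ consists of $(d-1)/2$ copies of $\frac{2d-3}{d-1}$, $(d-1)/2$ copies of $\frac{1}{d-1}$, and a single $0$; when $d$ is even it consists of $d/2$ copies of $\frac{2(d-1)}{d}$ together with $d/2$ zeros. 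In either case, I would trace the water filling through the sequential processing of $(v,u_1),\ldots,(v,u_d)$ and show by an inner induction on the edge index $j$ that after step $j$ the loads on $v$ and on the remaining unprocessed children satisfy a tight structural invariant which culminates, at $j=d$, in $v$'s load matching $L_k$.

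The step I expect to be the main obstacle is precisely this inner induction over $j$: each edge's processing simultaneously shifts $v$'s load across several color ``types'' (the new color, the low colors, the high colors, and those already raised by earlier edges), and the minimizer of $\max(\ell_v(c),\ell_{u_j}(c))$ can bounce between these groups in somewhat subtle ways. The invariant I would maintain records, at each step $j$, the number of colors currently at each distinct load value on $v$ together with the residual slack on each unprocessed $u_i$; once that invariant is verified, the closed-form values $\frac{2d}{d+1}$ in the odd case and $\frac{2d-1}{d},\frac{1}{d}$ in the even case fall out algebraically from the stopping condition $\sum_c x_{(v,u_j),c}=1$. The final assertion that the competitive ratio tends to $2$ is then an immediate consequence of setting $k=1$ and letting $n\to\infty$, since in either parity the maximum load on the root behaves as $2-\Theta(1/n)$.
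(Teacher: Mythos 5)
Your outline follows the same route the paper itself gestures at -- and in fact the paper gives nothing more: it states only that the claim ``can be proven by induction and we omit the proof here,'' so there is no detailed argument to compare against. Your preparatory steps are all correct: the base case at level $n$; the fact that a level-$k$ arrival raises the current maximum degree from $d-1$ to $d=n-k+1$ and hence opens exactly one fresh color; the (genuinely needed) observation that, by determinism and isomorphism of the subtrees, all children of $v$ carry the \emph{same} load vector and not merely the same multiset, so the ``high,'' ``low,'' and ``new'' colors align across the $u_i$; and the extended child profiles, namely $(d-1)/2$ copies of $\tfrac{2d-3}{d-1}$, $(d-1)/2$ copies of $\tfrac{1}{d-1}$ and one $0$ when $d$ is odd, and $d/2$ copies of $\tfrac{2(d-1)}{d}$ with $d/2$ zeros when $d$ is even. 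The closing remark that the root's load is $2-\Theta(1/n)$ is also right.

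The genuine gap is exactly where you flag it: the inner induction over the edges $(v,u_1),\dots,(v,u_d)$ is announced but never carried out, and that trace \emph{is} the content of the claim -- the closed forms do not ``fall out'' until the invariant is actually verified. The approach does work, for what it is worth. In the even-$d$ case the trace is clean: for $j\leq d-1$, edge $j$ is absorbed entirely by the $d/2$ colors on which the children are empty, each gaining $2/d$, so after edge $d-1$ these colors sit at $\tfrac{2(d-1)}{d}$, tying with the children's saturated colors, and edge $d$ then splits $1/d$ over all $d$ colors, giving $\tfrac{2d-1}{d}$ and $\tfrac{1}{d}$. In the odd-$d$ case the tie structure really does migrate between the groups as $j$ grows, and moreover the color-wise maximum $\max\{L_v(c),L_{u_j}(c)\}$ switches between being attained by $v$ and by $u_j$ depending on $j$ and $d$ (already visible at $d=3$), so the invariant you propose must record both endpoints' levels per group, not only the counts of $v$'s load values; with that refinement the computation closes and yields $\tfrac{2d}{d+1}$ on $(d+1)/2$ colors and $0$ on the remaining $(d-1)/2$. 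In short: right approach, correct setup, but the decisive computation is missing from your writeup -- which, to be fair, leaves you at roughly the level of detail the paper itself chose to publish.
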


\paragraph{A Bad Example for Bipartite Graphs.}
Notice that the bad example above is a bipartite graph, but vertices can arrive from either side. Here we construct a slightly more complicated bipartite example under one-sided arrivals on which \textsc{water filling} is $2$ competitive.  In the following claim, we use $(a, b)$ to denote $(
\underbrace{a, \cdots, a}_{\D/2}, \underbrace{b, \cdots, b}_{\D/2})$. 

\begin{claim}
	For large enough $\D$, there exists a sequence  of online vertices such that $\boldmath{(2 - 4/\D, 4/\D)}$ is achievable for offline vertices.
\end{claim}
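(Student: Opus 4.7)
My plan is to construct an explicit adversarial arrival sequence together with an adversarial within-vertex edge-processing order, and to prove by induction on the arrivals that water-filling drives each offline vertex's load profile to the target $(2-4/\Delta,\,4/\Delta)$. The analysis will track coupled invariants on both the offline loads $L_u$ and the just-arrived online vertex's load $L_v$ throughout the sequence, using the complementary-equilibrium intuition: once every offline vertex has profile $(a,b)=(2-4/\Delta,4/\Delta)$ and every online vertex has the mirror profile $(b,a)$, one has $\max\{L_u(c),L_v(c)\}=a$ on every color $c$, so the whole palette is tied and further water-filling updates preserve the profiles (and $a+b=2$ is forced by the total-degree condition).

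The construction will proceed in two stages. The first is a short \emph{symmetry-breaking} stage: water-filling from the all-zero state preserves uniform per-color loads, so the first per-color asymmetry on the offline side must be engineered carefully. I would first let online vertices of differing degrees arrive and connect to disjoint subsets of offline vertices, creating offline vertices with unequal \emph{total} loads (but still per-color uniform). A subsequent online vertex $v$ of high degree is then brought in and its edges processed in an adversarial order across its heterogeneously-loaded offline neighbors; as $v$'s own profile $L_v$ builds up asymmetrically through this sequence, a direct edge-by-edge calculation shows that at some step the argmin set of $\max\{L_u(c),L_v(c)\}$ becomes a strict subset of $[\Delta]$, at which point water-filling concentrates weight on that subset and seeds a positive per-color gap $\epsilon_0>0$ on the offline side. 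The second stage is \emph{amplification}: each new online vertex is scheduled so that its evolving $L_v$ becomes bottom-heavy before reaching a top-heavy offline vertex (by processing its lightly-loaded offline neighbors first), so the max-load rule channels its remaining edge weight onto the top colors of offline vertices. A monotone potential argument on the top--bottom gap then yields geometric convergence to the target profile within $\operatorname{poly}(\Delta)$ arrivals, which is possible once $\Delta$ is large enough.

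The main obstacle will be establishing the symmetry-breaking step rigorously: the uniform per-color invariant is a genuine fixed point of water-filling under one-sided bipartite arrivals, so a mere counting of edges does not suffice, and the argument must unwind the joint dynamics of $L_u$ and $L_v$ edge by edge with particular attention to the order in which ties among colors of $v$ get broken as its load accumulates. I would handle this by explicit case analysis on a small parameterized subinstance, showing that a per-color gap $\epsilon_0>0$ is eventually produced, and then scale the construction to arbitrary $\Delta$ by taking disjoint copies. Once $\epsilon_0>0$ is in hand, the amplification stage's monotonicity argument is routine, the residual $4/\Delta$ on the light half emerges naturally as the equilibrium value at which max-load ties re-establish across the whole palette, and the claim follows.
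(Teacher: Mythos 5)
There is a genuine gap, and it sits exactly at the step you yourself flag as the main obstacle: symmetry breaking. The mechanism you propose for it --- offline vertices with unequal \emph{total} loads but per-color-uniform profiles, plus an adversarial edge-processing order at a high-degree online vertex --- provably cannot create a per-color gap when the palette is fixed. If every offline neighbor $u_i$ of the arriving vertex $v$ has a load vector that is constant over the available colors and $L_v$ starts at zero, then at every moment the pointwise maximum $\max\{L_{u_i}(c),L_v(c)\}$ is constant in $c$, all colors remain tied, water-filling spreads each edge uniformly, and by induction $L_v$ and every $L_{u_i}$ stay per-color uniform; the argmin set never becomes a strict subset of the palette, no matter how the totals differ or in what order you process the edges. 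So the ``direct edge-by-edge calculation'' you defer to a small case analysis would come up empty, and the rest of the plan (amplification of a gap $\epsilon_0>0$) has nothing to amplify.

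The ingredient your proposal never invokes, and which the paper's construction hinges on, is that the palette size is the \emph{current} maximum degree and grows over time; the ``uniform fixed point'' you worry about is simply not a fixed point once the palette grows. Concretely, an offline vertex connected to $\Delta/2$ degree-one online vertices arriving one at a time, while it is itself the current maximum-degree vertex, is forced to place each successive edge integrally on the single newly available color, yielding the fully asymmetric state $(1,0)$ (load $1$ on half of the eventual palette, $0$ on the rest). This is the paper's symmetry breaking. From there your amplification stage is essentially the paper's iteration: an online vertex first meets many $(1,0)$ offline vertices, which channels its mass onto the bottom half of the palette (reaching roughly $(2/\Delta,\,1+2/\Delta)$), and its remaining edge to a target offline vertex is then forced onto that vertex's top-half colors, raising it to $(1+2/\Delta,0)$; repeating gains $2/\Delta$ per round until $(2-4/\Delta,\cdot)$ is reached. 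So the second half of your plan is sound in spirit, but as written the construction never escapes the uniform state, and the proof does not go through without replacing your symmetry-breaking mechanism by the growing-palette argument.
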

\begin{proof}
	We only sketch the sequence here.
	\[
	\boldmath{(1, 0) \stackrel{(a)}{\rightarrow} (2/\D, 1 + 2/\D) \stackrel{(b)}{\rightarrow} (1 + 2/\D, 0) \rightarrow \stackrel{(c)}{\cdots} \rightarrow (2 - 4/\D, 0)}
	\]
The initial state (1, 0) is achieved by connecting an offline vertex $v$ to $\D/2$ online vertices one by one.~(a) is reached by having an online vertex $u$ neighbor $\D/2 + 1$ offline vertices $(1, 0)$ and one offline vertice $(1 + 1/\D, 1/\D)$, which can be produced in the former step. While~(b) is achievable by connecting $(2/\D, 1 + 2/\D)$ (online) and $(1,0)$ (offline). Finally, we repeat the above process in (a)(b) to get $(2 - 4/\D, 0)$. When $\D$ is large enough, we conclude that the water filling algorithm is exactly $2$ competitive.
\end{proof}
	\section{Omitted Proofs of \Cref{sec:fractional-algorithm}}\label{sec:fractional-deferred}

Here we provide the missing proofs of lemmas whose proof was deferred from \Cref{sec:fractional-algorithm}, restated here for ease of reference.

\lemtechlem*

\begin{proof}
	Suppose there exists $A < t \leq T$ such that  $\ktau+1 \leq \delta^t$ and $\ell^{t}(\ktau+1) - \ell^{t-1}(\ktau+1) < \beta / \delta^{t}$ and $\ell^{t}(\ktau) - \ell^{t-1}(\ktau) > 0$, then we can immediately derive that $\ell^{t}(\ktau) = \ell^{t}(\ktau+1)$, since $\ktau$ and $\ktau+1$ are active at the end of the iteration. But by \Cref{obs:monotonicity} we know that $\ell^{T}(\ktau) = \ell^{T}(\ktau+1)$ -- a contradiction. Finally, 
	$\ell^{t}(i) - \ell^{t-1}(i) \geq \ell^{t}(\ktau+1) - \ell^{t-1}(\ktau+1)$ for all 
	$\ktau < i \leq \delta^t$ by \Cref{obs:monotonicity}. 	
\end{proof}

\lemtauminindex*
\begin{proof}
	By \Cref{lem:tech_lem}, $\ell^{T}(\ktau) > \ell^{T}(\ktau+1)$ and 
	$\ell^{T}(\ktau) > \ell^{T-1}(\ktau)$ imply 
	$\ell^{T}(i) - \ell^{T-1}(i) = \beta / \delta^{T}$, for $\ktau + 1 \leq i \leq \delta^{T}$. Hence, if $\ktau \leq \delta^{T}\cdot \left(1-\frac{1}{\beta}\right)$, we would obtain
	$$\sum_{i = 1}^{\ktau} (\ell^{T}(i) - \ell^{T - 1}(i)) = 1 - \sum_{i = \ktau + 1}^{\delta^{T}} (\ell^{T}(i) - \ell^{T - 1}(i)) = 1 - (\delta^{T} - \ktau) \beta/\delta^{T} < 1 - (\beta/\delta^{T}) \cdot (\delta^{T}/\beta) = 0,$$  
	which would imply $\ell^{T}(\ktau) = \ell^{T-1}(\ktau)$ -- contradicting the fact that $k$ is critical.
\end{proof}

In order to lower bound $V^2_k$, we first prove the following two useful claims.

\begin{claim}
	\label{cor:tech_lem}
	If $\ktau$ is a critical color at step $T$, then for any $j > \ktau$ and for any $\T \geq A$
	\[\ell^{T}(j) - \ell^{\T}(j) = \sum_{\substack{\T < t \leq T \\ \delta^t \geq j}} \frac{\beta}{\delta^t}.
	\]
\end{claim}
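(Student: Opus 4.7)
My plan is to reduce the claim to a per-step identity via telescoping and then verify it by combining Lemma \ref{lem:tech_lem} with a capacity argument grounded in Lemma \ref{lem:tau_min_index}. Writing
$$\ell^T(j) - \ell^{\T}(j) = \sum_{t = \T+1}^{T} \bigl(\ell^t(j) - \ell^{t-1}(j)\bigr),$$
it suffices to show that the per-step increase equals $\beta/\delta^t$ when $\delta^t \geq j$ and vanishes otherwise, so that summing recovers exactly the indicator-restricted sum on the right-hand side.

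The case $\delta^t < j$ is immediate under the natural convention that $\ell^s(i) = 0$ whenever position $i$ exceeds the current number of available colors $\delta^s$: since $\delta^{t-1} \leq \delta^t < j$, we have $\ell^{t-1}(j) = \ell^t(j) = 0$, so the term drops out, matching its omission on the right.

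The heart of the proof is the case $\delta^t \geq j > \ktau$. Here I would first establish the auxiliary fact that color $\ktau$ itself is raised at step $t$, i.e., $\ell^t(\ktau) > \ell^{t-1}(\ktau)$; given this, Lemma \ref{lem:tech_lem} immediately yields $\ell^t(i) - \ell^{t-1}(i) = \beta/\delta^t$ for all $\ktau < i \leq \delta^t$, in particular for $i = j$. To prove the auxiliary fact, I argue by contradiction: suppose $\ell^t(\ktau) = \ell^{t-1}(\ktau)$. By the monotonicity of per-step increases (Observation \ref{obs:monotonicity}, second bullet), the per-step increase is non-decreasing in position, forcing $\ell^t(i) = \ell^{t-1}(i)$ for every $i \leq \ktau$. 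Consequently, the entire unit of weight that step $t$ assigns to the newly arrived edge $(u, v_t)$ lies on positions with $\ktau < i \leq \delta^t$. Since the bound constraint caps each $x_{e,c}$ by $\beta/\delta^t$, the total weight on $(u,v_t)$ is at most $(\delta^t - \ktau) \cdot \beta/\delta^t = \beta - \beta \ktau/\delta^t$. But the algorithm always assigns total weight exactly $1$, so $\beta - \beta\ktau/\delta^t \geq 1$, i.e., $\ktau \leq \delta^t(1-1/\beta) \leq \delta^T(1-1/\beta)$, contradicting Lemma \ref{lem:tau_min_index}.

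The only mild obstacle is bookkeeping around newly-introduced colors (positions $i$ with $\delta^{t-1} < i \leq \delta^t$ for which $\ell^{t-1}(i) = 0$), but both the cap $\beta/\delta^t$ and Lemma \ref{lem:tech_lem} apply uniformly for all $\ktau < i \leq \delta^t$ regardless of whether position $i$ existed at step $t-1$, so the capacity bound and the resulting per-step equality both carry over. Combining the two cases and summing over $t \in (\T, T]$ restricted to those with $\delta^t \geq j$ yields exactly the claimed identity.
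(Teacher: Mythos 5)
Your proposal is correct and follows essentially the same route as the paper's proof: telescope over steps, handle $\delta^t<j$ trivially, and for $\delta^t\geq j$ show by contradiction (using the cap $\beta/\delta^t$, the total assignment of $1$, and \Cref{lem:tau_min_index}) that $\ell^t(\ktau)>\ell^{t-1}(\ktau)$, so that \Cref{lem:tech_lem} gives the per-step increment $\beta/\delta^t$. The only difference is cosmetic: you make explicit the appeal to the second bullet of \Cref{obs:monotonicity} (which the paper uses implicitly) and rearrange the same capacity inequality into the form $\ktau\leq \delta^T(1-1/\beta)$.
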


\begin{proof}
	We prove that for any $t\geq A$ and $\delta^{t} \geq \ktau$, then $\ell^{t}(\ktau) > \ell^{t-1}(\ktau)$.
	Assume not, then we have $$1 = \sum_{i = 1}^{\delta^{t}}(\ell^{t}(i) - \ell^{t-1}(i)) = \sum_{i = \ktau + 1}^{\delta^{t}}(\ell^{t}(i) - \ell^{t}(i-1)) \leq (\delta^{t} - \ktau)\cdot \beta/\delta^{t} \leq (\delta^{T} - \ktau)\cdot \beta/\delta^{T} < 1.$$
	
	Where that last inequality is since  $\ktau > (1 - 1/\beta)\delta^{T}$, by \Cref{lem:tau_min_index}.
	Therefore, by \Cref{lem:tech_lem}, we have $\ell^{t}(j) - \ell^{t - 1}(j) = \beta/\delta^t$ for $j \leq \delta^t$.
	Consequently, 
	\begin{align*}
	\ell^{T}(j) - \ell^{\T }(j)  & = \sum_{t = \T  + 1}^{T} (\ell^{t}(j) - \ell^{t - 1}(j)) = \sum_{t = \T  + 1}^{T}\mathbb I\{\delta^{t} \geq j\} (\ell^{t}(j) - \ell^{t - 1}(j)) = \sum_{\substack{\T < t \leq T \\ \delta^{t} \geq j}} \frac{\beta}{\delta^{t}}.\qedhere 
	\end{align*}
\end{proof}

Next, we bound the \emph{total} load on the colors after a critical color $\ktau$.

\begin{claim}
	\label{cla:tech_lem_all}
	If $\ktau$ is a critical color at step $T$, then for any  $\T \geq A$
	\[
	\sum_{j = \ktau+1}^{\delta^{T}} \left(\ell^{T}(j) - \ell^{\T}(j)\right) \geq \sum_{j = \T + 1}^{\delta^T} \beta\cdot\frac{\delta^j - \ktau}{\delta^j}.
	\]
\end{claim}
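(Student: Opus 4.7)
My plan is to derive the claim as an equality and then weaken it by dropping non-positive contributions. The key tool is the identity supplied by \Cref{cor:tech_lem}: for every color $j > k$,
$$\ell^{T}(j) - \ell^{\T}(j) \;=\; \sum_{\substack{\T < t \leq T \\ \delta^{t} \geq j}} \frac{\beta}{\delta^{t}}.$$

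Plugging this into the LHS and swapping the order of summation, each step $t \in (\T, T]$ contributes $\beta/\delta^{t}$ once for every color index $j \in [k+1,\delta^{T}]$ satisfying $\delta^{t} \geq j$. Since $\delta^{t} \leq \delta^{T}$ (the global maximum degree is monotone non-decreasing in time and $t \leq T$), this count equals $\max(\delta^{t} - k,\,0)$. Hence the LHS equals
$$\sum_{\substack{\T < t \leq T \\ \delta^{t} > k}} \beta \cdot \frac{\delta^{t} - k}{\delta^{t}}.$$
For any step $t$ with $\delta^{t} \leq k$, the quantity $\beta(\delta^{t} - k)/\delta^{t}$ is non-positive, so extending the summation to all of $(\T,T]$ only weakens the equality into the $\geq$ inequality claimed.

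The argument is essentially a Fubini exchange plus one monotonicity observation, so I do not anticipate a substantive obstacle. The only care needed is the bookkeeping that separates steps with $\delta^{t} > k$ (which contribute positively and account for the full LHS) from steps with $\delta^{t} \leq k$ (which contribute non-positively on the RHS and can be absorbed into the inequality direction), and the minor notational point of aligning the RHS summation range with the set of relevant steps between $\T$ and $T$, as this is the form in which the bound is invoked in \Cref{lem:tech_lem_all}.
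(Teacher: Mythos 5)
Your proof is correct and follows essentially the same route as the paper's: invoke \Cref{cor:tech_lem}, exchange the order of summation, count how many color indices in $[\ktau+1,\delta^T]$ each step covers (which is $(\delta^t-\ktau)^+$ since $\delta^t\leq\delta^T$), and then drop the non-positive terms with $\delta^t\leq\ktau$. The only remark worth making is that the ``minor notational point'' you flag is genuine rather than cosmetic: \Cref{cor:tech_lem} only supplies steps $t\in(\T,T]$, so the exchange yields $\sum_{t=\T+1}^{T}\beta\cdot\frac{\delta^t-\ktau}{\delta^t}$, whereas the claim as printed runs the step index up to $\delta^T\geq T$; the paper's own proof performs this same silent replacement of the upper limit $T$ by $\delta^T$, so your derivation is, if anything, the more careful rendering of the intended argument.
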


\begin{proof}
	By \Cref{cor:tech_lem}, we have
	\begin{align*}
	\sum_{i=\ktau+1}^{\delta^T} \left(\ell^{T}(i) - \ell^{\T}(i)\right) 
	&\geq \sum_{i=\ktau+1}^{\delta^T} \sum_{\substack{\T + 1 \leq j \leq \delta^T \\ \delta^{j} \geq i}} \frac{\beta}{\delta^j}\\
	&= \sum_{j = \T + 1}^{\delta^T} \sum_{\delta^j \geq i \geq \ktau} \frac{\beta}{\delta^j}\\
	&= \sum_{j = \ktau^* + 1}^{\delta^T} \beta\cdot \frac{\delta^j - \ktau}{\delta^j}.\qedhere
	\end{align*}
\end{proof}

We now ready to prove the main lower bound volume lemma.
\techlemall*
	\begin{proof}		
	Substituting $\T$ with $\ktau^*$ in \Cref{cla:tech_lem_all} (note that,
	$\ktau^* \geq \delta^A \geq A$), we have
	\begin{align*}
	\sum_{j = \ktau+1}^{\delta^{T}} \left(\ell^{T}(j) - \ell^{\ktau^*}(j)\right)
	&= \sum_{j = \ktau^* + 1}^{\delta^T} \beta\cdot \frac{\delta^j - \ktau}{\delta^j}\\
	&\geq \sum_{j = \ktau^* + 1}^{\delta^T} \beta\cdot\frac{j - \ktau}{j}\\
	&\geq \beta\cdot (\delta^T - \ktau^*) - \beta \cdot \ktau\log\,\frac{\delta^T}{\ktau^*}\\	
	&= \beta \cdot \left(\delta^T - \ktau^* - \ktau\log\,\frac{\delta^T}{\ktau^*}\right),
	\end{align*}
	where the first inequality is since  $\delta^j \geq j$.
\end{proof}

\boundbigtau*
\begin{proof}
	As $k$ is critical at step $T$, by  \Cref{lem:tech_lem_all}, taking $\ktau^* = \ktau > \delta^A$, we have 
	$$V_2^k = \sum_{i=\ktau+1}^{\delta^T} \ell^{T}(i) \geq \sum_{i=\ktau+1}^{\delta^T} \left(\ell^{T}(i) - \ell^{\ktau}(i) \right)   \geq \beta\cdot \left(\delta^T - \ktau - \ktau\log\,\frac{\delta^T}{\ktau}\right).$$
	
	In addition, by \Cref{lem:tau_min_index}, we have $\ktau \geq \delta^T \cdot \left(1-\frac{1}{\beta}\right) $.
	Thus, we find that indeed, by \Cref{eq:upperboundk}
	\begin{align*}
	\ell^T(\ktau) & 
	\leq \frac{\delta^T- V_2^{\ktau}}{\ktau} 
	\leq \frac{\delta^T - \beta\cdot \left(\delta^T - \ktau - \ktau\log\,\frac{\delta^T}{\ktau}\right)}{\ktau} 
	= (1 - \beta)\frac{\delta^T}{\ktau} + \beta + \beta\log\,\frac{\delta^T}{\ktau} 
	\leq \beta\log\,\frac{\beta}{\beta - 1}.\qedhere 
	\end{align*}
\end{proof}

\boundsmalltau*

\begin{proof}
	For ease of notation, in this lemma we will let $\Delta=\delta^T$.
	We will consider two cases and show the bound holds for both cases.
	
	\paragraph{Case 1: $\bm{\delta^A/\beta \leq \ktau \leq \delta^A}$.} 
	By \Cref{lem:tech_lem_all} with $\ktau^* =  \delta^A \geq \ktau$, we have
	\begin{align*}
	V_2^k & \geq \beta \cdot \left(\Delta - \delta^A - \ktau\log\frac{\delta^T}{\delta^A}\right).
	\end{align*}

	As a consequence, by \Cref{eq:upperboundk}, we have
	\begin{align*}
	\ell^T(\ktau) &
	\leq \frac{\D - V_2^k}{\ktau} \\
	& \leq \left(\D - \beta(\D - \delta^A) + \beta\ktau\log\frac{\D}{\delta^A}\right) / \ktau \\
	& = (1 - \beta)\frac{\D}{\ktau} + \beta\frac{\delta^A}{\ktau} + \beta\log\frac{\D}{\delta^A}\\
	&= \frac{\delta^A}{\ktau}((1 - \beta)\frac{\D}{\delta^A} + \beta) + \beta\log\frac{\D}{\delta^A} \\
	& \leq \beta ((1-\beta)\frac{\D}{\delta^A} + \beta) + \beta\log\frac{\D}{\delta^A} \\
	& \leq \beta^2 - \beta + \beta\log\frac{1}{\beta - 1},
	\end{align*}
	where the third inequality above holds because $\frac{\delta^A}{\ktau} \leq \beta$ and $\frac{\Delta}{\delta^A} \leq \frac{\Delta}{\ktau} \leq \beta/(\beta - 1)$, by \Cref{lem:tau_min_index} and the last inequality holds because $\beta ((1-\beta)\frac{\D}{\delta^A} + \beta) + \beta\log\frac{\D}{\delta^A}$ is maximized when $\frac{\Delta}{\delta^A} = 1/(\beta - 1)$ (as can be verified by differentiating with respect to $x=\frac{\D}{\delta^A}$).

	\paragraph{Case 2: $\bm{\ktau \leq \delta^A/\beta}$.}	
	Note that after the arrival of vertex $u$, the color load is at most $\beta$, by \Cref{obs:load_arrival}. 
	We may safely assume that $A \geq \beta \ktau$, since we can always increase $A$ to $\beta\ktau$ without increasing volume in $V_2^k$ (which we aim to lower bound), by \Cref{obs:load_arrival}.
	\begin{align}
	V_2^k &= \sum_{i = \ktau + 1}^{\D} \ell^{\D}(i) \nonumber\\
	&=\sum_{i = \ktau + 1}^{\D}\ell^{A}(i) +  \sum_{i = \ktau + 1}^{\D} (\ell^{\delta^A}(i) -  \ell^{A}(i)) +  \sum_{i = \ktau + 1}^{\D} (\ell^{\D}(i) -  \ell^{\delta^A}(i)) \nonumber \\
	&\geq (A - \beta\ktau) + \sum_{j = A + 1}^{\delta^A}\beta\cdot\frac{\delta^j- \ktau}{\delta^j} + \beta \cdot \left(\D - \delta^A - \ktau\log\frac{\D}{\delta^A}\right) \nonumber\\
	&\geq (A - \beta\ktau) + (\delta^A - A)\cdot \beta \cdot \frac{\delta^A - \ktau}{\delta^A} + \beta\cdot \left(\D - \delta^A - \ktau\log\frac{\D}{\delta^A}\right)  \label{eq:substitute}\\
	&\geq (\delta^A - \beta \ktau) \cdot \beta \cdot \frac{\delta^A - \ktau}{\delta^A} + \beta\cdot (\D - \delta^A) - \beta\ktau\log\frac{\D}{\delta^A}. \nonumber
	\end{align}
	The first inequality holds by \Cref{obs:load_arrival}, \Cref{cor:tech_lem} and \Cref{lem:tech_lem_all} with $\ktau^*=\delta^A \geq \beta \ktau\geq \ktau$. The second inequality holds since for $j > A$, $\delta^j \geq \delta^A$. For the last inequality, substituting $A$ with $\beta\ktau$, a lower bound of $A$, will only decrease \Cref{eq:substitute}, since the coefficient of $A$ is non-negative; i.e. $1 - \beta + \beta\frac{\ktau}{\delta^A} \geq 1 - \beta + \beta\frac{\ktau}{\D} \geq 1 - \beta + \beta\cdot (1 - \frac{1}{\beta}) = 0$, where the last step follows by \Cref{lem:tau_min_index}. Consequently, by \Cref{eq:upperboundk}, we have
	\begin{align*}
	\ell^T(k) &\leq \frac{\D - V_2^k}{\ktau} \\
	& \leq \frac{\D - \left( (\delta^A - \beta \ktau) \cdot \beta \cdot \frac{\delta^A - \ktau}{\delta^A} + \beta\cdot (\D - \delta^A) - \beta\ktau\log\frac{\D}{\delta^A}\right)}{k} \\
	& \leq (1 - \beta)\frac{\Delta}{\ktau} + \beta^2 + \beta - \beta^2\frac{\ktau}{\delta^A}  + \beta\log\frac{\D}{\delta^A} \\
	& = (1 - \beta)\frac{\Delta}{\ktau} + \beta^2 + \beta - \beta^2\frac{\ktau}{\delta^A}  + \beta(\log\frac{\D}{\ktau} + \log\frac{\ktau}{\delta^A}) \\
	&= \beta^2 + \beta + (\beta\log\frac{\ktau}{\delta^A} - \beta^2\frac{\ktau}{\delta^A}) + (\beta\log\frac{\D}{\ktau} + (1 - \beta)\frac{\D}{\ktau})\\
	&\leq \beta^2 + \beta + (\beta\log\frac{1}{\beta} - \beta) + (\beta\log\frac{\beta}{\beta - 1} - \beta) \\
	&=\beta^2 - \beta + \beta\log\frac{1}{\beta - 1}.\qedhere
	\end{align*}
\end{proof}

Finally, we will need the following simple inequalities for our analysis.

\begin{restatable}{fact}{betabound}
	\label{cl:beta}
	For $\beta \in (1,2)$ we have
	$\beta \leq \generalUBBeta$, as well as
	$\biUBBeta \leq \generalUBBeta$.
\end{restatable}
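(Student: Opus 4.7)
The plan is to reduce each of the two inequalities to well-known elementary calculus facts about the natural logarithm, after a small amount of algebraic manipulation.

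For the second inequality, $\biUBBeta \leq \generalUBBeta$, I would first observe that the term $\beta \log \frac{1}{\beta-1}$ appears on both sides, since $\log\frac{\beta}{\beta-1} = \log \beta + \log\frac{1}{\beta-1}$. Cancelling this common term and dividing by $\beta > 0$ reduces the claim to $\log \beta \leq \beta - 1$, which is the standard inequality $\log x \leq x - 1$ valid for all $x > 0$ (with equality only at $x = 1$), applied at $x = \beta \in (1, 2)$.

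For the first inequality, $\beta \leq \generalUBBeta$, I would divide through by $\beta > 0$ to obtain the equivalent claim $1 \leq (\beta - 1) + \log\frac{1}{\beta - 1}$. Introducing the substitution $t = \beta - 1 \in (0, 1)$, this becomes $1 \leq t - \log t$. To finish, I would analyze $g(t) = t - \log t$ on $(0, 1]$: since $g'(t) = 1 - 1/t \leq 0$ on this interval, $g$ is monotonically decreasing, and $g(1) = 1$, so $g(t) \geq g(1) = 1$ throughout $(0, 1]$, which is exactly what was needed.

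Neither inequality presents a real obstacle; the only care needed is to verify that dividing by $\beta$ does not flip the inequality (it does not, since $\beta > 0$) and that the reductions yield equivalences, not merely implications. The ``hard part,'' to the extent there is one, is just recognizing the right algebraic reorganization to expose the $\log x \leq x - 1$ template in each case — once identified, both claims follow in one line of calculus.
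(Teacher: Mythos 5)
Your proof is correct and follows essentially the same route as the paper: both inequalities reduce, after cancelling and dividing by $\beta>0$, to the template $\log x \leq x-1$ (at $x=\beta$ for the second, and at $x=\beta-1$ for the first, which your monotonicity analysis of $t-\log t$ on $(0,1]$ handles correctly). No gaps.
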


\begin{proof}
	
	For both inequalities, we rely on $x - 1\geq \log(x)$ for all $x\geq 1$ to obtain the claimed inequalities. For the first, we have
	\begin{align*}
	\generalUBBeta - \beta & = \beta^2 - \beta + \beta\log\frac{1}{\beta - 1} - \beta = \beta\left((\beta - 1) - 1 - \text{log}\,(\beta - 1)\right) \geq 0.
	\end{align*}
	
	For the second inequality, we have
	\begin{align*}
	\beta^2 - \beta + \beta\log\frac{1}{\beta - 1}  -\beta\log\frac{\beta}{\beta - 1} & = \beta(\beta - 1 - \log\beta) \geq 0.\qedhere
	\end{align*}
\end{proof}

\subsection{Tight Example for Bounded Water Filling}
\label{sec:tight-example}
Here we give a tight instance for \Cref{alg:cap_water_filling} in general graphs, showing our analysis in \Cref{sec:upper_bound_bipartite} is tight. We use the same construction shown in \Cref{thm:lower-general}. Moreover, we assume that the state is ``old'' until phase $k = (\beta - 1)n$. We only consider the case when $b$ is sufficiently large. For sufficiently large $t$, where $t < k$, we can see the color status for vertex $v$ is roughly 
\[
\ell_{v}^{t}(i)=\left\{
\begin{matrix}
\beta\log\frac{\beta}{\beta - 1} & i \leq (1-\frac{1}{\beta})t\\
\sum_{x = i}^{t}\frac{\beta}{x} & (1-\frac{1}{\beta})t < i \leq k. 
\end{matrix}
\right.
\]
Meanwhile, the color status for vertex $u_t$ is roughly
\[
\ell_{u_t}^{t}(i)=\left\{
\begin{matrix}
0 & i \leq (1-\frac{1}{\beta})t\\
\beta & (1-\frac{1}{\beta})t < i \leq k.
\end{matrix}
\right.
\]
After the arrival of $u_k$, the coin is up, and the final color status for $u_k$(in round $n$) is 
\[
\ell_{u_k}^{n}(i)=\left\{
\begin{matrix}
\beta\log\frac{1}{\beta - 1} & i \leq \frac{(\beta - 1)^2}{\beta}n\\
\beta^2 -\beta + \beta\log\frac{1}{\beta - 1}n & \frac{(\beta - 1)^2}{\beta} < i < (\beta - 1)n\\
\sum_{x = i}^{n} \frac{\beta}{x} & (\beta - 1)n \leq i \leq n. 
\end{matrix}
\right.
\]
Consequently, our analysis for \Cref{alg:cap_water_filling} in \Cref{sec:upper_bound_bipartite} is tight and 1.777  is the best achievable competitive ratio for this algorithm.
	\section{Omitted Proofs of \Cref{sec:online-rounding}}\label{sec:unknown-delta-deferred}
Here we provide the missing proofs of lemmas and theorem deferred from \Cref{sec:online-rounding}, restated here for ease of reference.

We start by bounding the number of colors used during each \phase.
\unknownnumcolors*
\begin{proof}
	As $\Delta_i \geq (6 \log n) / p^3$, we have $\E[C_i] = \E[|\mathcal{S}_i|] \leq \alpha\Delta_i\cdot p \leq \alpha \cdot 6(\log n)/p^2$. Plugging $\epsilon=p$ into the upper multiplicative tail bound of \Cref{chernoff}, we get 
	\begin{align*}
		\Pr[C\geq \alpha\Delta_i\cdot p(1+p)] & \leq \exp\left(-\frac{\alpha\Delta_i\cdot p(1+p)}{3}\right) \leq \exp\left(-\frac{((6 \log n) / p^3)\cdot p^3}{3}\right) = 1/n^2.\qedhere
	\end{align*}
\end{proof}

The main technical lemma of this section, bounding the maximum degree of the uncolored graph $U_{i+1}$ in terms of its $i^{th}$ \phase counterpart, $U_i$, is as follows.

\unknowndegdecrease*

Before proving this lemma (in turn deferred to \Cref{sec:deg-decrease}), we show how it implies our main theorem, restated below.

\roundingunknown*
\begin{proof}
	For our proof, we will require the following fact.
	\begin{fact}\label{fact:exps}
		All $p \in [0,1/10]$ satisfy
		$(1-p-4p^2)\geq \exp(-2\cdot p)$ and $(1-p+7p^2) \leq \exp(-p/4)$.
	\end{fact}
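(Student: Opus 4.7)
The plan is to reduce both inequalities to elementary polynomial inequalities by invoking well-known one-sided bounds on $\exp(-x)$, and then check a linear condition on $p$ in each case.

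For the first inequality $(1-p-4p^2) \geq \exp(-2p)$, I would first upper-bound $\exp(-2p)$ using a truncated Taylor series. For $x \in [0,1]$ (in our case $x=2p \in [0,1/5]$) the sign-alternating series $\exp(-x)=1-x+x^2/2-x^3/6+\cdots$ has terms decreasing in absolute value, so truncating after the positive $x^2/2$ term overestimates: $\exp(-2p) \leq 1 - 2p + 2p^2$. It then suffices to show $1-p-4p^2 \geq 1-2p+2p^2$, equivalently $p(1-6p) \geq 0$, which holds whenever $p \leq 1/6$. Since $1/10 < 1/6$, this is enough.

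For the second inequality $(1-p+7p^2)\leq \exp(-p/4)$, the slack is much more generous and I would simply use the standard global bound $\exp(-x) \geq 1-x$ (valid for all $x\in\mathbb{R}$), which gives $\exp(-p/4) \geq 1 - p/4$. It then suffices to verify $1-p+7p^2 \leq 1-p/4$, equivalently $7p^2 \leq 3p/4$, i.e.\ $p \leq 3/28$. Since $1/10 < 3/28$, this is also enough.

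I do not expect any real obstacle here; the whole argument is a routine application of Taylor truncation on the small interval $[0,1/10]$. The one subtlety worth being careful about is choosing Taylor truncations of the right sign (upper bound on $\exp(-2p)$, lower bound on $\exp(-p/4)$) so that the resulting polynomial inequalities are in the correct direction; both are supplied by standard alternating-series/convexity facts, so no case analysis on $p$ is needed beyond verifying the final linear threshold.
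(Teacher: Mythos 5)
Your proof is correct: both truncation bounds are valid on the relevant range ($e^{-x}\leq 1-x+x^2/2$ for $x\geq 0$, and $e^{-x}\geq 1-x$ globally), and the resulting linear thresholds $p\leq 1/6$ and $p\leq 3/28$ both cover $[0,1/10]$. The paper states this fact without proof, so your routine verification is exactly the kind of argument intended, and there is nothing further to compare.
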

	For $p = \sqrt[\rootconstant]{(24\log n)/\Delta'} \leq 1/10$ to hold, we need $\Delta'\geq \multiplicativeconstant \cdot \log n$. That is, $c=\multiplicativeconstant$.
	
	By \Cref{rand-max-deg-decrease} and \Cref{fact:exps}, $\Pr[\Delta_{i+1}\leq \Delta_i \cdot \exp(-2\cdot p)] \leq \Pr[\Delta_{i+1}\leq \Delta_i \cdot (1-p-4p^2)] \leq 3/n^2$, provided $\Delta_i\geq (24\log n)/p^4$. 
	By our choice of $p=\sqrt[\rootconstant]{(24\log n)/\Delta'}$, this implies that for all $i<\numphases$,
	$$\Delta\cdot \exp(-2\cdot p)^i \geq \Delta\cdot p^8 \geq \Delta'\cdot p^8 = (24 \log n)/p^4.$$
	Consequently, if we let $A_i \triangleq[\bigwedge_i \Delta_i \geq (24\log n)/p^4]$ be an indicator for the event that $\Delta_i$ is large enough to appeal to \Cref{rand-num-colors-used} and \Cref{rand-max-deg-decrease} for \phase $i$, then taking union bound (\Cref{ub}) over all $j < i$, we have 
	$$\Pr[\overline{A_i}] = \Pr\left[\Delta_i \leq (24\log n)/p^4\right]\leq \Pr\left[\bigvee_{j < i} (\Delta_{j} \leq \Delta\cdot \exp(-2\cdot p)^j)\right] \leq n\cdot 3/n^3 = 3/n^2.$$
	
	Now, by \Cref{rand-max-deg-decrease} and $p\leq 1/10$, we have 
	$$\Pr\left[\Delta_i - \Delta_{i+1} \leq p(1-7p) \cdot \Delta_i \;\middle|\; A_i\right] = \Pr\left[\Delta_{i+1}\geq \Delta_i\cdot (1-p+7p^2) \;\middle|\; A_i \right] \leq 6/n^2.$$
	
	On the other hand, by \Cref{rand-num-colors-used}, if we denote by $C_i$ the number of colors used during the $i^{th}$ \phase, then the probability of any of the $C_i$ being large is at most 
	\begin{align*}
		\Pr\left[C_i \geq \alpha\Delta_i\cdot p(1+p) \;\middle|\; A_i\right] \leq 1/n^2.
	\end{align*}
	
	Now, by $\alpha\in [1,2]$ and $p\leq 1/10$, we find that $\alpha+54p\leq \alpha(1+27p)\leq \alpha\cdot \frac{1+p}{1-7p}$. Therefore, if we let $B_i = \mathds{1}[C_i \geq (\alpha+54p)\cdot (\Delta_i - \Delta_{i+1})]$ be the bad event that we use a significantly higher number of colors in \phase $i$ than the amount by which we decrease the maximum degree in the uncolored graph in that \phase. Then, we have 
	\begin{align*}
		\Pr[B_i] & \leq \Pr[C_i \geq (\alpha+54p)\cdot (\Delta_i - \Delta_{i+1}) \mid A_i] + \Pr[\overline{A_i}] \\
		& \leq \Pr[C_i \geq \alpha\cdot \Delta_i \cdot p(1+p) \mid A_i ] + \Pr[\Delta_i - \Delta_{i+1} \leq p(1-7p)\cdot \Delta_i \mid A_i] + \Pr[\overline{A_i}] \\
		& \leq 1/n^2 + 6/n^2 + 3/n^2 = 10/n^2.
	\end{align*}
	
	Therefore, by union bound, we have that with probability at least $1-10/n$, the number of colors used during the \phases is at most 
	\begin{align*}
		\sum_i (\alpha + 54p) \cdot (\Delta_i - \Delta_{i+1}) & \leq (\alpha + 54p)\cdot \Delta.
	\end{align*}

	Finally, we upper bound the number of colors used by the greedy step of Line \ref{line:greedy-unknown}, by upper bounding the uncolored subgraph's maximum degree before \Cref{line:greedy-unknown}. We note that by \Cref{rand-max-deg-decrease} and \Cref{fact:exps}, we have $\Pr[\Delta_{i+1} \geq \Delta_i\cdot \exp(-p/4) \mid A] \leq \Pr[\Delta_{i+1} \geq \Delta_i \cdot (1-p+7p^2) \mid A] \leq 6/n^2$.
	Therefore, we find that the final uncolored subgraph $U$ has maximum degree $\Delta(U)\leq \Delta\cdot p$, as  
	\begin{align*}
	\Pr[\Delta(U)\geq \Delta\cdot p] & \leq \Pr[\Delta_{\numphases}\geq \Delta\cdot \exp(-p/4\cdot \numphases)] \\
	& \leq \Pr\left[\bigvee_i \left( \Delta_{i+1} \geq \Delta_i\cdot \exp(-p/4)\right) \right] \\
	& \leq 
	\Pr\left[\bigvee_i \left(\Delta_{i+1} \geq \Delta_i\cdot \exp(-p/4) \right)\, \Bigg\vert\, A \right] + \Pr[\overline{A}] \\
	& \leq n\cdot 6/n^2 + 3/n \\
	& = 9/n.
	\end{align*}
	Consequently, the greedy step of \Cref{line:greedy-unknown} uses a further $2\Delta\cdot p$ colors, and so \Cref{alg:round-unknown} is an $(\alpha+56p)$-competitive online edge coloring algorithm.
\end{proof}

	\subsection{Progress in degree decrease}\label{sec:deg-decrease}
	
	In this section we will show that each \phase $i$ of \Cref{alg:round-unknown} with $\Delta_i\geq 24(\log n)/p^3$ decreases the maximum degree of the uncolored graph by a $1/(1-p \pm O(p^2))$ factor. That is, we will prove \Cref{rand-max-deg-decrease}. As outlined in \Cref{sec:online-rounding}, our general approach will be to bound the number of times each near-maximum-degree vertex $v$ in $U_i$ is matched during the \phase and the number of times it is matched without having an edge colored. 
	
	For the remainder of this section, we will need the following random variables.
	First, for any vertex $v$ and index $i$, we let $d_i(v)$ denote $v$'s degree in the uncolored subgraphs $U_i$.
	Moreover, for each edge $e$ we let $L^{(i)}_{e,j} = x^{(i)}_j$ if $j\in \mathcal{S}_i$ and zero otherwise, and similarly $L^{(i)}_{v,j} \triangleq \sum_{e\ni v} L^{(i)}_{e,j}$.
	We refer to the above as the \emph{load} of edge $e$ and vertex $v$ in color $j$ of \phase $i$. Finally, we denote by $\ell^{(i)}_e \triangleq \sum_j L^{(i)}_{e,j}$ and $\ell^{(i)}_v \triangleq \sum_j L^{(i)}_{v,j}$ the load of the edge $e$ and vertex $v$ in the sampled colors of \phase $i$. Clearly, as each color index $j$ is in $\mathcal{S}_i$ with probability $p$, and as each edge is fractionally matched exactly once, we have that $\E[\ell^{(i)}_e] = p$ and therefore $\E[\ell^{(i)}_v] = d_i(v)\cdot p$. The following lemma asserts that these variables are concentrated around their mean. In all notation, we omit $i$, which will be clear from context.
	
	\begin{lem}\label{load-concentration}
		If $\Delta_i\geq (24\log n)/p^3$, then 
		\begin{enumerate}
			\item\label{load-edge} for each edge $e$ we have $\Pr[\ell_e \geq p(1+p)] \leq 1/n^4$, and
			\item\label{load-vertex} for each vertex $v$ of degree $d_i(v)\geq \Delta_i/2$ in $U_i$ we have $\Pr[|\ell_v - d_i(v)\cdot p| \geq d_i(v)\cdot p^2] \leq 2/n^3$.
		\end{enumerate} 
	\end{lem}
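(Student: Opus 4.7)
The plan is to apply multiplicative Chernoff bounds (\Cref{chernoff}) to both $\ell_e$ and $\ell_v$, exploiting two facts: the indicators $B_j \triangleq \mathbb{1}[j\in\mathcal{S}_i]$ are i.i.d.\ $\mathrm{Bernoulli}(p)$ and independent of the fractional coloring $x^{(i)}$ (since $\mathcal{S}_i$ is sampled a priori, independently of the graph and of all \marking randomness from earlier phases), and the $\epsilon$-boundedness of $\mathcal{A}$ controls the per-summand support tightly enough that the Chernoff exponent dominates $\log n$. Conditioning on $U_i$ (and hence on $x^{(i)}$) makes each of the two quantities below a sum of independent random variables, so the derived tail bounds hold deterministically conditionally and thus also unconditionally.

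For the per-edge bound, I would write $\ell_e = \sum_j x^{(i)}_{e,j}\cdot B_j$ as a sum of independent random variables taking values in $[0,\epsilon]$. Since $\sum_j x^{(i)}_{e,j}=1$, this yields $\E[\ell_e]=p$. Rescaling each summand by $1/\epsilon$ gives $[0,1]$-valued summands whose sum has mean $p/\epsilon$; applying the multiplicative Chernoff upper tail with relative deviation $\delta=p$ then gives
\[
\Pr\!\left[\ell_e \geq (1+p)\,p\right] \;\leq\; \exp\!\left(-\frac{p^2\cdot (p/\epsilon)}{3}\right) \;=\; \exp\!\left(-\frac{p^3}{3\epsilon}\right).
\]
Plugging in $\epsilon = p^4/(12\log n)$ yields $\exp(-4(\log n)/p) \leq 1/n^4$, using $p\leq 1$.

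For the per-vertex bound, I would instead collapse edges first, writing $\ell_v = \sum_j L_{v,j}$ where $L_{v,j} = \bigl(\sum_{e\ni v} x^{(i)}_{e,j}\bigr)\cdot B_j \in [0,1]$ (the latter since $x^{(i)}_j$ is a fractional matching of $U_i$). The summands remain independent across $j$, and swapping the sums yields $\E[\ell_v] = p\cdot \sum_j \sum_{e\ni v} x^{(i)}_{e,j} = p\cdot d_i(v)$ (since $\sum_j x^{(i)}_{e,j}=1$ for each $e$). Applying the two-sided multiplicative Chernoff bound with relative deviation $\delta=p$ yields
\[
\Pr\!\left[\lvert \ell_v - d_i(v)\cdot p\rvert \geq d_i(v)\cdot p^2\right] \;\leq\; 2\exp\!\left(-\frac{d_i(v)\cdot p^3}{3}\right).
\]
The hypothesis $d_i(v)\geq \Delta_i/2\geq 12(\log n)/p^3$ makes the exponent at least $4\log n$, so this bound is at most $2/n^4\leq 2/n^3$.

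I do not foresee any real obstacle beyond choosing the right scaling in each case: the edge-level bound must use $\epsilon$-boundedness to shrink the per-summand support, whereas the vertex-level bound benefits from first collapsing all edges incident to $v$ in a given color, which produces a $[0,1]$-valued summand ``for free'' from the matching constraint on $x^{(i)}_j$. The more delicate step involving repeated matches (which will require negative association) arises only later, in the proof of \Cref{rand-max-deg-decrease}.
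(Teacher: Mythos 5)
Your proposal is correct and follows essentially the same route as the paper's proof: decompose $\ell_e$ and $\ell_v$ as sums over colors of independent contributions coming from the i.i.d.\ sampling of $\mathcal{S}_i$, use the $\epsilon$-boundedness of $\mathcal{A}$ for the per-edge bound and the fractional-matching constraint ($L_{v,j}\leq 1$) for the per-vertex bound, and apply multiplicative Chernoff with relative deviation $p$. Your rescaling by $1/\epsilon$ and the explicit conditioning on $U_i$ (hence $x^{(i)}$), which is indeed independent of $\mathcal{S}_i$, are only cosmetic differences from the paper's direct application of \Cref{chernoff} with support bound $b=\epsilon$.
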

	\begin{proof}
		As noted above, $\E[\ell_e] = p$. Moreover, by the $(p^3/12\log n)$-boundedness of $f$ we have that $\ell_e=\sum_j L_{e,j}$ is the sum of bounded independent variables $L_{e,j} \in [0,p^3/12\log n]$. So, by Chernoff bounds (\Cref{chernoff}) with $\epsilon=p$, we  obtain
		\begin{align*}
			\Pr[\ell_e \geq p(1+p)] = \Pr\left[\ell_e \geq \E[\ell_e]\cdot (1+p)\right] \leq \exp\left(-\frac{p\cdot p^2}{3 p^3/(12 \log n)}\right) = \exp\left(-4\log n\right) = 1/n^4.
		\end{align*}
		
		Similarly, as noted above, $\E[\ell_v] = p\cdot d_i(v)$. Moreover, as $x^{(i)}$ is a feasible fractional matching, we have $|L_{v,j}|\leq 1$ for all $j$. So, by Chernoff bounds (\Cref{chernoff}), with $\epsilon=p$, we obtain 
		\begin{align*}
		\Pr[|\ell_v - \E[\ell_v]| \geq p^2\cdot d_i(v)] & = \Pr[|\sum_j L_{v,j} - \E[L_{v,j}]| \geq p\cdot  \sum_j \E[L_{v,j}]] \\
		& \leq 2\exp\left(-\frac{d_i(v)\cdot p\cdot p^2}{3}\right) \\
		& \leq 2\exp\left(-\frac{\Delta_i \cdot p\cdot p^2}{6}\right) \\
		& \leq 2\exp\left(-3\log n\right) \\
		& \leq 2/n^3.\qedhere
		\end{align*}		
	\end{proof}
	
	We will now want to bound the number of times a vertex is matched during a \phase. We will rely on \Cref{load-concentration} together with the following lemma.
	\begin{lem}\label{per-edge-p^3-bounded}
		Let $\vec{x}$ be a fractional matching with $\max_e x_e \leq p^4/(12 \log n)$. Then for each edge $e$, \marking run with input $\vec{x}$ outputs a matching $\mathcal{M}$ which matches each edge $e$ with probability 
		$$x_e \cdot (1-3p) \leq \Pr[e\in \mathcal{M}]\leq x_e$$
	\end{lem}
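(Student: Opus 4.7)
My plan is to obtain this lemma as an immediate consequence of \Cref{peredge-general} by plugging in $\epsilon := p^4/(12\log n)$. The hypothesis $\max_e x_e \leq \epsilon$ is exactly the assumption of the present lemma, so \Cref{peredge-general} applies and yields
\[
x_e\cdot\bigl(1 - 11\sqrt[3]{\epsilon \log(1/\epsilon)}\bigr) \leq \Pr[e \in \mathcal{M}] \leq x_e.
\]
The upper bound transfers verbatim, so all that remains is to verify the numerical inequality $11\sqrt[3]{\epsilon \log(1/\epsilon)} \leq 3p$, which cubing turns into
\[
p \cdot \log\!\left(\frac{12 \log n}{p^4}\right) \;\leq\; \frac{324}{1331}\,\log n.
\]

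To verify this inequality, I would decompose $\log(12\log n/p^4) = \log 12 + \log\log n + 4\log(1/p)$ and estimate each term using the standing parameters. From $p = \sqrt[12]{24\log n/\Delta'}$ together with the trivial bound $\Delta' \leq n$, we obtain $\log(1/p) \leq (\log n)/12$, so that $4\log(1/p) \leq (\log n)/3$. For $n$ larger than some absolute constant we also have $\log 12 + \log\log n \leq (\log n)/3$, whence $\log(12\log n/p^4) \leq (2/3)\log n$. Multiplying by the standing bound $p \leq 1/10$, the left-hand side is at most $(\log n)/15$, which is comfortably smaller than $(324/1331)\log n \approx 0.243\log n$. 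Hence the desired inequality holds and the lemma follows.

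The proof is essentially routine: the lemma is just \Cref{peredge-general} repackaged with the specific boundedness level dictated by the choice of $p$ in \Cref{alg:round-unknown}. The only place where one might slip up is ensuring that the universal constant $c = \multiplicativeconstant$ baked into the assumption $\Delta' \geq c \log n$ is large enough to guarantee $p \leq 1/10$ with enough slack for the algebra above; since this constant already carries a factor exponential in $\rootconstant = 12$, the slack is ample and no delicate calibration is required.
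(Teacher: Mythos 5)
Your proposal is correct and takes essentially the same route as the paper: both apply \Cref{peredge-general} with $\epsilon = p^4/(12\log n)$ and reduce the lemma to checking $11\sqrt[3]{\epsilon\log(1/\epsilon)}\leq 3p$. The only difference is bookkeeping in that check -- you use the definition $p=\sqrt[\rootconstant]{24(\log n)/\Delta'}$ together with $\Delta'\leq n$ to get $\log(1/p)\leq(\log n)/12$ and assume $n$ exceeds an absolute constant (harmless, since the standing requirement $n\geq\Delta'\geq\multiplicativeconstant\cdot\log n$ forces $n$ to be far larger than needed), whereas the paper's verification uses only $p\leq 1/10$, $n\geq 2$ and $p\log(1/p)\leq 1/e$.
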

	\begin{proof}	
		The upper bound on $\Pr[e\in \mathcal{M}]$ is true for all $\vec{x}$. For the lower bound, we have that by \Cref{peredge-general}, as $p\in [0,1/10]$ and as we may safely assume $n\geq 2$ (otherwise the problem is trivial), we have
		\begin{align*}
		\Pr[e\in \mathcal{M}] &\geq x_e \cdot (1 - 11p\sqrt[3]{p\cdot \log (12\log n/p^3)/12\log n})  \\
		& \geq x_e \cdot (1 - 11p\sqrt[3]{3p\log (1/p)/12\log n + p\cdot \log (12\log n)/12\log n}) \\
		& \geq x_e \cdot (1 - 11p\sqrt[3]{3(1/e)/ 12\log n + p}) & p\in [0,1] \\		
		& \geq x_e \cdot (1-11p\sqrt[3]{3/(e\cdot 12\log 2) + p}) & n\geq 2 \\
		& \geq x_e \cdot (1-11p\sqrt[3]{3/(e\cdot 12\log 2) + 1/10}) & p\leq 1/10 \\		
		& \geq x_e\cdot (1-3p).\qedhere
		\end{align*} 
	\end{proof}
	
	Relying on \Cref{load-concentration}.\ref{load-vertex} and \Cref{per-edge-p^3-bounded}, we obtain the following bounds on $M_v$, the number of times $v$ is matched during the $i^{th}$ \phase.
	
	\begin{lem}\label{match-bound}
		If $\Delta_i \geq (24\log n)/p^4$, for each vertex $v$ with degree at least $d_i(v)\geq \Delta_i/2$, then $M_v,$ the number of times $v$ is matched during the $i^{th}$ \phase, satisfies
		\begin{enumerate}
			\item $\Pr[M_v \geq d_i(v) \cdot p(1+4p)]\leq 3/n^4$.
			\item $\Pr[M_v \leq d_i(v) \cdot p(1-5p)]\leq 3/n^3$.
	\end{enumerate}	
	\end{lem}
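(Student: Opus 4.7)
}
The plan is to decompose $M_v$ into a sum of independent Bernoullis (conditional on the fractional values and the sample $\mathcal{S}_i$), bound the conditional mean using \Cref{per-edge-p^3-bounded}, apply a multiplicative Chernoff bound, and finally combine with the load concentration of \Cref{load-concentration}.\ref{load-vertex} via a union bound.

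Concretely, for each sampled color $j\in\mathcal{S}_i$ let $Y_j \triangleq \mathds{1}[v\in M_{i,j}]$, so that $M_v = \sum_{j\in\mathcal{S}_i} Y_j$. Since \marking is invoked independently for each $j$, conditioning on the fractional assignments $x^{(i)}$ and on $\mathcal{S}_i$, the $Y_j$ are mutually independent. Moreover, within a single color $j$ the events $\{e\in M_{i,j}\}$ for $e\ni v$ are disjoint (as $M_{i,j}$ is a matching), so
\[
\Pr[Y_j=1 \mid x^{(i)},\mathcal{S}_i] \;=\; \sum_{e\ni v}\Pr[e\in M_{i,j}\mid x^{(i)}]\; \in\; \bigl[L^{(i)}_{v,j}\cdot(1-3p),\; L^{(i)}_{v,j}\bigr],
\]
by the per-edge guarantee of \Cref{per-edge-p^3-bounded} (whose hypothesis holds because $\mathcal{A}$ is $(p^4/12\log n)$-bounded). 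Summing over $j\in\mathcal{S}_i$ yields $\E[M_v\mid x^{(i)},\mathcal{S}_i]\in[\ell_v(1-3p),\,\ell_v]$.

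Next, I would condition on the high-probability event from \Cref{load-concentration}.\ref{load-vertex}, namely $\ell_v \in [d_i(v)\,p(1-p^2),\, d_i(v)\,p(1+p^2)]$, which fails with probability at most $2/n^3$. On this event the conditional mean satisfies $\E[M_v\mid\cdot]\leq d_i(v)p(1+p)$ and $\E[M_v\mid\cdot]\geq d_i(v)p(1-p^2)(1-3p)\geq d_i(v)p(1-4p)$. Applying the multiplicative Chernoff bound (\Cref{chernoff}) to the independent sum $M_v=\sum_j Y_j$ with deviation $p$, the failure probability is at most $2\exp(-\E[M_v\mid\cdot]\cdot p^2/3)$. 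Since $d_i(v)\geq \Delta_i/2\geq (12\log n)/p^4$, this exponent is $\Omega((\log n)/p)$, so the Chernoff failure probability is at most $1/n^4$. A final union bound then yields
\begin{align*}
\Pr\!\bigl[M_v \geq d_i(v)\,p(1+p)(1+p)\bigr] &\leq 1/n^4 + 2/n^3,\\
\Pr\!\bigl[M_v \leq d_i(v)\,p(1-4p)(1-p)\bigr] &\leq 1/n^4 + 2/n^3,
\end{align*}
and simple algebra ($(1+p)^2\leq 1+4p$ and $(1-4p)(1-p)\geq 1-5p$ for $p\leq 1/10$) gives exactly the stated bounds, with $3/n^4$ and $3/n^3$ absorbing the constants.

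The only mildly delicate step is verifying the independence claim and that the per-color events are correctly handled: \marking is invoked independently for each sampled color $j$, so conditional on $(x^{(i)},\mathcal{S}_i)$ the matchings $M_{i,j}$ use disjoint randomness and the $Y_j$ are genuinely independent. Everything else is a routine composition of Chernoff with the already-established load concentration, with the threshold $\Delta_i\geq (24\log n)/p^4$ being precisely what is needed to push the Chernoff failure below $1/n^4$.
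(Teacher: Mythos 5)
Your proposal is correct and follows essentially the same route as the paper's proof: decompose $M_v$ over the sampled colors into conditionally independent indicators, use \Cref{per-edge-p^3-bounded} to sandwich the conditional mean by $\ell_v(1-3p)$ and $\ell_v$, condition on the load-concentration event of \Cref{load-concentration}, apply \Cref{chernoff}, and union-bound with $\Pr[\overline{A}]$. The only cosmetic discrepancy (your final sum $1/n^4+2/n^3$ gives $3/n^3$ rather than the $3/n^4$ stated in the lemma's first claim, and your event $A$ should read $\ell_v\in[d_i(v)p(1-p),d_i(v)p(1+p)]$) is shared by the paper's own write-up and does not affect the argument.
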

	\begin{proof}
		Let $M^{j}_v$ be an indicator variable for the event that $v$ is matched in $M_{i,j}$. 
		For any instantiation of the variables $L_{e,j}$,
		\Cref{per-edge-p^3-bounded} implies that each edge $e$ is matched in $M_{i,j}$ with probability $L_{e,j}\cdot (1-3p)\leq \Pr[e\in M_{i,j}] \leq L_{e,j}$, and so by linearity we have 
		$L_{v,j}\cdot (1-3p) \leq \Pr[M^j_{v}] \leq L_{v,j}$. 
		In particular, if we let $A\triangleq [d_i(v)\cdot p(1-p) \leq \ell_v\leq d_i(v)\cdot p(1+p)]$, then, by linearity we have both $\E[M_v \mid A]\leq d_i(v)\cdot p(1+p)$ as well as $\E[M_v \mid A] \geq d_i(v)\cdot p(1-p)(1-3p) \geq d_i(v)\cdot p(1-4p)$.
		Now, clearly, $M_v = \sum_{j\in \mathcal{S}_i} M^j_v$ is the sum of binary random variables.
		Moreover, for any subset $\mathcal{S}_i$ sampled, these $\{M^j_v \mid j\in \mathcal{S}_i\}$ are independent, as all matchings $M_{i,j}$ for $j\in \mathcal{S}_i$ are computed using independent copies of \marking. By Chernoff's upper tail bound (\Cref{chernoff}) with $\epsilon = 2p$, we thus obtain
		\begin{align*}
			\Pr[M_v \geq d_i(v) \cdot p(1+4p) \mid A]& \leq \Pr[M_v \geq d_i(v)\cdot p(1+p)(1+2p) \mid A] & \\
			& \leq \Pr[M_v \geq \E[M_v \mid A] \cdot (1+2p) \mid A] & 
			\\ 			
			& \leq \exp\left(-\frac{\E[M_v \mid A] \cdot 4p^2}{3}\right) & 
			\\
			& \leq \exp\left(-\frac{d_i(v)\cdot p(1-4p) \cdot 4p^2}{3}\right) & 
			\\
			& \leq \exp\left(-\frac{(48\log n)p^3(1-4p)}{3p^4}\right) & \\
			& \leq \exp\left(-4\log n\right) & p\leq 1/5\\
			& \leq 1/n^4.
		\end{align*}
		Therefore, we obtain the first claim, as
		\begin{align*}
		\Pr[M_v \geq d_i(v)\cdot p(1+4p)] & \leq \Pr[M_v \geq d_i(v) \cdot p(1+4p)\mid A] + \Pr[\overline{A}] \leq 3/n^3.
		\end{align*}

		Similarly, by Chernoff's lower tail bound (\Cref{chernoff}) with $\epsilon = p$, we obtain 
		\begin{align*}
		\Pr[M_v \leq d_i(v) \cdot p(1-5p) \mid A]& \leq \Pr[M_v \leq d_i(v)\cdot p(1-p)(1-3p)(1-p) \mid A] & \\
		& \leq \Pr[M_v \leq \E[M_v \mid A] \cdot (1-p) \mid A] & 
		\\ 			
		& \leq \exp\left(-\frac{\E[M_v \mid A] \cdot p^2}{2}\right) &
		\\
		& \leq \exp\left(-\frac{d_i(v)\cdot p(1-p)(1-3p) \cdot p^2}{2}\right) & 
		\\
		& \leq \exp\left(-\frac{12(\log n)p^3(1-p)(1-3p)}{2p^4}\right) & \\
		& \leq \exp\left(-3\log n\right) & p\leq 1/10\\
		& \leq 1/n^3.
		\end{align*}
		From the above we obtain the second claim, as
		\begin{align*}
		\Pr[M_v \leq d_i(v)\cdot p(1-5p)] & \leq \Pr[M_v \leq d_i(v) \cdot p(1-5p)\mid A] + \Pr[\overline{A}] \leq 3/n^3.\qedhere
		\end{align*}	
	\end{proof}

	The above lemma asserts that the number of times a vertex $v$ of high degree in $U_i$ is matched during the $i^{th}$ \phase is $\Theta(d_i(v)\cdot p)$. The following lemma relies on the theory of Negative Association (NA, see \Cref{sec:na}) to show that all but $O(d_i(v)\cdot p^2)$ matches of $v$ during this \phase result in an edge of $v$ being colored.

	\begin{lem}\label{num-recolors}
		If $\Delta_i \geq (24\log n)/p^3$, for each vertex $v$ with degree at least $d_i(v)\geq \Delta_i/2$, the number of times $v$ is matched along a previously colored edge, $R_v$, satisfies $\Pr[R_v \geq 2d_i(v)\cdot p^2] \leq 2/n^2$.
	\end{lem}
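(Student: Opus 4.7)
The plan is to upper-bound $R_v$ by a collision count and then exploit \emph{negative association} together with the load concentration from \Cref{load-concentration}. For each edge $e$, let $X_e$ denote the number of $j \in \mathcal{S}_i$ with $e \in M_{i,j}$. Since each re-match of an edge of $v$ contributes one to $R_v$, we have $R_v = \sum_{e\ni v}(X_e - 1)^+ \leq Y_v := \sum_{e\ni v}\binom{X_e}{2}$. Reindexing by pairs of matchings,
\[
Y_v = \sum_{j<j',\ j,j' \in \mathcal{S}_i} W_{j,j'}, \qquad W_{j,j'} := \mathds{1}[v \text{ is matched to the \emph{same} edge in both } M_{i,j} \text{ and } M_{i,j'}],
\]
where each $W_{j,j'} \in \{0,1\}$ because $v$ is matched at most once in each $M_{i,j}$.

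First I would bound $\E[Y_v]$. Conditional on $\mathcal{S}_i$ and the fractional coloring $x^{(i)}$, the matchings $\{M_{i,j}\}_{j\in\mathcal{S}_i}$ are independent copies of \marking, each satisfying $\Pr[e\in M_{i,j} \mid \mathcal{S}_i, x^{(i)}] \leq x^{(i)}_{e,j} = L_{e,j}$. Hence
\[
\E[Y_v \mid \mathcal{S}_i, x^{(i)}] \leq \sum_{e\ni v}\sum_{j<j'}L_{e,j}L_{e,j'} \leq \tfrac{1}{2}\sum_{e\ni v}\ell_e^2 \leq \tfrac{1}{2}\bigl(\max_{e\ni v}\ell_e\bigr)\cdot \ell_v.
\]
Conditioning further on the good event $G$ from \Cref{load-concentration} (which fails with probability at most $1/n^2 + 2/n^3$), we have $\ell_e \leq p(1+p)$ for all $e$ and $\ell_v \leq d_i(v)p(1+p)$, giving $\E[Y_v\mid G, \mathcal{S}_i, x^{(i)}] \leq d_i(v)p^2(1+p)^2/2$, well below the target $2d_i(v)p^2$.

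For the high-probability upper tail I would appeal to negative association. The base variables $\{V_{j,e}:=\mathds{1}[e\in M_{i,j}] : j\in\mathcal{S}_i,\, e\ni v\}$ are NA conditional on $\mathcal{S}_i, x^{(i)}$: within each fixed $j$ the matching constraint $\sum_{e\ni v}V_{j,e}\leq 1$ yields NA by the 0-1 principle, and across different $j$ the matchings are independent. A clean consequence is that the single-match indicators $\{M_e := \max_j V_{j,e}\}_{e\ni v}$ form an NA family, since each $M_e$ is a non-decreasing function of the \emph{disjoint} block $\{V_{j,e}\}_j$. Applying NA Chernoff to $D_v := \sum_{e\ni v}M_e$ (whose expectation, via $1-\prod_j(1-q_{e,j}) \geq \ell_e(1-3p) - \ell_e^2/2$, is at least $d_i(v)p(1-O(p))$) yields $D_v \geq d_i(v)p(1-O(p))$ w.h.p., and combining with $M_v\leq d_i(v)p(1+4p)$ from \Cref{match-bound} gives $R_v = M_v - D_v \leq O(d_i(v)p^2)$ w.h.p.

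The main obstacle is pinning the constant down to exactly $2$: the $M_v - D_v$ route yields only an absolute $O(1)$ constant. I therefore expect the tight argument to apply concentration directly to $Y_v$, using a dependency-graph/read-$k$-style Chernoff inequality (each $M_{i,j}$ participates in only $|\mathcal{S}_i|-1$ of the pairs $\{j,j'\}$, and the $\epsilon$-boundedness $L_{e,j}\leq p^4/(12\log n)$ keeps individual contributions tiny). Since $d_i(v)\geq \Delta_i/2 \geq 12\log n/p^4$, the conditional mean $\E[Y_v\mid G,\ldots]=\Omega(\log n)$ provides ample room for a multiplicative Chernoff tail to drive $\Pr[Y_v\geq 2d_i(v)p^2 \mid G,\mathcal{S}_i,x^{(i)}]$ well below $1/n^2$. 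Adding $\Pr[\overline{G}] \leq 1/n^2 + 2/n^3$ then delivers the claimed bound $\Pr[R_v\geq 2d_i(v)p^2]\leq 2/n^2$.
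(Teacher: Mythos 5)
Your setup — the collision count $Y_v\geq R_v$, the expectation bound $\E[Y_v\mid G]\leq d_i(v)p^2(1+p)^2/2$ via the per-edge and per-vertex load bounds of \Cref{load-concentration}, and the NA groundwork (0--1 rule within each matching, independent union across matchings, monotone functions of disjoint blocks) — is essentially the paper's. But the decisive step, a high-probability upper tail for $Y_v$ (equivalently $R_v$) at the threshold $2d_i(v)p^2$, is not actually proved. You concede yourself that the $M_v-D_v$ route only yields an unspecified $O(1)$ constant, and for the direct route you only say you ``expect'' a dependency-graph/read-$k$ Chernoff applied to the pair variables $W_{j,j'}$ to work. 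That is a genuine gap, and the specific tool you name would not close it: each matching index $j\in\mathcal{S}_i$ appears in roughly $|\mathcal{S}_i|-1$ of the pairs, so read-$k$-type bounds pay a factor $k\approx|\mathcal{S}_i|=\Theta(\Delta_i p)$ in the exponent; with $\E[Y_v\mid G]=\Theta(d_i(v)p^2)$ this leaves an exponent of order $\mu/k=O(p)$ (and the additive/Hoeffding variant fares no better), nowhere near the required $\Omega(\log n)$. A mean of $\Omega(\log n)$ is not by itself ``ample room''; justifying a multiplicative tail for this dependent sum is exactly the crux.

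The paper closes this step with machinery you already set up but then abandoned: group the re-matches \emph{per edge} rather than per pair of colors. Writing $R_e\triangleq\sum_j M_{e,j}\cdot\min\{1,\sum_{j'<j}M_{e,j'}\}$, each $R_e$ is a monotone increasing function of the disjoint block $\{M_{e,j}\}_{j}$, so $\{R_e\mid e\ni v\}$ are NA by Property \ref{prop:monotone-fns}, and $R_v=\sum_{e\ni v}R_e$. The NA definition gives $\E[R_e\mid A]\leq \ell_e^2\leq p^2(1+p)^2$ on the event $A$ that all edges of $v$ have $\ell_e\leq p(1+p)$ (only \Cref{load-concentration}.\ref{load-edge} is needed, union-bounded over the edges of $v$, so $\Pr[\overline{A}]\leq 1/n^2$). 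Then the NA/NUOD Chernoff bound (\Cref{lem:nuod-chernoff}) with $R=d_i(v)p^2(1+p)^2$ and $\epsilon=\sqrt{p}$ gives a tail of $\exp\left(-d_i(v)p^3(1+p)^2/3\right)\leq 1/n^2$, using $d_i(v)\geq\Delta_i/2\geq 12(\log n)/p^3$, and the constant $2$ in the statement comes from $2\leq(1+p)^2(1+\sqrt{p})$ for $p\leq 1/10$. So the missing piece is not a new inequality but applying the NA-based Chernoff bound to the per-edge decomposition instead of the pair decomposition.
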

	\begin{proof}
		Fix the realizations of $L_{e,j}$ for all $e,j$. For any edge $e\ni v$, let $M_{e,j}\triangleq \mathds{1}[e\in M_{i,j}]$ be an indicator for edge $e$ being matched in \iteration $j$ of \phase $i$. By the 0-1 rule, since at most one edge $e\ni v$ is in any matching, for each $j$ the binary variables $\{M_{e,j} \mid e\ni v\}$ are NA. On the other hand, for $j\neq j'$ the joint distributions $\{M_{e,j} \mid e\ni v\}$ and $\{M_{e,j'} \mid e\ni v\}$ are independent. Thus, by closure of NA distributions under independent union (Property \ref{prop:ind-union}), the $\{M_{e,j} \mid j\in \mathcal{S}_i,e\ni v\}$ are NA. By closure of NA distributions under monotone increasing functions of disjoint variables (Property \ref{prop:monotone-fns}), if we let $R_e \triangleq \sum_j M_{e,j} \cdot \min\{1,\sum_{j'<j} M_{e,j'}\}$ denote the number of times $e$ is matched and not colored, then these $\{R_e \mid e\ni v\}$ are NA. In this terminology, we have that $R_v = \sum_{e\ni v} R_e$ is the sum of NA variables. Moreover, as the $M_{e,j}$ are NA and as $\E[M_{e,j}]\leq L_{e,j}$ by \Cref{peredge-general}, we have by the definition of NA variables (see \eqref{eq:NA-def}) that
		\begin{align*}
		\E\left[\sum_j M_{e,j}\cdot \sum_{j'<j} M_{e,j'}\right] & \leq \sum_j \E\left[M_{e,j}\right]\cdot \E\left[\sum_{j'<j} M_{e,j'}\right] \leq \sum_j L_{e,j}\cdot \sum_{j'<j} L_{e,j'} \leq \ell_e \cdot \ell_e.
		\end{align*} 
		
		Let $A=\mathds{1}[\forall e\ni v:\,\ell_e \leq p(1+p)]$ be an indicator for the high probability event that every edge $e\ni v$ has load at most $2p$ in the sampled matchings. 
		\begin{align*}
		\E[R_e \mid A] & \leq \E\left[\sum_j M_{e,j}\cdot \sum_{j'<j} M_{e,j'} \,\Bigg\vert \, A\right]\leq \E[\ell_e \mid A] \cdot \E[\ell_e \mid A] 
		\leq p^2(1+p)^2.
		\end{align*}
		Therefore, by linearity of expectation, $\E[R_v] = \sum_{e\ni v} \E[R_e] \leq d_i(v)\cdot p^2(1+p)^2$.
		Now, as $d_i(v)\geq \Delta_i/2 \geq 12(\log n)/p^3$ and as $R_v=\sum_e R_e$ is the sum of binary NA variables, we can upper bound $R_v$ using the upper multiplicative Chernoff bound of \Cref{lem:nuod-chernoff} with $\epsilon = \sqrt{p}$ to obtain 
		\begin{align*}
		\Pr[R_v \geq d_i(v)\cdot p^2(1+p)^2(1+\sqrt{p}) \mid A] & \leq \exp\left(-\frac{ d_i(v)\cdot p^2(1+p)^2\cdot p}{3}\right)\leq  \exp\left(-\frac{12\log n}{3}\right) \leq \frac{1}{n^2}.
		\end{align*}
		Observing that for $p\leq 1/10$ we have $2\leq (1+p)^2(1+\sqrt{p})$, we find that 
		\begin{align*}
		\Pr[R_v\geq 2d_i(v)\cdot p^2 \mid A] & \leq \Pr[R_v \geq d_i(v)\cdot p^2(1+p)^2(1+\sqrt{p}) \mid A]\leq 1/n^2.
		\end{align*}
		
		Now, by \Cref{load-concentration}.\ref{load-edge} we have for every $e\ni v$ that $\Pr[\ell_e \geq p(1+p)] \leq 1/n^3$ and so by union bound we have $\Pr[\overline{A}] \leq n\cdot 1/n^3 = 1/n^2$. 
		We therefore conclude that indeed
		\begin{align*}
			\Pr[R_v \geq 2\cdot d_i(v)\cdot p^2] & \leq \Pr[R_v \geq 2\cdot d_i(v)\cdot p^2 \mid A] + \Pr[\overline{A}] \leq 2/n^2.\qedhere 
		\end{align*}
	\end{proof}

	\Cref{rand-max-deg-decrease}, restated below for ease of reference, follows from lemmas \ref{match-bound} and \ref{num-recolors} and union bound of relevant subsets of vertices.
	\unknowndegdecrease*
	\begin{proof}
		For each vertex $v$, the decrease in $v$'s degree in the uncolored subgraph during the $i^{th}$ \phase, denoted by $D_v \triangleq d_{i}(v) - d_{i+1}(v)$, is precisely the number of times $v$ is matched and its matched edge is colored. That is, in the terminology of \Cref{match-bound} and \Cref{num-recolors}, $D_v = M_v - R_v$. 
		So, by \Cref{match-bound}, every maximum degree vertex $v$ in $U_i$ (i.e. $d_i(v)=\Delta_i\geq \Delta_i/2$)
		satisfies
		\begin{align*}
			\Pr[d_{i+1}(v) \leq \Delta_i\cdot (1-p-3p^2)] & = \Pr[d_{i+1}(v) \leq d_i(v)\cdot (1-p-3p^2)] \\
			& = \Pr[d_{i}(v) - d_{i+1}(v) \geq d_i(v)\cdot p(1+3p)] \\
			& = \Pr[D_v \geq d_i(v)\cdot p(1+3p)] \\ 
			& \leq \Pr[M_v \geq d_i(v)\cdot p(1+3p)] \\
			& \leq 3/n^4.
		\end{align*}		
		
		The first claim then follows by union bound over all maximum degree vertices $v$ in $U_i$.
		\begin{align*}
			\Pr[\Delta_{i+1} &\leq \Delta_i\cdot (1-p-3p^2)] \leq \sum_{v:\, d_i(v)=\Delta_i} \Pr[d_{i+1}(v) \leq \Delta_i\cdot (1-p-3p^2)] \leq 3/n^3.
		\end{align*}
				
		Now, we let $\lambda \triangleq p(1-7p)$ and note that $(1-\lambda)\cdot \Delta_i \geq \Delta_i/2$, since $p\leq 1/2$. 
		All vertices $v$ of degree $d_i(v)\leq (1-\lambda) \cdot \Delta_i$ in $U_i$ clearly have $d_{i+1}(v) \leq d_i(v)\leq (1-\lambda) \cdot \Delta_i$. On the other hand, for every $v$ with $d_i(v)\geq (1-\lambda)\cdot \Delta_i \geq \Delta_i/2$, we have by lemmas \ref{match-bound} and \ref{num-recolors} that
		\begin{align*}
			\Pr[d_{i+1}(v) \geq (1-\lambda)\cdot \Delta_i] & \leq
			\Pr[d_{i+1}(v) \geq (1-\lambda)\cdot d_i(v)] \\
			& = \Pr[d_{i}(v) - d_{i+1}(v) \leq d_i(v)\cdot \lambda] \\
			& = \Pr[D_v \leq d_i(v)\cdot \lambda] \\
			& = \Pr[D_v \leq d_i(v)\cdot p(1-7p)]  \\
			& \leq \Pr[M_v \leq d_i(v)\cdot p(1-5p)] + \Pr[R_v \geq d_i(v)\cdot p\cdot 2p] \\
			& \leq 6/n^3.
		\end{align*}
		
		The second claim then follows by union bound over all vertices $v$ of degree $d_i(v)\geq (1-\lambda)\cdot \Delta_i$ in $U_i$, recalling that $\lambda = p(1-7p)$, as
		\begin{align*}
			\Pr[\Delta_{i+1} \geq (1-\lambda)\cdot \Delta_i] & \leq \sum_{v:\, d_i(v)\geq (1-\lambda)\cdot \Delta_i}\Pr[d_{i+1}(v) \geq (1-\lambda)\cdot \Delta_i] \leq 6/n^2.\qedhere
		\end{align*}
	\end{proof}
	\vspace{-0.1cm}
\section{Improved $o(1)$ Terms for Known $\Delta$}\label{sec:known-delta}

In this section we present an improved algorithm for known $\Delta$. Recall that for the known $\D$ regime, using our online rounding scheme of \Cref{rounding-unknown} we obtained a $1+O(\sqrt[12]{(\log n)/\Delta})$-competitive algorithm (see \Cref{known-delta-from-rounding}).
In this section we will show how to decrease this competitive ratio to $1+O(\sqrt[4]{(\log n)/\Delta})$. That is, we improve the $o(1)$ term when $\Delta=\omega(\log n)$.

We now turn to describing our approach, starting with an offline description. Iterating over $c\in [\Delta]$, we compute and color a matching $M_c$ in the uncolored subgraph $G \setminus \bigcup_{c'=1}^{c-1} M_{c'}$.
We then color the remaining uncolored subgraph with new colors using the greedy algorithm. 
This approach can be implemented online, by
iteratively running online matching algorithms on the relevant uncolored subgraphs to compute and color matchings. More concretely, when a vertex $v$ arrives, we iterate over $c\in [\Delta]$ and update $M_c$ in the current uncolored graph $G\setminus \bigcup_{c'=1}^{c-1} M_{c'}$, as follows. We run the next step of the online matching algorithm used to compute $M_c$ in the current uncolored graph after $v$'s arrival in this subgraph. We then color $v$'s newly-matched edge (if any) using color $c$. Finally, we run steps of the greedy algorithm on the remaining uncolored edges of $v$. 

For our analysis, we will analyze the above algorithm according to its offline description.
Since the greedy algorithm requires a number of colors linear in its input graph's maximum degree, our objective will be to reduce the uncolored subgraph's maximum degree to $o(\Delta)$ w.h.p.~after computing and coloring the first $\Delta$ matchings. In particular, this will require us to match each maximum-degree vertex in $G$ with probability roughly one for each of these $\Delta$ matchings.
One way of matching vertices $v$ of degree $\Delta$ in the uncolored subgraph with probability roughly one is to guarantee each edge $e\ni v$ a probability of roughly $\frac{1}{\Delta}$ of being matched. An online matching algorithm which does just this is obtained from \Cref{peredge-general} applied to the trivial fractional matching which assigns a value of $\frac{1}{\Delta}$ to each edge.
We will refer by $\marking_d$ to the application of \textsc{marking} to the trivially-feasible fractional matching assigning $x_e=\epsilon=\frac{1}{d}$ for each edge in a graph of (known) maximum degree at most $d$.

\begin{restatable}[]{cor}{peredge}
	\label{peredge}
	Algorithm $\marking_d$ is an online matching algorithm which in graphs of maximum degree at most $d$ outputs a matching $\mathcal{M}$ which matches each edge $e$ with probability
	$$\frac{1}{d}\cdot  \Big(1-11\sqrt[3]{(\log d)/d}\Big) \leq \Pr[e\in \mathcal{M}] \leq \frac{1}{d}.$$
\end{restatable}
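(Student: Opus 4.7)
The plan is to derive Corollary \ref{peredge} as a direct application of Lemma \ref{peredge-general} to the trivial uniform fractional matching $x_e = 1/d$. The argument has essentially two ingredients: a feasibility check and a parameter substitution.

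First I would verify that the assignment $x_e \equiv 1/d$ is a feasible fractional matching in any graph $G$ of maximum degree at most $d$. Indeed, for every vertex $v$, we have $\sum_{e \ni v} x_e = \deg_G(v)/d \leq 1$, and obviously $x_e \geq 0$. Moreover, this fractional matching is trivially maintainable online: whenever a new vertex $v$ arrives together with its edges to previously-arrived neighbors, we simply declare $x_e = 1/d$ for each newly revealed edge $e$, which preserves feasibility throughout because $d$ is an a priori upper bound on the maximum degree. Hence the input $\vec{x}$ fed to \textsc{marking} is a valid online feasible fractional matching.

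Next I would plug this input, together with the boundedness parameter $\epsilon = 1/d$, into Lemma \ref{peredge-general}. Since $\max_e x_e = 1/d = \epsilon$, Lemma \ref{peredge-general} applies verbatim and produces a matching $\mathcal{M}$ satisfying, for every edge $e$,
\[
x_e \cdot \Bigl(1 - 11\sqrt[3]{\epsilon \cdot \log(1/\epsilon)}\Bigr) \;\leq\; \Pr[e \in \mathcal{M}] \;\leq\; x_e.
\]
Substituting $x_e = 1/d$ and $\epsilon = 1/d$ yields exactly the stated inequality
\[
\frac{1}{d}\cdot \Bigl(1-11\sqrt[3]{(\log d)/d}\Bigr) \;\leq\; \Pr[e\in \mathcal{M}] \;\leq\; \frac{1}{d}.
\]

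There is no real obstacle here: the only nontrivial content is Lemma \ref{peredge-general} itself, which is taken from \cite{cohen2018randomized} and cited as a black box. The corollary is a bookkeeping specialization, and the main thing to be careful about is just noting that the online nature of \textsc{marking} is preserved because the uniform assignment $x_e = 1/d$ can be committed to immediately upon each edge's arrival (using only the a priori degree bound $d$).
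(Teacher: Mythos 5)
Your proposal is correct and matches the paper's (implicit) derivation exactly: the paper defines $\marking_d$ as \textsc{marking} run on the uniform fractional matching $x_e=\epsilon=1/d$ and obtains the corollary as an immediate specialization of \Cref{peredge-general}, with $\epsilon\log(1/\epsilon)=(\log d)/d$ giving the stated bounds. Your added feasibility and online-maintainability checks are the right (if routine) details to note.
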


The first natural approach given \Cref{peredge} is to iteratively run $\marking_\Delta$. However, as shown in \Cref{sec:bad-example-marking}, this approach is suboptimal.
Instead, we will increase the probability of high-degree vertices in the uncolored subgraph to have an edge colored, by running $\marking_d$ with a tighter upper bound $d$ than $\Delta$ for the uncolored graph's maximum degree for each phase. Unfortunately, upon arrival of some vertex $v$, we do not know the uncolored graph's maximum degree for all phases, as this depends on \emph{future} arrivals and random choices of our algorithm. To obtain a tight (up to $o(\Delta)$) bound $d$ on the uncolored graph's maximum degree for each phase, we divide the $\Delta$ coloring \iterations into \emph{\phases} of $\ell = \sqrt{\Delta \log n}$ \iterations each, during which we use the same upper bound. As $\ell = o(\Delta)$ and $\ell = \omega(\log n)$, this gives us sharply concentrated upper bounds $d_{i+1}$ on the resulting uncolored graph's maximum degree at the end of each \phase $i$, which in turn serves as a tight upper bound for the next \phase. 
This results in the desired rate of decrease in the uncolored graph's maximum degree, namely $1-o(1)$ per \iteration. Greedy thus runs on a subgraph of maximum degree $o(\Delta)$. Our $1+o(1)$ competitive ratio follows.

\vspace{-0.1cm}
\subsection{The Improved Algorithm}\label{sec:ec-algo}
We now present our online edge coloring algorithm, starting with an offline description.
Our algorithm consists of $\Delta$ \iterations, equally divided into $\sqrt{\Delta/\log n}$ \phases. During 
each \iteration of \phase $i$, we color a matching output by $\marking_{d_i}$ run on $U_i$ -- the uncolored subgraph prior to \phase $i$, 
for $d_i \triangleq  \Delta - i\cdot (\ell - 8\sqrt{\ell \log n})$. After all \phases, we run greedy with new colors, starting with $\Delta+1$. In the online implementation, after each online vertex $v$'s arrival, 
for \phase $i=1,2,\dots$, we run the next step of $\ell=\sqrt{\Delta \log n}$ independent runs of $\marking_{d_i}$ in $U_i$, color newly-matched edges and update $U_{i'}$ for $i'>i$ accordingly. We then greedily color $v$'s remaining uncolored edges with new colors.
The algorithm's pesudocode is given in \Cref{alg:online-ec-known-delta}.

\begin{algorithm}[H] 
	\caption{Improved Randomized Edge Coloring for Known $\Delta$}
	\label{alg:online-ec-known-delta}
	
	\begin{algorithmic}[1]
		\Require Online bipartite graph $G(L,R,E)$ with maximum degree $\Delta = \omega(\log n)$.
		
		\Ensure Integral $(1+o(1))\Delta$ edge coloring, w.h.p.

		\State let $\ell \triangleq  \lfloor \sqrt{\Delta \log n}\rfloor$.		\Comment{\phase length}	
		\State let $d_i \triangleq  \Delta - i\cdot (\ell - 8\sqrt{\ell \log n})$ for $i\in [0,ֿ\Delta/\ell]$. \Comment{degree upper bound for each \phase}
		\State for all $i$, denote by $U_i$ the \textbf{online} subgraph of $G$ not colored by colors $[i\cdot \ell]$.						
		\For{each arrival of a vertex $v\in R$}
		\For{\phase $i=0,1,\dots,\lfloor \Delta/\ell\rfloor -1$} \label{line:loop-start}	
		\For{colors $c\in [i\cdot \ell + 1, (i+1)\cdot \ell]$} 
		
		\State $M_{c}\leftarrow $ output of copy $c$ of $\marking_{d_i}$ on \textbf{current} $U_i$. \Comment{run next step of $\marking_{d_i}$}
		\If{an edge $e\in M_{c}$ is previously uncolored} \Comment{note: $e\ni v$}
		\State color $e$ using color $c$.
		\EndIf
		\EndFor 
		\EndFor
		\State run greedy on all uncolored edges of $v$, using new colors starting from $\Delta+1$. \label{line:greedy}
		\EndFor
	\end{algorithmic}
\end{algorithm}

\subsection{Analysis}\label{sec:ec-analysis}

The crux of our analysis is that for each \phase $i$, we have $d_i\geq \Delta(U_i)$ w.h.p. Consequently, the final uncolored subgraph after the $\Delta$ \iterations (and colors) has maximum degree at most $d_{\lfloor \Delta/\ell \rfloor} = o(\Delta)$, so greedily coloring this subgraph requires a further $o(\Delta)$ colors.
The following lemma
asserts that if $d_i\geq \Delta(U_i)$, then $d_{i+1}\geq \Delta(U_{i+1})$, w.h.p.

\begin{restatable}{lem}{degknown}\label{max-deg-bound-known-delta}
	For all $i\in [0,\Delta/\ell-1]$, if $\Delta(U_i) \leq d_i$, then $\Pr[\Delta(U_{i+1}) > d_{i+1}]\leq 1/n^3$.
\end{restatable}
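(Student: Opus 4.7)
The plan is to fix any vertex $v$ and show that $\Pr[d_{i+1}(v) > d_{i+1}] \leq 1/n^4$; a union bound over the $n$ vertices then gives the claimed $1/n^3$. If $d_i(v) \leq d_{i+1}$ we are done trivially since $d_{i+1}(v) \leq d_i(v)$, so assume $d_i(v) > d_{i+1}$, and let $D_v \triangleq d_i(v) - d_{i+1}(v)$. Then showing $d_{i+1}(v) \leq d_{i+1}$ reduces to lower-bounding $D_v$ by $d_i(v) - d_{i+1} \leq \ell - 8\sqrt{\ell \log n}$. For each edge $e \ni v$ in $U_i$ and each color $c$ of phase $i$, let $X_e^c = \mathds{1}[e \in M_c]$ and $Y_e = \max_c X_e^c$; then $D_v = \sum_{e \ni v, e \in U_i} Y_e$.

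I would lower-bound $\E[D_v]$ using per-edge matching probabilities from \Cref{peredge}, together with the independence of the $\ell$ copies of $\marking_{d_i}$ used in phase $i$:
\[
\Pr[Y_e = 1] \;=\; 1 - \prod_{c} \Pr[e \notin M_c] \;\geq\; 1 - \bigl(1 - (1-\eta)/d_i\bigr)^{\ell} \;\geq\; \frac{\ell(1-\eta)}{d_i} - \frac{\ell^2}{2 d_i^2},
\]
where $\eta = 11\sqrt[3]{(\log d_i)/d_i}$ and the last inequality is the two-term lower bound from inclusion-exclusion. Summing over the $d_i(v)$ relevant edges and using $d_i(v) \leq d_i$, one gets $\mu \triangleq \E[D_v] \geq d_i(v)\cdot \ell/d_i - \ell\eta - \ell^2/(2d_i)$. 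Next I would argue that $\{Y_e\}_{e \ni v, e \in U_i}$ are negatively associated: for each fixed $c$, the indicators $\{X_e^c\}_{e \ni v}$ are NA by the $0$--$1$ rule (at most one edge of $v$ lies in $M_c$); across distinct $c$ the $X_e^c$ are mutually independent because the copies of $\marking_{d_i}$ are run independently; and closure of NA under independent unions (Property~\ref{prop:ind-union}) and under monotone functions of disjoint variable subsets (Property~\ref{prop:monotone-fns}) yields NA of $\{Y_e\}$. Applying the multiplicative Chernoff lower tail for NA indicators (as in \Cref{lem:nuod-chernoff}) with deviation parameter $\sqrt{8\log n/\mu}$ gives $D_v \geq \mu - \sqrt{8\mu\log n}$ with probability at least $1 - 1/n^4$.

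Combining, and using $\mu \leq d_i(v)\cdot\ell/d_i \leq \ell$, we obtain with probability at least $1-1/n^4$
\[
d_{i+1}(v) \;\leq\; d_i(v) - D_v \;\leq\; d_i - \ell + \ell\eta + \frac{\ell^2}{2 d_i} + \sqrt{8\ell\log n}.
\]
The desired bound $d_{i+1}(v) \leq d_{i+1} = d_i - \ell + 8\sqrt{\ell\log n}$ then reduces to the deterministic inequality
\[
\ell\eta + \frac{\ell^2}{2 d_i} + \sqrt{8\ell \log n} \;\leq\; 8\sqrt{\ell \log n}.
\]
The Chernoff term $\sqrt{8\ell\log n} = 2\sqrt{2}\sqrt{\ell\log n}$ eats a constant fraction of the budget; the quadratic correction $\ell^2/(2d_i) = O(\sqrt{\ell\log n})$ is controlled via $d_i \geq d_{\Delta/\ell-1} = \Theta(\Delta^{3/4}\log^{1/4} n)$; and $\ell\eta = O\bigl((\log \Delta/\log n)^{1/3}\sqrt{\ell\log n}\bigr) = O(\sqrt{\ell\log n})$. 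The main (and essentially only) obstacle is the bookkeeping of constants here: the phase length $\ell = \lfloor \sqrt{\Delta \log n}\rfloor$ and the additive $8\sqrt{\ell \log n}$ offset in the definition of $d_{i+1}$ are chosen precisely so that the sum of these three error contributions fits under the $8\sqrt{\ell \log n}$ budget.
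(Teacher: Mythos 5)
Your overall strategy is sound and genuinely different in its decomposition from the paper's proof: you count, per edge $e\ni v$ in $U_i$, the indicator $Y_e$ of being matched in at least one of the phase's $\ell$ independent matchings, and you handle the dependence across edges via negative association (the identity $D_v=\sum_{e\ni v,\,e\in U_i} Y_e$ is correct, and the NA argument via the $0$--$1$ rule, independent union, and monotone functions of disjoint variables is exactly in the spirit of \Cref{num-recolors}). The paper instead works per \emph{\iteration}: it defines $X_c=\mathds{1}[v\text{ has an edge colored }c]$, lower-bounds $\Pr[X_c=1\mid \text{past}]\geq 1-6\sqrt[4]{(\log n)/\Delta}$ using the fact that $v$ always has at least $d_i-3\ell$ uncolored edges during the phase, and then applies the coupling argument (\Cref{coupling}) plus Hoeffding's inequality (\Cref{hoeffding}) over the $\ell$ iteration indicators. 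Your route buys a cleaner combinatorial identity (no ``$-3\ell$'' slack; your inclusion--exclusion correction $\ell^2/(2d_i)$ is in fact smaller than the paper's corresponding term), at the price of concentrating a sum of $d_i(v)$ small-mean indicators instead of $\ell$ iteration indicators.

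Two concrete problems remain. First, \Cref{lem:nuod-chernoff} as stated is an \emph{upper}-tail bound for NUOD variables; you need a \emph{lower}-tail bound for the NA sum $\sum_e Y_e$. This is fixable (NA implies NOD and the Chernoff--Hoeffding bounds of Dubhashi--Ranjan apply to both tails), but the citation as written does not give it. Second, and more seriously, your constant bookkeeping does not actually close under the budget $8\sqrt{\ell\log n}$, which is hard-wired into the algorithm through $d_{i+1}=d_i-\ell+8\sqrt{\ell\log n}$. Working out your three error terms: the lower-tail Chernoff deviation needed for per-vertex failure $1/n^4$ is $\sqrt{8\mu\log n}\leq 2\sqrt{2}\,\sqrt{\ell\log n}\approx 2.83\sqrt{\ell\log n}$; the term $\ell\eta=11\ell\sqrt[3]{(\log d_i)/d_i}$ is bounded (using $d_i\geq 8\Delta^{3/4}\log^{1/4}n$ and $d_i\leq n$) only by $\tfrac{11}{2}\sqrt{\ell\log n}$; and $\ell^2/(2d_i)\leq \tfrac{1}{16}\sqrt{\ell\log n}$. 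The sum is about $8.4\sqrt{\ell\log n}>8\sqrt{\ell\log n}$ whenever $\log d_i$ is comparable to $\log n$ (i.e., for large $\Delta$, a regime the paper explicitly cares about), so the lemma as stated is not established by your estimates. The paper avoids this because Hoeffding over the $\ell$ iteration indicators costs only $\sqrt{2\ell\log n}\approx 1.41\sqrt{\ell\log n}$, for a total of roughly $7.4\sqrt{\ell\log n}$. To repair your argument you would need either a sharper concentration step (the factor-$2$ loss in the exponent is exactly what sinks you) or to fall back on the paper's per-iteration indicators for the concentration while keeping your per-edge expectation computation.
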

\begin{proof}
	If $\Delta(U_i) \leq d_{i} - \ell$, the claim is trivial, as then $d_{i+1} \geq d_i - \ell \geq \Delta(U_{i}) \geq \Delta(U_{i+1})$. We therefore focus on the case $d_i - \ell \leq \Delta(U_i) \leq d_i$. For this latter case, we will rely on the fact that for all $i\leq \Delta/\ell$, we have 
	$d_i = \Delta - i\cdot (\ell - 8\sqrt{\ell \log n}) \geq (\Delta/\ell)\cdot 8\sqrt{\ell \log n} > 3\Delta^{3/4}\log^{1/4} n.$

	Vertices of degree less than $\Delta(U_i) - \ell \leq d_i - \ell < d_{i+1}$ clearly have degree at most $d_{i+1}$ in $U_{i+1}$, as we only decrease their degree in the uncolored subgraph over time. We therefore turn our attention to vertices $v$ of degree at least $\Delta(U_i) - \ell$ in $U_i$. Such a vertex $v$ cannot have more than $\ell$ edges colored during \phase $i$, regardless of our algorithm's random choices, as at most one of $v$'s edges is colored per \iteration. So, before each color $c$ used in the \phase, $v$ has at least $\Delta(U_i) - 2\ell \geq d_i - 3\ell$ uncolored edges, each of which is matched by $\marking_{d_i}$ with probability at least $\frac{1}{{d_i}}(1-11\sqrt[3]{(\log d_i)/d_i})$. Therefore, if we let $X_c\triangleq \mathds{1}[\bigvee_{e\ni v} $e$ \textrm{ colored } c]$ be an indicator for the event that $v$ has an edge colored $c$, then, regardless of the realization $\vec{x}$ of variables $X_{c-1},X_{c-2},\dots,X_{(i-1)\cdot \ell + 1}$ corresponding to previous \iterations of the $i^{th}$ \phase, vertex $v$ will have an edge colored $c$ with probability at least \vspace{-0.15cm}
	\begin{align*}
	\Pr[X_c=1 \mid (X_{c-1},X_{c-2},\dots,X_{(i-1)\cdot \ell + 1}) = \vec{x}] & \geq (d_i - 3\ell) \cdot \frac{1}{d_i}\left(1-11\sqrt[3]{(\log d)/d}\right) \\
	& \geq 1-\frac{3\ell}{d_i}-11\sqrt[3]{(\log d_i)/d_i} \\
	& \geq 1 - \left(\frac{3}{8}+\frac{11}{2}\right)\sqrt[4]{(\log n)/\Delta} \\
	& \geq 1 - 6\sqrt[4]{(\log n)/\Delta},
	\end{align*}
	where the penultimate inequality follows from $\ell \leq \sqrt{\Delta \log n}$ and $n\geq d_i\geq 8\Delta^{3/4} \log^{1/4} n$.
	
	Therefore, the expected decrease of $v$'s degree in the uncolored graph during the $\ell$ \iterations of the  $i^{th}$ \phase is at least $\E[\sum_{c=i\cdot \ell + 1}^{(i+1)\cdot \ell} X_c] \geq \ell\cdot \left(1 - 6\sqrt[4]{(\log n)/\Delta}\right) \geq \ell - 6\sqrt{\ell \log n}$. But the probability of $v$ having an edge colored $c$ is at least $1 - 6\sqrt[4]{(\log n)/\Delta}$ 	
	\emph{independently} of previous colors during the \phase. Consequently, we can appeal to standard coupling arguments (\Cref{coupling}) together with Hoeffding's inequality (\Cref{hoeffding}) to show that the sum of these $\ell$ binary variables satisfies
	\begin{align*} 
	\Pr\left[\sum_{c=i\cdot \ell + 1}^{(i+1)\cdot \ell} X_c \leq \ell - 6\sqrt{\ell \log n} - \sqrt{2\ell \log n}\right] & \leq \exp\left(-\frac{2(\sqrt{2\ell \log n})^2}{\ell}\right) = 1/n^4.
	\end{align*}
	
	Put otherwise, $v$'s degree in the uncolored subgraph decreases during \phase $i$ by less than $\ell - 6\sqrt{\ell \log n} - \sqrt{2\ell\cdot \log n} > \ell - 8\sqrt{\ell \log n}$ with probability at most $1/n^4$.
	Thus, as $v$ has degree at most $\Delta(U_i)$ in $U_i$ by definition, we find that vertex $v$'s degree in $U_{i+1}$, denoted by $D_v$, satisfies 
	$$\Pr[D_v \geq d_{i+1}] = \Pr[D_v \geq d_i - \ell + 8\sqrt{\ell \log n}] \leq \Pr[D_v \geq \Delta(U_i) - \ell + 8\sqrt{\ell \log n}] \leq 1/n^4.$$ 
	Taking union bound over all vertices, the lemma follows.
\end{proof}

The above lemma implies this section's main result, given by the following theorem.

\begin{restatable}{thm}{knowndelta}\label{known-delta}
	\Cref{alg:online-ec-known-delta} is $(1+O(\sqrt[4]{(\log n)/\Delta}))$-competitive w.h.p.~in $n$-vertex bipartite graphs with known maximum degree $\Delta=\Omega(\log n)$.
\end{restatable}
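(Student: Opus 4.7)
The plan is a clean induction-plus-union-bound argument layered on top of \Cref{max-deg-bound-known-delta}. First I would verify the base case: at initialization, $U_0 = G$ has maximum degree $\Delta(U_0) = \Delta = d_0$, so the hypothesis $\Delta(U_i) \leq d_i$ holds trivially for $i=0$. Then, iterating \Cref{max-deg-bound-known-delta} and taking a union bound over the $\lfloor \Delta/\ell\rfloor \leq \Delta$ phases, the event $\mathcal{E} \triangleq \bigwedge_i \{\Delta(U_i) \leq d_i\}$ holds with probability at least $1 - \Delta/n^3 \geq 1 - 1/n^2$. This is the ``spine'' of the argument.

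Conditioning on $\mathcal{E}$, the maximum degree of the uncolored subgraph entering the greedy step (\Cref{line:greedy}) is at most $d_{\lfloor \Delta/\ell\rfloor}$. A direct calculation with $\ell = \lfloor \sqrt{\Delta \log n}\rfloor$ gives
\[
d_{\lfloor \Delta/\ell\rfloor} \;=\; \Delta - \lfloor\Delta/\ell\rfloor\cdot(\ell - 8\sqrt{\ell \log n}) \;\leq\; \tfrac{\Delta}{\ell}\cdot 8\sqrt{\ell \log n} \;=\; O\!\left(\Delta^{3/4}(\log n)^{1/4}\right) \;=\; \Delta\cdot O\!\left(\sqrt[4]{(\log n)/\Delta}\right).
\]
Since greedy on a graph of maximum degree $D$ uses at most $2D-1$ colors, the greedy step uses $O(\Delta \cdot \sqrt[4]{(\log n)/\Delta})$ additional colors beyond the $\Delta$ colors used during the phases, yielding a total of $(1+O(\sqrt[4]{(\log n)/\Delta}))\cdot\Delta$ colors with probability at least $1-1/n^2$.

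The only subtle point is verifying that the induction can indeed be iterated --- that is, that the hypothesis $\Delta(U_i) \leq d_i$ of \Cref{max-deg-bound-known-delta} remains in force from one phase to the next. This is where I would be careful: the $1/n^3$ failure probability in that lemma is granted \emph{conditional} on $\Delta(U_i) \leq d_i$, so formally I would define the ``good'' event $\mathcal{E}_i \triangleq \{\Delta(U_j) \leq d_j \text{ for all } j \leq i\}$ and chain the bounds as
\[
\Pr[\overline{\mathcal{E}_{i+1}}] \;\leq\; \Pr[\overline{\mathcal{E}_i}] + \Pr[\Delta(U_{i+1}) > d_{i+1} \mid \mathcal{E}_i] \;\leq\; \Pr[\overline{\mathcal{E}_i}] + 1/n^3,
\]
which after $\lfloor \Delta/\ell\rfloor$ steps delivers the needed $1-O(1/n^2)$ bound. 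I expect the main obstacle here to be purely notational rather than conceptual: the heavy lifting (concentration of per-vertex edge-coloring events via \Cref{peredge} plus Hoeffding) already lives inside \Cref{max-deg-bound-known-delta}, and once that lemma is in hand the outer argument is just induction, a geometric-style telescoping of the $d_i$'s, and one application of greedy on a subgraph whose maximum degree has been driven down to $o(\Delta)$.
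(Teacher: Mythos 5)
Your proposal is correct and takes essentially the same route as the paper's proof: iterate \Cref{max-deg-bound-known-delta} across the phases via a chained/conditional union bound (the paper's \Cref{ub}), conclude $d_{\lfloor\Delta/\ell\rfloor} = O(\Delta^{3/4}\log^{1/4} n)$ on the good event, and charge the greedy step $O(\Delta^{3/4}\log^{1/4} n)$ extra colors on top of the $\Delta$ phase colors. One small nit: because of the floor, $\Delta - \lfloor\Delta/\ell\rfloor\cdot\ell$ can be nearly $\ell$, so your displayed bound on $d_{\lfloor\Delta/\ell\rfloor}$ should include an additive $\ell$ term (as the paper does), which is harmless since $\ell \leq \Delta^{3/4}\log^{1/4} n$ when $\Delta = \Omega(\log n)$.
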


\begin{proof}
	\Cref{alg:online-ec-known-delta} computes a feasible edge coloring. It colors each edge, by  \Cref{line:greedy}, and each color class -- computed during \iterations or by greedy -- constitutes a matching (here we rely on the colors used by greedy and the \phases being disjoint). It remains to bound the number of colors this algorithm uses. Each \phase requires at most $\ell$ colors, so the \phases require at most $\Delta$ colors. The number of colors the greedy step requires is at most twice the maximum degree of the remaining uncolored subgraph after the \phases, which we now bound.
	
	Let $A_i \triangleq \mathds{1}[\Delta(U_i)\leq  d_i]$ be an indicator for the event that $d_i$ upper bounds $\Delta(U_i)$. By \Cref{max-deg-bound-known-delta} we have that $\Pr[\overline{A_i} \mid A_{i-1}, A_{i-2},\dots ] = \Pr[\overline{A_i} \mid A_{i-1}] \leq 1/n^3$. Also, trivially $\Pr[\overline{A_0}] = 0$. Taking union bound over all $i$, we find that the probability of any $A_i$ not being one is at most
	\begin{align*}
	\Pr\left[\bigvee_{i=0}^{\Delta/\ell} \overline{A_i}\right] & \leq 	\sum_{i=1}^{\Delta/\ell} \Pr\left[\overline{A_i}, A_{i-1}, A_{i-2},\dots \right] \leq 	\sum_{i=1}^{\Delta/\ell} \Pr\left[\overline{A_i} \mid A_{i-1}, A_{i-2},\dots \right] \leq (\Delta/\ell)/n^3 \leq 1/n^2.
	\end{align*}
	
	Consequently, all applications of $\marking_{d_{i+1}}$ during \phase $i+1$ match each edge of $U_{i+1}$ with probability at least $\frac{1}{d_{i+1}}(1-11\sqrt[3]{(\log d_{i+1})/d_{i+1}})$, as required by our analysis for  \phase $i+1$.
	Moreover, $d_i\geq \Delta(U_i)$ for all $i\in [0,\Delta/\ell]$ w.h.p.~implies that the uncolored subgraph following the $\Delta/\ell$ \phases has maximum degree at most 
	\begin{align*}
	d_{\lfloor\Delta/\ell\rfloor} & = \Delta - \lfloor\Delta/\ell\rfloor\cdot (\ell - 8\sqrt{\ell \log n}) \\
	& \leq \ell + (\Delta/\ell)\cdot 8\sqrt{\ell \log n} \\
	& \leq \Delta^{1/2} \log^{1/2} n + 8\Delta^{3/4}\log^{1/4} n \\
	& \leq 9\Delta^{3/4}\log^{1/4} n.
	\end{align*}
	The greedy algorithm therefore colors the remaining uncolored graph using at most a further $18\Delta^{3/4}\log^{1/4} n - 1$ colors. That is, it uses $\Delta\cdot O(\sqrt[4]{(\log n)/\Delta})$ colors in the range $\Delta+1,\Delta+2,\dots$. The theorem follows, including the stated bound for $\Delta = \omega(\log n)$.
\end{proof}

\noindent\textbf{Remark.} \Cref{alg:online-ec-known-delta} is $(1+O(\sqrt[4]{(\log n)/\Delta}))$ competitive w.h.p.~for all $\Delta=\Omega(\log n)$ large enough, so for $\Delta=\Omega(\log n)$ large enough it yields a constant competitive ratio strictly smaller than $2$.
	\section{Omitted Proofs of \Cref{sec:hardness}}
\label{sec:hardness-omitted}

Here we present our proof for the lower bound for online edge coloring in general graphs.
\lowergeneral*
\begin{proof}
	The adversarial instance has $m + 1$ possible futures. Neither the number of phases $m$ nor the choice of future are known to the online algorithm. There is a state associated with the input, and there are two possible states, ``old'' and ``new''. Initially, the graph contains $m!$ vertices and the state is ``old''. There are $m'\leq m$ phases in total. We use $V_k$ to denote the set of online vertices which arrive in phase $k$ ($k \in [m']$) and $V_0$ to denote the initial $m!$ vertices which arrive in phase $0$. Moreover, we use $v^{k}_{i}$ to denote the $i^{th}$ vertex arrived in phase $k$. In phase $k$, newly-arrived vertices have degree $k$. If the state is ``old'', $m!/k$ vertices arrive and the $i^{th}$ vertex, $v^{k}_{i}$, is adjacent to $v^{0}_{i}, v^{0}_{m!/k + i}, \cdots, v^{0}_{(k-1)m!/k + i}$. On the other hand, if the state is ``new''  and it changed from ``old'' to ``new'' at the end of phase $t$ ($k > t$), then $m!/kt$ vertices arrive and the $i^{th}$ vertex, $v^{k}_{i}$, will neighbor $v^{t}_{i}, v^{t}_{m!/kt + i}, \cdots v^{t}_{m!(k-1)/kt + i}$. At the end of phase $k$, the adversary decide whether to switch state to ``new''. Notice that the state can only transition from ``old'' to ``new".

	Again, we let $x_{kj}$ denote the average assignment of color $j$ to edges of phase $k$, but this time only if the state is ``old'' during this phase.
	The following constraints still hold for the same reason as Constraints \eqref{eq:bipartite-1} and \eqref{eq:bipartite-2} for the bipartite hard instance of \Cref{thm: lower-bipartite}.
	\begin{equation}
	\label{general-volume}
	\sum_{j = 1}^{k}x_{k,j} \geq 1 \quad \forall k.
	\end{equation}
	\begin{equation}
	\label{general-capacity-old}
	\sum_{k = j}^{m}x_{i,j} \leq \alpha \quad \forall j.
	\end{equation}
	Furthermore, We use $y_{k, j}^{t}$ to denote the average assignment of color $j$ to edges between $V_k$ and $V_t$ when the state transitions from ``old'' to ``new'' in phase $t$. (I.e., this is the average assignment of color $j$ to edges of phase $k>t$, for $t$ the phase at which the transition occurred.)
	
	Again, as each edge between a $V_t$ vertex and its neighbor in $V_k$ ($k>t$) must be fractionally colored, we have  
	\begin{equation}
	\sum_{j = 1}^{k}y_{k, j}^{t} \geq 1 \quad \forall t < k \leq m. \label{general-capacity-new1}
	\end{equation}
	Moreover, the maximum load of every vertex for every color is at most $\alpha$, and so we have
	\begin{align}
	t\cdot x_{t, j} + \sum_{k = t + 1}^{m}y_{k, j}^{t} &\leq \alpha \quad \forall 1 \leq j\leq t\leq m \label{general-capacity-new2}\\ 
	\sum_{k = j}^{m}y_{k, j}^{t} &\leq \alpha \quad \forall 1\leq t < j \leq m \label{general-capacity-new3}\\ 
	k\cdot y_{k, j}^{t} &\leq \alpha \quad \forall 1 \leq t < k\leq m. \label{general-capacity-new4} 
	\end{align}
	To summarize, constraints \eqref{general-volume}-\eqref{general-capacity-new4} for any $m$ are all satisfied by any $\alpha$-competitive online fractional edge coloring algorithm on this distribution of inputs. Therefore, the optimal value of an LP with objective of minimizing $\alpha$ subject to these constraints is a lower bound on the optimal competitive ratio $\alpha$ of any such online algorithm on general graphs.
	Using commercial solvers, we solve this LP for $m = 50$ and find that its optimal value, which lower bounds any algorithm's competitive ratio on general graphs, is $\generalLB$. Again, using the same trick as \Cref{sec:dense-graph}, we find that this lower bound also holds for dense graphs.
\end{proof}
	\section{Extension to Multigraphs}\label{sec:multigraphs}
In this section we outline the extension of our positive results to multigraphs (the negative results carry over trivially).

\paragraph{Fractional Algorithms.} Our fractional results carry over unchanged to multigraphs. To see 
this, note that our algorithms' analyses do not require the graph to be simple, as our analysis 
implies a bound on the maximum load after each edge has its value increased, and the relevant bounds 
do not require there to be no parallel edges.

\paragraph{Randomized Algorithms.}
For multigraphs, we ``merge'' parallel edges into a single edge. When running \marking on some fractional matching, we have a merged edge's fractional assignment be the sum of its constituent edges' fractional assignment. If each edge has multiplicity a sufficiently small $o(\Delta)$ term, this would assign each edge a value of $o(1)$. By the properties of \marking (\Cref{peredge-general}), this implies that when we round a fractional matching $\vec{x}$ to compute a matching $\mathcal{M}$, each edge $e$ is matched in $\mathcal{M}$ with probability $x_e\cdot (1-o(1))\leq \Pr[e\in \mathcal{M}] \leq x_e$. Our arguments carry through, though with possibly worse $o(1)$ terms.

We note that the above stipulation that each edge have bounded 
multiplicity is necessary in order to obtain $(1+o(1))$ competitiveness for known $\Delta$.\footnote{\citet{aggarwal2003switch} showed that one cannot even achieve $5/4-\epsilon$ competitiveness in multigraphs with unbounded edge multiplicities, though under the possibly harder adversarial \emph{edge arrival} model.} 
\begin{restatable}{obs}{multigraphs}\label{cor:multigraphs}
	No algorithm is $(1+o(1))$ competitive on multigraphs of arbitrary multiplicity.
\end{restatable}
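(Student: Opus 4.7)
The plan is to exhibit a family of multigraphs on which even the \emph{offline}-optimal chromatic index is $\tfrac{3}{2}\Delta$, so that no algorithm whatsoever (online or offline, deterministic or randomized) can be $(1+o(1))$-competitive. The witness is Shannon's classical tight example: let $G_k$ be the ``fat triangle'' on three vertices $\{a,b,c\}$ with exactly $k$ parallel edges between each of the three pairs. Then $\Delta(G_k)=2k$, so being $(1+o(1))$-competitive on $G_k$ would amount to producing an edge coloring with $2k(1+o(1))$ colors.

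I would then observe that, since the underlying simple graph of $G_k$ is a triangle, every two edges of $G_k$ share an endpoint, so every matching in $G_k$ has size at most $1$. Consequently, any partition of $E(G_k)$ into matchings uses at least $|E(G_k)|=3k$ parts, giving $\chi'(G_k) \geq 3k = \tfrac{3}{2}\Delta$. Every edge coloring of $G_k$ -- including the offline optimum -- therefore uses at least $\tfrac{3}{2}\Delta$ colors, so the competitive ratio of any algorithm on $G_k$ is at least $3/2$.

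To reconcile this with the $(1+o(1))$ quantifier in $n$ or $\Delta$, I would take $n$ arbitrarily large by padding $G_k$ with isolated vertices and let $k\to\infty$; the maximum degree stays $\Delta=2k$ and the competitive ratio remains $\geq 3/2 = 1+\Omega(1)$, which is incompatible with $1+o(1)$. There is essentially no technical obstacle here; the proof is simply an appeal to the chromatic-index lower bound on $G_k$. As a side benefit, this construction also explains why the positive results of the paper impose maximum multiplicity $o(\Delta)$, since $G_k$ has multiplicity $k = \Delta/2$, which is exactly on the boundary where those results no longer apply.
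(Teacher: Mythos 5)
Your chromatic-index computation for the fat triangle is of course correct ($\chi'(G_k)=3k=\tfrac{3}{2}\Delta$, since all edges pairwise intersect), but as a proof of this observation it misses what the statement is there to establish, and this is a genuine gap rather than just a stylistic difference. The observation lives in \Cref{sec:multigraphs}, whose purpose is to show that the multiplicity bound $o(\Delta)$ is \emph{necessary for the paper's positive results} --- and those positive results are for \emph{bipartite} multigraphs under one-sided vertex arrivals. Your witness is a triangle, so it says nothing about the bipartite setting: by K\"onig's theorem a bipartite multigraph of any multiplicity has $\chi'=\Delta$, so the ``offline benchmark is unattainable'' phenomenon you exploit simply does not occur there. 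Relatedly, your argument is purely offline: it shows that the normalization by $\Delta$ (which the paper justifies by ``the optimal coloring requires at least $\Delta$ colors,'' a justification that breaks exactly on non-bipartite multigraphs) is unattainable even by the offline optimum. Measured against the offline optimum --- the benchmark the observation is morally about, and which coincides with $\Delta$ on the instances the paper cares about --- your instance exhibits no online hardness at all: an online algorithm could in principle use exactly $3k=\chi'(G_k)$ colors on it.

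The paper's proof is instead an online impossibility on \emph{bipartite} multigraphs: take the $2$-regular bipartite instances on which, by \cite{cohen2018randomized}, no online matching algorithm achieves expected matching size $c\cdot n$ for some constant $c<1$, and replace each edge by $k$ parallel copies, yielding a $2k$-regular bipartite multigraph with $\chi'=\Delta=2k$ and multiplicity $\Delta/2$. A $(1+\epsilon)$-competitive online coloring of this multigraph would, by sampling a uniformly random color class at initialization, produce an online matching of expected size $\frac{2kn}{(1+\epsilon)2k}=n/(1+\epsilon)$, forcing $\epsilon=\Omega(1)$. This is the argument that actually shows the $o(\Delta)$-multiplicity stipulation cannot be dropped from the paper's theorems; your Shannon-triangle argument, while internally sound, proves only the weaker (and essentially offline) fact that general multigraphs need not be $\Delta(1+o(1))$-colorable at all.
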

\begin{proof}	
	By \cite{cohen2018randomized}, no online matching algorithm outputs a matching of expected size 
	$c\cdot n$ in $2$-regular $2n$-vertex bipartite graphs under one-sided arrivals, for some constant $c<1$.
	Given an input online $2$-regular graph, we simulate the online arrival of a multigraph with $k$ copies of each edge of the input (simple) graph, to obtain a $2k$-regular multigraph with each edge having multiplicity $k$. 
	Given a $(1+\epsilon)$-competitive edge coloring algorithm for multigraphs of maximum degree $\Delta = 2k$,
	we can randomly pick one of the $(1+\epsilon)\cdot 2k$ color classes upon initialization and output that matching. 
	For $2$-regular graphs on $2n$ vertices, which have $2\cdot n$ edges, this results in a matching in the multigraph (which has $2kn$ edges) of expected size $\frac{2kn}{(1+\epsilon)\cdot 2k} = n/(1+\epsilon)$, from which we 
	conclude $\epsilon = \Omega(1)$.
\end{proof}
	\section{Useful Probabilistic Inequalities}\label{sec:prob-basics}

For completeness, we cite here some useful probabilistic inequalities and notions of negative dependence, starting with the latter.

\subsection{Negative Association and Other Negative Dependence Properties.}\label{sec:na} In our analysis we rely on several notions of negative dependence between random variables. In particular, one notion we will rely on is the notion of \emph{negative association}, introduced by \citet{khursheed1981positive} and \citet{joag1983negative}.
\begin{Def}[Negative Association \cite{khursheed1981positive,joag1983negative}]\label{eq:NA-def}
	A joint distribution $X_1,X_2,\dots,X_n$ is said to be \emph{negatively associated (NA)} if for any two functions $f,g$ both monotone increasing or both monotone decreasing, with $f(\vec{X})$ and $g(\vec{X})$ depending on disjoint subsets of the $X_i$, $f(\vec{X})$ and $g(\vec{X})$ are negatively correlated; i.e., 
	$$\E[f(\vec{X}) \cdot g(\vec{X})] \leq \E[f(\vec{X})]\cdot \E[g(\vec{X})].$$
\end{Def}

Clearly, independent random variables are NA. Another class of NA distributions is captured by the \emph{zero-one rule}. 
This rule asserts that if $X_1,X_2,\dots,X_n$ are zero-one random variables whose sum is always at most one, $\sum_i X_i\leq 1$, then $X_1,X_2,\dots,X_n$ are NA (see \cite{dubhashi1996balls}).
Additional, more complex, NA distributions can be ``built'' from simpler NA distributions using the following closure properties.

\begin{enumerate}[(P1)]
	\item \label{prop:ind-union} \underline{Independent Union.} If $X_1,X_2,\dots,X_n$ are NA, $Y_1,Y_2,\dots,Y_m$ are NA, and $\{X_i\}_i$ are independent of $\{Y_j\}_j$, then $X_1,X_2,\dots,X_n,Y_1,Y_2,\dots,Y_m$ are NA.
	\item \label{prop:monotone-fns} \underline{Concordant monotone functions.} Let $f_1,f_2,\dots,f_k:\mathbb{R}^n\rightarrow \mathbb{R}$ be functions, all monotone increasing or all monotone decreasing, with the $f_i(\vec{X})$ depending on disjoint subsets of the $\{X_i\}_i$. Then, if $X_1,X_2,\dots,X_n$ are NA, so are $f_1(\vec{X}),f_2(\vec{X}),\dots,f_k(\vec{X})$.
\end{enumerate}

Negative association implies several useful properties, including the applicability of Chernoff-Hoeffding type bounds~\cite{dubhashi1996balls} (we elaborate on this below). In addition, NA clearly implies  pairwise negative correlation. More generally, NA implies the stronger notion of \emph{negative orthant dependence}. 

\begin{Def}
	A joint distribution $X_1,X_2,\dots,X_n$ is said to be \emph{Negative Upper Orthant Dependent (NUOD)}, if for all $\vec{x}\in \mathbb{R}^{n}$ it holds that
	$$\Pr[\bigwedge_{i\in [n]} X_i\geq x_i] \leq \prod_{i\in [n]} \Pr[X_i \geq x_i],$$ 
	and \emph{Negative Lower Orthant Dependent (NLOD)} if for all $\vec{x}\in \mathbb{R}^n$ it holds that
	$$\Pr[\bigwedge_{i\in [n]} X_i\leq x_i] \leq \prod_{i\in [n]} \Pr[X_i \leq x_i].$$ 
	A joint distribution is said to be \emph{Negative Orthant Dependent (NOD)} if it is both NUOD and NLOD.
\end{Def}

\begin{lem}[NA variables are NOD (\cite{joag1983negative,dubhashi1996balls})]\label{lem:marginal-NA} If $X_1,\dots,X_n$ are NA, then they are NOD.
\end{lem}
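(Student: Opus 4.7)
The plan is to derive both the NUOD and NLOD inequalities directly from the definition of negative association by choosing the two monotone functions in the NA definition to be appropriate products of indicator functions, then iterating/inducting on $n$.

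For NUOD, I would induct on $n$. The base case $n = 1$ is trivial. For the inductive step, write
\[
\Pr\!\left[\bigwedge_{i\in[n]} X_i \geq x_i\right] = \E\!\left[\prod_{i\in[n]} \mathds{1}[X_i \geq x_i]\right],
\]
and apply the NA property to the two functions $f(\vec{X}) \triangleq \mathds{1}[X_1 \geq x_1]$ and $g(\vec{X}) \triangleq \prod_{i=2}^{n} \mathds{1}[X_i \geq x_i]$. Both are monotone increasing in their arguments and depend on the disjoint subsets $\{X_1\}$ and $\{X_2,\dots,X_n\}$, so the definition of NA gives
\[
\E[f(\vec{X}) \cdot g(\vec{X})] \leq \E[f(\vec{X})] \cdot \E[g(\vec{X})] = \Pr[X_1 \geq x_1] \cdot \E\!\left[\prod_{i=2}^n \mathds{1}[X_i \geq x_i]\right].
\]
To finish the induction, I need that the subfamily $X_2,\dots,X_n$ is itself NA; this follows immediately from the definition of NA applied to functions that only depend on these variables (equivalently, by viewing any pair of monotone functions of $X_2,\dots,X_n$ as functions of $\vec{X}$ that ignore $X_1$, and invoking NA of $\vec{X}$). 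Induction then yields $\E[g(\vec{X})] \leq \prod_{i=2}^{n} \Pr[X_i \geq x_i]$, completing the NUOD bound.

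For NLOD, the argument is identical, except that I use indicators $\mathds{1}[X_i \leq x_i]$ in place of $\mathds{1}[X_i \geq x_i]$. These functions are monotone \emph{decreasing} in their arguments, and the definition of NA covers the case of two functions that are both monotone decreasing, so the same inductive step goes through and yields
\[
\Pr\!\left[\bigwedge_{i\in[n]} X_i \leq x_i\right] \leq \prod_{i\in[n]} \Pr[X_i \leq x_i].
\]

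I do not anticipate a real obstacle here; the lemma is essentially a restatement of a special case of the NA definition, with the only mildly nontrivial point being the justification that subfamilies of NA variables are NA (so that the induction hypothesis applies). Since this is a well-known folklore observation that follows directly from the NA definition by plugging in constant functions in the unused coordinates, the proof is short and primarily bookkeeping.
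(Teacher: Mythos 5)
Your proof is correct: the indicator functions are monotone (increasing for NUOD, decreasing for NLOD), they depend on disjoint subsets of the variables, the product of $\{0,1\}$-valued monotone functions of the same direction is again monotone of that direction, and the closure of NA under taking subfamilies (needed for the induction hypothesis) does follow by viewing functions of $X_2,\dots,X_n$ as functions of $\vec{X}$ that ignore $X_1$. The paper does not prove this lemma at all --- it is imported by citation from Joag-Dev--Proschan and Dubhashi--Ranjan --- and your argument is essentially the standard one given in those references, so there is nothing further to reconcile.
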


In our analysis we will prove some scaled Bernoulli random variables are NUOD.  To motivate our interest in this form of negative dependence, we note that for binary NUOD variables $X_1,X_2,\dots,X_n$, we have that for each set $I\subseteq [n]$, $\Pr[\bigwedge_{i\in I} X_i = 1] \leq \prod_{i\in I}\Pr[X_i = 1]$. 
As shown by \citet[proof of Theorem 3.2, with $\lambda=1$]{panconesi1997randomized}, 
this property implies that the moment generating function of the sum of the $X_i$ is upper bounded by the moment generating function of the sum of \emph{independent} copies of the $X_i$ variables. A simple extension of their argument shows the same holds if the $X_i$ are NUOD \emph{scaled} Bernoulli variables. As in \cite{panconesi1997randomized}, following the standard proofs of Chernoff-Hoeffding type bounds, this upper bound on the moment generating function implies the applicability of the following upper tail bounds to the sum of NUOD  scaled Bernoulli variables ``as though these variables were independent''.

\begin{lem}[Chernoff Bound for NUOD Bernoulli Variables, \cite{panconesi1997randomized}]\label{lem:nuod-chernoff}
	Let $X=\sum_i X_i$ be the sum of binary NUOD random variables $X_1, X_2,...,X_n$. Then, for any $\eps\in [0,1]$ and $R\geq \E[X]$
	\begin{align*}
	\Pr[X > (1+\delta)\cdot R] & 
	\leq \exp\left(\frac{-\eps\cdot R}{3}\right).
	\end{align*}
\end{lem}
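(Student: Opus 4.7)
The plan is to follow the standard recipe for Chernoff-type bounds via the moment generating function (MGF), with the NUOD assumption taking the place of independence exactly where it is needed. The central observation is that, for binary variables, the joint MGF factorizes into a polynomial whose coefficients are products of the variables themselves, at which point the NUOD condition becomes directly applicable.

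First, I would write, for any $t>0$,
\[
\prod_i \exp(t X_i) \;=\; \prod_i \bigl(1 + (e^t-1) X_i\bigr) \;=\; \sum_{I \subseteq [n]} (e^t-1)^{|I|} \prod_{i\in I} X_i,
\]
where the first equality uses $X_i \in \{0,1\}$. Taking expectations and noting $\E[\prod_{i\in I} X_i] = \Pr[\bigwedge_{i\in I} X_i = 1]$, the NUOD hypothesis yields $\E[\prod_{i\in I} X_i] \le \prod_{i\in I} \E[X_i]$. Summing the series back up gives the key MGF domination inequality
\[
\E[e^{tX}] \;\le\; \prod_i \bigl(1 + (e^t-1)\E[X_i]\bigr) \;\le\; \prod_i \exp\bigl((e^t-1)\E[X_i]\bigr) \;=\; \exp\bigl((e^t-1)\E[X]\bigr),
\]
using $1+y \le e^y$ in the second step. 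This is exactly the MGF bound that would hold if the $X_i$ were independent Bernoullis with the same marginals.

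From here I would apply Markov's inequality to $e^{tX}$ in the standard way. For any $t>0$,
\[
\Pr[X > (1+\eps) R] \;\le\; \frac{\E[e^{tX}]}{e^{t(1+\eps)R}} \;\le\; \exp\bigl((e^t-1)\E[X] - t(1+\eps)R\bigr) \;\le\; \exp\bigl((e^t-1-t(1+\eps))R\bigr),
\]
where the last step uses $R \ge \E[X]$ together with $e^t - 1 \ge 0$ (so replacing $\E[X]$ by the larger $R$ only weakens the bound in our favor after accounting for the negative term). Optimizing $t = \ln(1+\eps)$ and invoking the standard estimate $(1+\eps)\ln(1+\eps) - \eps \ge \eps^2/3$ for $\eps \in [0,1]$ yields $\Pr[X > (1+\eps)R] \le \exp(-\eps^2 R/3)$, which is the claimed bound (reading $\delta$ in the statement as $\eps$).

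The only subtle step is the replacement of $\E[X]$ by $R$ inside the exponent: one must verify that the expression $(e^t-1)M - t(1+\eps)R$ is monotone in $M$ with a sign that works out, which holds since $e^t - 1 > 0$ whenever $t>0$ — so upper-bounding $\E[X]$ by $R$ is legitimate. Everything else is routine; the main conceptual obstacle, already handled, is justifying the factorization of the MGF bound from NUOD alone (rather than from full independence or NA), and the polynomial expansion above makes this precise using only the upper-orthant inequalities on subsets.
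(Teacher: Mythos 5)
Your proof is correct and takes essentially the same route as the paper, which (citing Panconesi--Srinivasan) argues that the subset product inequality $\Pr[\bigwedge_{i\in I} X_i = 1] \leq \prod_{i\in I}\Pr[X_i=1]$ implied by NUOD dominates the moment generating function of $X$ by that of independent copies of the $X_i$, after which the standard Markov-plus-optimize-$t=\ln(1+\epsilon)$ derivation yields the stated tail bound. Your multilinear expansion of $\prod_i\bigl(1+(e^t-1)X_i\bigr)$ simply makes that MGF-domination step explicit, so there is no substantive difference in approach.
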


\begin{lem}[Bernstein's Inequality for NUOD Scaled Bernoulli Variables]\label{lem:bernstein-nuod}
	Let X be the sum of NUOD random variables $X_1, X_2,...,X_n$ with $X_i\in \{0,M_i\}$ and $M_i\leq M$ for each $i\in [n]$.
	Then, if $\sigma^2=\sum_{i=1}^n  Var(X_i)$, we have for all $a>0$, 
	$$\Pr[X > \E[X] + a] \leq \exp\left(\frac{-a^2}{2(\sigma^2 + aM/3) }\right).$$
\end{lem}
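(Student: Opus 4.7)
I follow the classical moment‑generating‑function proof of Bernstein's inequality, with a decoupling step in the style of Panconesi--Srinivasan replacing the use of independence. Fix any $t\in(0,3/M)$. Markov applied to $e^{tX}$ gives
\[
\Pr[X>\E[X]+a]\;\le\;e^{-t(\E[X]+a)}\cdot\E\!\left[\prod_{i=1}^n e^{tX_i}\right].
\]

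The only non-standard ingredient is the product decoupling
\begin{equation}\label{eq:decouple-proposal}
\E\!\left[\prod_{i=1}^n e^{tX_i}\right]\;\le\;\prod_{i=1}^n\E[e^{tX_i}]\qquad(t\ge 0).
\end{equation}
Write $e^{tX_i}=1+(e^{tM_i}-1)Y_i$ with $Y_i:=X_i/M_i\in\{0,1\}$. Because the coordinatewise rescaling $x\mapsto x/M_i$ is monotone increasing, the NUOD property of the $X_i$ passes to the $Y_i$: for any thresholds $y_i$, the event $\{Y_i\ge y_i\}$ equals $\{X_i\ge M_i y_i\}$, and NUOD of the $X_i$ at the rescaled thresholds gives NUOD of the $Y_i$. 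For binary variables NUOD is equivalent to $\E[\prod_{i\in S}Y_i]\le\prod_{i\in S}\E[Y_i]$ for every $S\subseteq[n]$. Expanding $\prod_i\bigl(1+(e^{tM_i}-1)Y_i\bigr)=\sum_{S\subseteq[n]}\bigl(\prod_{i\in S}(e^{tM_i}-1)\bigr)\prod_{i\in S}Y_i$ yields a non-negative combination for $t\ge 0$, so applying the inequality term by term and resumming produces \eqref{eq:decouple-proposal}.

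From here on the derivation is the textbook Bernstein calculation and no longer uses the dependence structure. For each centred $\bar X_i:=X_i-\E[X_i]$ with $\bar X_i\le M$,
\[
\log\E[e^{t\bar X_i}]\;\le\;\frac{t^2\VAR(X_i)}{2(1-tM/3)}\qquad(0<t<3/M),
\]
which follows from the monotonicity of $\phi(y):=(e^y-1-y)/y^2$ on $\mathbb{R}$ combined with $\phi(y)\le 1/(2(1-y/3))$ for $y<3$, applied to $y=t\bar X_i\le tM$. Combining this per‑coordinate bound with \eqref{eq:decouple-proposal} in the Markov bound gives
\[
\Pr[X>\E[X]+a]\;\le\;\exp\!\left(-ta+\frac{t^2\sigma^2}{2(1-tM/3)}\right),
\]
and the optimising choice $t=a/(\sigma^2+aM/3)\in(0,3/M)$ yields exactly the claimed exponent $-a^2/\bigl(2(\sigma^2+aM/3)\bigr)$. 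The only real obstacle is establishing \eqref{eq:decouple-proposal}; once it is in place the remainder is mechanical, and \eqref{eq:decouple-proposal} itself is precisely the polynomial‑expansion / term‑by‑term NUOD trick underlying the proof of \Cref{lem:nuod-chernoff}, lifted from $\{0,1\}$-valued variables to the scaled Bernoullis $X_i\in\{0,M_i\}$ via the trivial rescaling $Y_i=X_i/M_i$.
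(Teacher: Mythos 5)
Your proof is correct and follows essentially the same route the paper takes: the paper does not spell out a proof but explicitly appeals to the Panconesi--Srinivasan MGF-decoupling argument (subset-product property of NUOD indicators giving $\E[\prod_i e^{tX_i}]\le\prod_i\E[e^{tX_i}]$, extended to scaled Bernoullis) followed by the standard Bernstein calculation, which is exactly what you carry out, including the rescaling $Y_i=X_i/M_i$ and the optimization $t=a/(\sigma^2+aM/3)$. Your write-up simply makes the cited sketch explicit, so no further comparison is needed.
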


In addition we will use the following simple coupling argument, stated here for completeness.
\begin{lem}\label{coupling}
	Let $X_1,X_2,\dots,X_m$ be random variables and $Y_1,Y_2,\dots,Y_m$ be binary random variables such that $Y_i=f_i(X_1,X_2,\dots,X_i)$ for all $i$ such that for all $\vec{x}\in \mathbb{R} ^m$, 
	$$\Pr\left[Y_i = 1 \,\Bigg\vert\, \bigwedge_{\ell\in [i]} X_\ell = x_\ell\right] \leq p_i.$$
	Then, if $Z_i=Bernoulli(p_i)$ are independent random variables, we have
	$$\Pr\left[\sum_i Y_i \geq k\right] \leq \Pr\left[\sum_i Z_i \geq k\right].$$
\end{lem}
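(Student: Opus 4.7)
\textbf{Proof proposal for \Cref{coupling}.} The plan is to build a coupling on an enlarged probability space in which $\sum_i Y_i \leq \sum_i Z_i$ holds pointwise almost surely. Pointwise domination immediately implies the stochastic dominance $\Pr[\sum_i Y_i \geq k] \leq \Pr[\sum_i Z_i \geq k]$ for every $k$, which is what the lemma asserts. The construction is the standard ``inverse-CDF'' coupling: a single uniform random variable per coordinate is used to simulate both $Y_i$ and $Z_i$, exploiting the pointwise inequality between their conditional means.

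More concretely, I would augment the probability space with i.i.d.\ variables $U_1,\dots,U_m \sim \mathrm{Unif}[0,1]$, independent of $(X_1,\dots,X_m)$. Let $q_i(x_1,\dots,x_i) \triangleq \Pr[Y_i=1 \mid X_1=x_1,\dots,X_{i-1}=x_{i-1}]$ denote the conditional success probability of $Y_i$ given the history (under the natural reading that the hypothesis bounds the conditional probability given the preceding $X$'s, as used in \Cref{max-deg-bound-known-delta}); the hypothesis gives $q_i(\vec{x}) \leq p_i$ for every $\vec{x}$. Now define, on the enlarged space,
\[
\widetilde Y_i \triangleq \mathds{1}\!\left[U_i \leq q_i(X_1,\dots,X_i)\right], \qquad Z_i \triangleq \mathds{1}\!\left[U_i \leq p_i\right].
\]
Because $q_i(\vec{x}) \leq p_i$ for every realization, we have $\widetilde Y_i \leq Z_i$ deterministically on the entire sample space, and therefore $\sum_i \widetilde Y_i \leq \sum_i Z_i$ pointwise.

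It remains to verify two distributional facts. First, the joint distribution of $(X_1,\dots,X_m,\widetilde Y_1,\dots,\widetilde Y_m)$ agrees with that of $(X_1,\dots,X_m,Y_1,\dots,Y_m)$: conditional on $(X_1,\dots,X_m)=\vec{x}$, the variables $\widetilde Y_i$ are independent Bernoullis with parameters $q_i(\vec{x})$, exactly matching the conditional law of the $Y_i$'s, so $\sum_i \widetilde Y_i$ has the same law as $\sum_i Y_i$. Second, the $Z_i$'s are i.i.d.\ Bernoulli$(p_i)$ and independent of the $X$'s, because the $U_i$'s are i.i.d.\ and independent of $(X_1,\dots,X_m)$, and the event $\{U_i \leq p_i\}$ depends only on $U_i$ and the deterministic constant $p_i$. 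Combining pointwise domination with these two marginal identifications yields
\[
\Pr\!\left[\sum_i Y_i \geq k\right] = \Pr\!\left[\sum_i \widetilde Y_i \geq k\right] \leq \Pr\!\left[\sum_i Z_i \geq k\right],
\]
which is the desired inequality.

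There is no serious obstacle; the argument is a textbook coupling. The only mildly delicate point is the measurability/indexing question of what ``history'' appears in the conditioning of the hypothesis — as noted above, the reading consistent with the use of this lemma in \Cref{max-deg-bound-known-delta} is that $q_i$ conditions on $X_1,\dots,X_{i-1}$, under which the coupling above goes through verbatim. If one prefers a proof that avoids introducing auxiliary randomness, the same conclusion can be obtained by a one-line induction on $m$: assuming the bound for $m-1$ terms and conditioning on $(X_1,\dots,X_{m-1})$ reduces the problem to comparing a Bernoulli of parameter at most $p_m$ against an independent Bernoulli$(p_m)$, which is immediate.
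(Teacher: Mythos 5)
Your coupling breaks at the step ``the joint distribution of $(X_1,\dots,X_m,\widetilde Y_1,\dots,\widetilde Y_m)$ agrees with that of $(X_1,\dots,X_m,Y_1,\dots,Y_m)$.'' Under the reading you (correctly) adopt, $q_i$ is a function of $x_1,\dots,x_{i-1}$ only, and conditional on the \emph{full} vector $(X_1,\dots,X_m)=\vec{x}$ the true $Y_i$'s are not independent Bernoullis with parameters $q_i$ --- they are determined, $Y_i=f_i(x_1,\dots,x_i)$. Replacing them by fresh uniforms $U_i$ destroys the correlations among the $Y_i$'s that flow through the shared $X$'s, and this can change the law of the sum, in the direction that breaks your chain of (in)equalities. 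Concretely, take $m=2$, $X_1,X_2$ i.i.d.\ Bernoulli$(1/2)$, $Y_1=X_1$ and $Y_2=X_1$; the hypothesis holds with $p_1=1/2$, $p_2=1$ since $q_1=1/2$ and $q_2(x_1)=x_1$. Then $\Pr[Y_1+Y_2\geq 2]=1/2$, while in your construction $\widetilde Y_1=\mathds{1}[U_1\leq 1/2]$ is independent of $\widetilde Y_2=\mathds{1}[U_2\leq X_1]=X_1$, so $\Pr[\widetilde Y_1+\widetilde Y_2\geq 2]=1/4$. Hence $\sum_i\widetilde Y_i$ does \emph{not} have the law of $\sum_i Y_i$, and the inequality you would need instead ($\Pr[\sum_i Y_i\geq k]\leq\Pr[\sum_i\widetilde Y_i\geq k]$) fails here as well. (The lemma's conclusion is of course still true in this example; it is your intermediate object that is wrong.)

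The construction can be repaired by coupling \emph{on top of} the given process rather than re-simulating it: keep the $Y_i$ as they are, draw fresh independent uniforms $V_1,\dots,V_m$, and set $Z_i=1$ whenever $Y_i=1$, while if $Y_i=0$ set $Z_i=\mathds{1}\left[V_i\leq (p_i-q_i)/(1-q_i)\right]$, where $q_i=\Pr[Y_i=1\mid X_1,\dots,X_{i-1}]$. Then $Z_i\geq Y_i$ pointwise, and $\Pr[Z_i=1\mid X_1,\dots,X_{i-1},V_1,\dots,V_{i-1}]=q_i+(1-q_i)\cdot\frac{p_i-q_i}{1-q_i}=p_i$ is a constant, so by induction the $Z_i$ are mutually independent Bernoulli$(p_i)$ variables; pointwise domination of the sums then gives the claim. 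Alternatively, your closing induction sketch is the right elementary route, but it is not quite a comparison of one Bernoulli against another: conditioning on $X_1,\dots,X_{m-1}$ gives $\Pr[S_m\geq k]\leq (1-p_m)\Pr[S_{m-1}\geq k]+p_m\Pr[S_{m-1}\geq k-1]$ for $S_j=\sum_{i\leq j}Y_i$, and one must apply the induction hypothesis at \emph{both} thresholds $k$ and $k-1$ and recombine to get $\Pr[\sum_i Z_i\geq k]$. Either of these fixes yields a complete proof; as written, the main argument does not.
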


\subsection{Standard Concentration Inequalities}
In addition, we will need the following standard tail bounds.

\begin{lem}[Chernoff Bound]\label{chernoff}
	Let $X_1,X_2,\dots,X_m$ be independent random variables. 
	If 
	$X_i\in [0,b]$ always, 
	then, for all $\epsilon\in (0,1)$ and $R\geq \E[X]$, we have
	$$\Pr[X\geq (1+\epsilon) \cdot R] \leq \exp\left(-\frac{R\cdot \epsilon^2}{3b}\right),$$
	$$\Pr[X\leq (1-\epsilon) \cdot \E[X]] \leq \exp\left(-\frac{\E[X]\cdot \epsilon^2}{2b}\right).$$		
\end{lem}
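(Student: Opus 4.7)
The statement is the classical multiplicative Chernoff bound for sums of bounded independent random variables, so the plan is the standard Cram\'er--Chernoff exponential moment method. First I would reduce to the $[0,1]$ case by rescaling: set $Y_i \triangleq X_i/b$ and $Y \triangleq X/b$, so that $Y_i\in[0,1]$, and observe that the claimed inequalities translate into the statements $\Pr[Y\geq (1+\epsilon)R/b]\leq \exp(-(R/b)\epsilon^2/3)$ and $\Pr[Y\leq (1-\epsilon)\E[Y]]\leq \exp(-\E[Y]\epsilon^2/2)$.

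For the upper tail, I would apply Markov's inequality to $e^{tY}$ for a parameter $t>0$ to be chosen, giving $\Pr[Y\geq (1+\epsilon)R/b]\leq \E[e^{tY}]\cdot e^{-t(1+\epsilon)R/b}$. Since $e^{tx}$ is convex on $[0,1]$, for $Y_i\in[0,1]$ we have $e^{tY_i}\leq 1-Y_i+Y_i e^t$, so taking expectations and using $1+z\leq e^z$ gives $\E[e^{tY_i}]\leq \exp\bigl(\E[Y_i](e^t-1)\bigr)$. By independence, $\E[e^{tY}]\leq \exp\bigl(\E[Y](e^t-1)\bigr)$, and since $e^t-1\geq 0$ and $\E[Y]=\E[X]/b\leq R/b$, we can further upper bound this by $\exp\bigl((R/b)(e^t-1)\bigr)$ --- this is precisely where the freedom $R\geq \E[X]$ enters the argument. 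Optimizing by taking $t=\log(1+\epsilon)$ yields the bound $\exp\bigl((R/b)[\epsilon-(1+\epsilon)\log(1+\epsilon)]\bigr)$, and the standard numerical inequality $(1+\epsilon)\log(1+\epsilon)-\epsilon\geq \epsilon^2/3$ for $\epsilon\in(0,1)$ closes out the upper tail bound.

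The lower tail is symmetric: apply Markov to $e^{-tY}$ for $t>0$, i.e., $\Pr[Y\leq (1-\epsilon)\E[Y]]\leq \E[e^{-tY}]\cdot e^{t(1-\epsilon)\E[Y]}$. The convexity bound $e^{-tY_i}\leq 1-Y_i+Y_ie^{-t}$ still holds, so the same calculation yields $\E[e^{-tY}]\leq \exp(\E[Y](e^{-t}-1))$. Here I cannot replace $\E[Y]$ by a larger quantity, because $e^{-t}-1\leq 0$ would flip the inequality; this is why the lower tail is stated in terms of $\E[X]$ and not an arbitrary $R$. Optimizing with $t=-\log(1-\epsilon)$ and invoking $(1-\epsilon)\log(1-\epsilon)+\epsilon\geq \epsilon^2/2$ for $\epsilon\in(0,1)$ gives the claimed $\exp(-\E[X]\epsilon^2/(2b))$.

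The only mildly delicate steps are the two elementary inequalities $(1+\epsilon)\log(1+\epsilon)-\epsilon\geq \epsilon^2/3$ and $(1-\epsilon)\log(1-\epsilon)+\epsilon\geq \epsilon^2/2$ on $(0,1)$, both of which are proved routinely by examining Taylor series or by checking that the derivative is nonnegative on $(0,1)$ with equality at $\epsilon=0$. Since the lemma is a textbook statement, I would expect a paper to simply cite it rather than reproduce the proof, so no genuine obstacle arises.
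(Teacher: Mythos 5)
Your proof is correct, and it matches the situation exactly as you anticipated: the paper states this lemma in its appendix of ``standard tail bounds'' without proof, so there is no in-paper argument to compare against. Your exponential-moment derivation is the standard one and handles the one slightly non-textbook feature of the statement correctly --- namely that the upper tail is stated for an arbitrary $R\geq \E[X]$, which is legitimate precisely because $e^t-1\geq 0$ lets you replace $\E[Y]$ by $R/b$ in the moment-generating-function bound, whereas no such replacement is possible in the lower tail (where $e^{-t}-1\leq 0$), exactly as you observe. The two elementary inequalities $(1+\epsilon)\log(1+\epsilon)-\epsilon\geq \epsilon^2/3$ and $(1-\epsilon)\log(1-\epsilon)+\epsilon\geq \epsilon^2/2$ on $(0,1)$ do hold and are verified by the routine derivative check you describe, so there is no gap.
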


\begin{lem}
	[Hoeffding's Inequality]\label{hoeffding}
	Let $X_1,X_2,\dots,X_m$ be independent random variables. 
	If 
	$X_i\in [a_i,b_i]$ always, 
	then, for all $\delta$, 
	$$\Pr[X\geq \E[X] + \delta] \leq \exp\left(-\frac{2\delta^2}{\sum_i (b_i-a_i)^2}\right),$$	
	$$\Pr[X\leq \E[X] - \delta] \leq \exp\left(-\frac{2\delta^2}{\sum_i (b_i-a_i)^2}\right).$$		
\end{lem}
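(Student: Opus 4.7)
The plan is to run the classical exponential-moment (Chernoff) argument applied to the centered sum $Z = X - \E[X] = \sum_i Y_i$ where $Y_i = X_i - \E[X_i]$. Each $Y_i$ lies in an interval of length $b_i - a_i$ and has mean zero. For any $t>0$, Markov's inequality on $e^{tZ}$ gives
\[
\Pr[X \geq \E[X] + \delta] \;=\; \Pr[e^{tZ} \geq e^{t\delta}] \;\leq\; e^{-t\delta}\, \E[e^{tZ}].
\]
Independence of the $X_i$ (and hence of the $Y_i$) lets me factor $\E[e^{tZ}] = \prod_i \E[e^{tY_i}]$, so everything reduces to a one-variable bound on $\E[e^{tY_i}]$.

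The key ingredient is Hoeffding's lemma: if $Y$ has mean zero and $Y \in [a',b']$, then $\E[e^{tY}] \leq \exp(t^2(b'-a')^2/8)$. I would prove it by writing any $y \in [a',b']$ as the convex combination $y = \lambda a' + (1-\lambda)b'$ with $\lambda = (b'-y)/(b'-a')$, using convexity of $u \mapsto e^{tu}$ to get $e^{ty} \leq \lambda e^{ta'} + (1-\lambda) e^{tb'}$, and taking expectations (which eliminates the linear-in-$y$ part since $\E[Y]=0$). After substituting $p = -a'/(b'-a')$ and $u = t(b'-a')$, the right-hand side can be written as $e^{L(u)}$ with $L(0) = L'(0) = 0$ and $L''(u) = p(1-p)\,\xi \leq 1/4$ for $\xi \in (0,1)$; Taylor expansion then yields $L(u) \leq u^2/8$, which is exactly Hoeffding's lemma.

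Plugging this in for each $Y_i$ and setting $\sigma^2 \triangleq \sum_i (b_i - a_i)^2$, I get
\[
\Pr[X \geq \E[X] + \delta] \;\leq\; \exp\!\left(-t\delta + \tfrac{t^2 \sigma^2}{8}\right).
\]
The right-hand side is minimized by $t = 4\delta/\sigma^2$, which is positive as required, giving
\[
\Pr[X \geq \E[X] + \delta] \;\leq\; \exp\!\left(-\tfrac{2\delta^2}{\sigma^2}\right) \;=\; \exp\!\left(-\tfrac{2\delta^2}{\sum_i (b_i - a_i)^2}\right).
\]
The lower-tail bound follows by symmetry: apply the same argument to $X' \triangleq -X = \sum_i (-X_i)$, noting that $-X_i \in [-b_i, -a_i]$ has the same interval width $b_i - a_i$, so $\sigma^2$ is unchanged, and $\Pr[X \leq \E[X] - \delta] = \Pr[X' \geq \E[X'] + \delta]$.

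The only step requiring real thought is Hoeffding's lemma itself; the rest is bookkeeping around the Chernoff method and a clean optimization in $t$. I would expect the $L''(u) \leq 1/4$ calculation to be the sole place where one must resist the temptation to be cavalier, since the sharp constant $2$ in the exponent depends entirely on that $1/8$ bound.
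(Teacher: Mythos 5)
The paper does not prove \Cref{hoeffding} at all --- it is quoted as a standard tail bound --- so there is no in-paper argument to compare against. Your proposal is the standard textbook proof: Chernoff's exponential-moment method, factorization by independence, Hoeffding's lemma for each centered summand, optimization at $t=4\delta/\sigma^2$, and the symmetry reduction $X\mapsto -X$ for the lower tail; this is correct in structure and yields exactly the stated bounds. One detail in your sketch of Hoeffding's lemma is misstated, though the conclusion survives: with $p=-a'/(b'-a')$ and $u=t(b'-a')$ one has $L(u)=-pu+\log\left(1-p+pe^u\right)$ and $L''(u)=\frac{p(1-p)e^u}{\left(1-p+pe^u\right)^2}$, which is \emph{not} of the form $p(1-p)\xi$ with $\xi\in(0,1)$ --- for small $p$ and moderate $u>0$ the factor $e^u/\left(1-p+pe^u\right)^2$ exceeds $1$ (e.g.\ $p=0.1$, $u=1$). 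The correct observation is that $L''(u)=s(1-s)$ with $s=pe^u/\left(1-p+pe^u\right)\in(0,1)$, hence $L''(u)\le 1/4$ (equivalently, by AM--GM, $\left(1-p+pe^u\right)^2\ge 4p(1-p)e^u$). With that one-line fix, the Taylor bound $L(u)\le u^2/8$ and the sharp constant $2$ in the exponent go through exactly as you describe, so this is a repairable slip in the justification of the key step rather than a gap in the approach.
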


\subsection{Conditional Union Bound}
Finally, we will also need the following simple extension of union bound, proven below for completeness.

\begin{lem}\label{ub}
	Let $A_1,A_2,\dots,A_n$ be random indicator variables such that $\Pr[\overline{A_i} \mid \bigwedge_{j<i} A_{j}] \leq p$. Then $$\Pr\left[\bigvee_i \overline{A_i}\right] \leq n\cdot p.$$
\end{lem}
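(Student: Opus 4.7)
}
The plan is to decompose the event $\bigvee_i \overline{A_i}$ into disjoint events indexed by the \emph{first} index $i$ at which a failure occurs, and then apply the hypothesis on $\Pr[\overline{A_i} \mid \bigwedge_{j<i} A_j]$ one term at a time.

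More precisely, for each $i\in [n]$ let $B_i \triangleq \overline{A_i}\wedge \bigwedge_{j<i} A_j$ be the event that $A_i$ is the first of the indicators to fail (with the convention that the empty conjunction for $i=1$ is the whole sample space, so $B_1=\overline{A_1}$). By construction the events $B_1,B_2,\dots,B_n$ are pairwise disjoint, and their union is exactly $\bigvee_i \overline{A_i}$. Consequently,
\[
\Pr\left[\bigvee_i \overline{A_i}\right] \;=\; \sum_{i=1}^{n} \Pr[B_i] \;=\; \sum_{i=1}^{n} \Pr\left[\overline{A_i}\wedge \bigwedge_{j<i} A_j\right].
\]

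For each $i$, I would then split into cases based on whether $\Pr[\bigwedge_{j<i} A_j] > 0$. If it is positive, the definition of conditional probability together with the hypothesis yields
\[
\Pr\left[\overline{A_i}\wedge \bigwedge_{j<i} A_j\right] = \Pr\left[\overline{A_i}\,\Bigg\vert\,\bigwedge_{j<i} A_j\right]\cdot \Pr\left[\bigwedge_{j<i} A_j\right] \leq p\cdot 1 = p.
\]
If $\Pr[\bigwedge_{j<i} A_j]=0$, the joint probability on the left vanishes and the bound $p$ holds trivially. Summing the bound over $i\in[n]$ gives $\Pr[\bigvee_i \overline{A_i}]\leq n\cdot p$, as desired.

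I do not anticipate any real obstacle here: the lemma is a routine ``first-bad-index'' decomposition, and the only mild subtlety is making sure the conditional probability statement is well-defined when some prefix event has probability zero, which is handled by the trivial case above.
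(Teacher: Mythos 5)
Your proposal is correct and follows essentially the same route as the paper: both decompose $\bigvee_i \overline{A_i}$ via the first-failing-index events $B_i = \overline{A_i}\wedge\bigwedge_{j<i}A_j$ and then bound each $\Pr[B_i]$ by $p$ using the conditional-probability hypothesis. Your explicit handling of the zero-probability prefix case is a minor (and welcome) extra care the paper glosses over, but it does not change the argument.
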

\begin{proof}
	Let $B_i = \overline{A_i} \wedge \bigwedge_{j<i} A_j$ be the event that $i$ is the first index $j$ for which $\overline{A_j}$ holds. Then 
	\begin{align*}
		\Pr\left[\bigvee_i \overline{A_i}\right] & = \Pr\left[\bigvee_i \overline{B_i}\right] 
		 \leq \sum_i \Pr\left[\overline{B_i}\right],
	\end{align*}
	by standard union bound. But then, we find that the above is at most
	\begin{align*}
		\sum_i \Pr\left[\overline{A_i} \wedge \bigwedge_{j<i} A_j\right]
		& = \sum_i \Pr\left[\overline{A_i} \,\middle\vert\, \bigwedge_{j<i} A_j\right] \cdot \Pr\left[\bigwedge_{j<i} A_j\right] 
		 \leq \sum_i \Pr\left[\overline{A_i} \mid A_{i-1}\right] 
		 \leq n\cdot p.\qedhere
	\end{align*}
\end{proof}
	\bibliographystyle{acmsmall}
	\bibliography{refs}
\end{document}